%% file: main.tex
\documentclass[11pt,notitlepage,tightenlines,nofootinbib,superscriptaddress]{revtex4-1}

\usepackage[utf8]{inputenc}
\input{macros}

\usepackage{tabularx}
\usepackage{makecell}

\begin{document}

\title{Tumula information\\ and doubly minimized Petz R\'enyi lautum information}

\author{Lukas Schmitt}
\affiliation{Institute for Theoretical Physics, ETH Zurich}\affiliation{IBM Quantum, IBM Research Europe – Zurich}

\author{Filippo Girardi}
\affiliation{Scuola Normale Superiore (SNS)}

\author{Laura Burri}
\affiliation{Institute for Theoretical Physics, ETH Zurich}

\begin{abstract}
We study a doubly minimized variant of the lautum information -- a reversed analogue of the mutual information -- defined as the minimum relative entropy between any product state and a fixed bipartite quantum state; we refer to this measure as the tumula information. In addition, we introduce the corresponding Petz R\'enyi version, which we call the doubly minimized Petz R\'enyi lautum information (PRLI). 
We derive several general properties of these correlation measures and provide an operational interpretation in the context of hypothesis testing. 
Specifically, we show that the reverse direct exponent of certain binary quantum state discrimination problems is quantified by the doubly minimized PRLI of order $\alpha\in (0,1/2)$, and that the Sanov exponent is determined by the tumula information. 
Furthermore, we investigate the extension of the tumula information to channels and compare its properties with previous results on the channel umlaut information.
\end{abstract}

\maketitle

\section{Introduction}\label{sec:intro}

\subsection{Motivation and background}
The mutual information is a commonly used correlation measure in quantum information theory. It can be written in several equivalent forms, for instance,
\begin{align}\label{eq:iab}
    I(A:B)_\rho
    &=D(\rho_{AB}\| \rho_A\otimes \rho_B)
    =\inf_{\tau_B} D(\rho_{AB}\| \rho_A\otimes \tau_B)
    =\inf_{\sigma_A,\tau_B} D(\rho_{AB}\| \sigma_A\otimes \tau_B),
\end{align}
where $\rho_{AB}$ is a bipartite quantum state, $\rho_A$ and $\rho_B$ are its marginals on $A$ and $B$, $D$ denotes the relative entropy, and the minimizations are over quantum states $\sigma_A$ and $\tau_B$. 
These formulations are equivalent because the infima are attained at $\sigma_A=\rho_A$ and $\tau_B=\rho_B$~\cite{hayashi_2016-1,Gupta_2014}. 
Corresponding R\'enyi generalizations of the expressions in~\eqref{eq:iab} have also been studied. 
In particular, based on the Petz divergence $D_\alpha$, the \emph{non-minimized}, \emph{singly minimized}, and \emph{doubly minimized Petz R\'enyi mutual information (PRMI)} have been introduced as
\begin{align}
    I_\alpha^{\uparrow\uparrow}(A:B)_\rho 
    &\coloneqq D_\alpha (\rho_{AB}\| \rho_A\otimes \rho_B),
    \\
    I_\alpha^{\uparrow\downarrow}(A:B)_\rho 
    &\coloneqq \inf_{\tau_B}D_\alpha (\rho_{AB}\| \rho_A\otimes \tau_B),
    \\
    I_\alpha^{\downarrow\downarrow}(A:B)_\rho 
    &\coloneqq \inf_{\sigma_A,\tau_B}D_\alpha (\rho_{AB}\| \sigma_A\otimes \tau_B),
\end{align}
respectively, with applications in hypothesis testing~\cite{hayashi_2016-1,burri2025doubly}. 
Specifically, previous work has established that the non-minimized, singly minimized, and doubly minimized PRMI each admit an operational interpretation in terms of the \emph{direct exponent} of certain binary quantum state discrimination problems, as summarized in Table~\ref{tab:direct_prmi}. 
The direct exponent characterizes the rate at which the minimal type-I error decays when the type-II error is required to vanish exponentially fast.

\begin{table}
\begin{tabular}{lp{0.28\textwidth}p{0.52\textwidth}} \toprule
\textbf{Reference}&\textbf{Null hypothesis,\newline alternative hypothesis} & \textbf{Direct exponent}\\ \toprule
\cite{Hayashi2007,Nagaoka2006,Audenaert2008}&
$H_0^n=\{\rho_{AB}^{\otimes n}\}$ \newline
$H_1^n=\{\rho_{A}^{\otimes n}\otimes \rho_{B}^{\otimes n}\}$&
For any $R\in (0,\infty)$ holds \newline
$\lim\limits_{n\rightarrow\infty}-\frac{1}{n}\log \hat{\alpha}_n(e^{-nR})
= \sup\limits_{s\in (0,1)}\frac{1-s}{s}(I_s^{\uparrow\uparrow}(A:B)_\rho -R)$.
\\ \hline
\cite{hayashi_2016-1}&
$H_0^n=\{\rho_{AB}^{\otimes n}\}$ \newline
$H_1^n=\{\rho_{A}^{\otimes n}\otimes \tau_{B}^{\otimes n}\}_{\tau_{B} }$&
For any $R\in (0,\infty)$ holds \newline
$\lim\limits_{n\rightarrow\infty}-\frac{1}{n}\log \hat{\alpha}_n(e^{-nR})
= \sup\limits_{s\in (0,1)}\frac{1-s}{s}(I_s^{\uparrow\downarrow}(A:B)_\rho -R)$.
\\ \hline
\cite{burri2025doubly} &
$H_0^n=\{\rho_{AB}^{\otimes n}\}$ \newline
$H_1^n=\{\sigma_{A}^{\otimes n}\otimes \tau_{B}^{\otimes n}\}_{\sigma_{A},\tau_{B}}$ &
For any $R\in (R_{1/2},\infty)$ holds \newline
$\lim\limits_{n\rightarrow\infty}-\frac{1}{n}\log \hat{\alpha}_n(e^{-nR})
= \sup\limits_{s\in (\frac{1}{2},1)}\frac{1-s}{s}(I_s^{\downarrow\downarrow}(A:B)_\rho -R)$.
\\ \bottomrule
\end{tabular}
\caption{Table and caption adapted from \cite{burri2025doubly}. Overview of direct exponents of certain binary quantum state discrimination problems.
Let $\rho_{AB}$ be a bipartite quantum state.
Each row pertains to a sequence of binary quantum state discrimination problems with null hypothesis $H_0^n$ and alternative hypothesis $H_1^n$ for $n\in \mathbb{N}_{>0}$.
In all three rows, the $n$th null hypothesis is given by $\rho_{AB}^{\otimes n}$. 
In the second and third row, the alternative hypotheses range over the set of quantum states $\sigma_A,\tau_B$. 
$\hat{\alpha}_n(e^{-nR})$ denotes the minimal type-I error when the type-II error is at most $e^{-nR}$. 
The papers cited in the first column derive single-letter formulas for the corresponding direct exponents, which are stated in the last column. 
The lower bound on $R$ in the third row is defined as 
$R_{1/2}\coloneqq I_{1/2}^{\downarrow\downarrow}(A:B)_\rho-\frac{1}{4}\frac{\partial}{\partial s^+} I_{s}^{\downarrow\downarrow}(A:B)_\rho\big|_{s=1/2}$.}
\label{tab:direct_prmi}
\end{table}

According to the first expression in~\eqref{eq:iab}, the mutual information is the relative entropy of the joint quantum state $\rho_{AB}$ relative to the tensor product of its marginals. 
Swapping the arguments of the relative entropy in this expression yields the \emph{lautum information}~\cite{Lautum_08,Filippo25b}, defined as
\begin{align}\label{eq:lab}
    L(A:B)_\rho\coloneqq D(\rho_A\otimes \rho_B\|\rho_{AB}).
\end{align}

On a related front, the \emph{umlaut information} was recently studied in~\cite{Filippo25,Filippo25b}, corresponding to a reversal of the arguments of the relative entropy in the second expression in~\eqref{eq:iab}. 
That is, the umlaut information is defined as
\begin{align}\label{eq:uab}
    U(A:B)_\rho \coloneqq \inf_{\tau_B}D(\rho_A\otimes \tau_B\| \rho_{AB}),
\end{align}
and has been shown to admit operational interpretations in quantum hypothesis testing and zero-rate channel coding~\cite{Filippo25b}. 

Against this background, it is natural to ask whether reversing the arguments of the relative entropy in the third expression in~\eqref{eq:iab} also yields an operationally meaningful correlation measure. 
Motivated by this question, we introduce the \emph{tumula information}, defined as 
\begin{align}\label{eq:tab}
    T(A:B)_\rho \coloneqq \inf_{\sigma_A,\tau_B}D(\sigma_A\otimes \tau_B\| \rho_{AB}).
\end{align}
This quantity combines the idea of double minimization, as in the doubly minimized PRMI, with the reversed relative entropy viewpoint of the lautum information.

Similarly, we define the \emph{non-minimized}, \emph{singly minimized}, and \emph{doubly minimized Petz R\'enyi lautum information (PRLI)} as
\begin{align}
    L_\alpha^{\uparrow\uparrow}(A:B)_\rho 
    &\coloneqq D_\alpha(\rho_A\otimes \rho_B\| \rho_{AB}),
    \\
    L_\alpha^{\uparrow\downarrow}(A:B)_\rho 
    &\coloneqq \inf_{\tau_B}D_\alpha(\rho_A\otimes \tau_B\| \rho_{AB}),
    \\
    L_\alpha^{\downarrow\downarrow}(A:B)_\rho 
    &\coloneqq \inf_{\sigma_A,\tau_B}D_\alpha(\sigma_A\otimes \tau_B\| \rho_{AB}).
\end{align}
The singly minimized PRLI was previously introduced in~\cite{Filippo25b} (where it was called Petz--R\'enyi umlaut information), while the doubly minimized PRLI is newly introduced in this work. 
With regard to these three types of PRLI, we again ask whether they possess operational significance. 
Since the corresponding types of PRMI admit an operational interpretation in the context of hypothesis testing (see Table~\ref{tab:direct_prmi}), it is natural to seek an analogous operational interpretation for the three types of PRLI. 
Indeed, for the non-minimized PRLI such an interpretation arises directly by considering the same hypothesis testing problem as for the non-minimized PRMI (see the first row of Table~\ref{tab:direct_prmi}), but with the roles of type-I and type-II errors reversed. 
Accordingly, we define the \emph{reverse direct exponent} as the rate at which the minimal type-II error decays when the type-I error is required to vanish exponentially fast. 
Since the hypothesis testing problem in the non-minimized setting corresponds to a standard i.i.d. hypothesis testing scenario, well-established results from the fundamental literature apply, implying that the reverse direct exponent is characterized by the non-minimized PRLI, as described in the first row of Table~\ref{tab:direct_prli}. 
This observation naturally leads to the question of whether analogous results hold for the singly and doubly minimized PRLI when we consider the same settings as for the singly and doubly minimized PRMI (see the second and third row of Table~\ref{tab:direct_prmi}), but evaluate the reverse direct exponent instead of the direct exponent. 
Furthermore, for the hypothesis testing problem associated with the singly minimized PRMI, previous work~\cite{Filippo25b} has shown that the Sanov exponent equals the umlaut information. 
Similarly, the question arises as to whether the Sanov exponent for the doubly minimized setting is given by the tumula information.\\

\begin{table}
\begin{tabular}{l p{0.28\textwidth} p{0.52\textwidth}} \toprule
\textbf{Reference}&\textbf{Null hypothesis,\newline alternative hypothesis} & \textbf{Reverse direct exponent}\\ \toprule
\cite{Hayashi2007,Nagaoka2006,Audenaert2008}&
$H_0^n=\{\rho_{AB}^{\otimes n}\}$ \newline 
$H_1^n=\{\rho_{A}^{\otimes n}\otimes \rho_{B}^{\otimes n}\}$ &
For any $R\in (0,\infty)$ holds \newline
$\lim\limits_{n\rightarrow\infty}-\frac{1}{n}\log \hat{\beta}_n(e^{-nR})
= \sup\limits_{s\in (0,1)}\frac{1-s}{s}(L_s^{\uparrow\uparrow}(A:B)_\rho -R)$.
\\ \hline
Theorem~\ref{thm:singly} &
$H_0^n=\{\rho_{AB}^{\otimes n}\}$ \newline
$H_1^n=\{\rho_{A}^{\otimes n}\otimes \tau_{B}^{\otimes n}\}_{\tau_{B} }$&
For any $R\in (0,\infty)$ holds \newline
$\lim\limits_{n\rightarrow\infty}-\frac{1}{n}\log \hat{\beta}_n(e^{-nR})
= \sup\limits_{s\in (0,1)}\frac{1-s}{s}(L_s^{\uparrow\downarrow}(A:B)_\rho -R)$.
\\ \hline
Theorem~\ref{thm:doubly} &
$H_0^n=\{\rho_{AB}^{\otimes n}\}$ \newline
$H_1^n=\{\sigma_{A}^{\otimes n}\otimes \tau_{B}^{\otimes n}\}_{\sigma_{A},\tau_{B} }$ &
For any $R\in (0,R^L_{1/2})\cup (T(A:B)_\rho,\infty)$ holds \newline
$\lim\limits_{n\rightarrow\infty}-\frac{1}{n}\log \hat{\beta}_n(e^{-nR})
=\sup\limits_{s\in (0,\frac{1}{2})}\frac{1-s}{s}(L_s^{\downarrow\downarrow}(A:B)_\rho -R)$.
\\ \bottomrule
\end{tabular}
\caption{Overview of reverse direct exponents of the same binary quantum state discrimination problems as in Table~\ref{tab:direct_prmi}. 
$\hat{\beta}_n(e^{-nR})$ denotes the minimal type-II error when the type-I error is at most $e^{-nR}$. 
The papers cited in the first column derive single-letter formulas for the corresponding reverse direct exponents, which are stated in the last column. 
The upper bound on $R$ in the third row is defined as 
$R_{1/2}^L\coloneqq L_{1/2}^{\downarrow\downarrow}(A:B)_\rho-\frac{1}{4}\frac{\partial}{\partial s^-} L_{s}^{\downarrow\downarrow}(A:B)_\rho\big|_{s=1/2}$.}
\label{tab:direct_prli}
\end{table}

The definition of the tumula information for states can be extended to an information measure for channels. 
For a quantum channel $\pazocal{N}$ from $A$ to $B$, the \emph{quantum channel tumula information of $\pazocal{N}$} is defined by optimizing the corresponding state quantity over all pure input states, i.e., 
\bb
T(\pazocal{N}) \coloneqq \sup_{\Psi_{A'A}} T(A':B)_{(\mathrm{Id}\otimes\pazocal{N})(\Psi)} 
\ee
where $A'$ is an auxiliary Hilbert space isomorphic to $A$, and the supremum is over all pure states $\Psi_{A'A}$. 
Motivated by the operational relevance of the channel umlaut information in the setting of non-signalling--assisted communication at low rates \cite{Filippo25,Filippo25b} (see Figure~\ref{fig:recap}), we study the properties of the channel tumula information in order to investigate a possible operational interpretation in assisted communication. 
In particular, since -- by definition -- the tumula information of a channel is smaller than its umlaut information, it is natural to wonder whether it admits an operational interpretation in a communication scenario assisted by a weaker resource than non-signalling assistance, such as entanglement assistance.

\begin{figure}[t]
    \centering
\begin{tikzpicture}
    \draw[thick, ->] (-1,0) -- (12,0);

    \draw[thick] (3,0.2) -- (3,-0.2); 
    \draw[thick] (6.5,0.2) -- (6.5,-0.2); 
    \draw[thick] (9.5,0.2) -- (9.5,-0.2); 
    \draw[thick, ForestGreen] (6.5,1.3) -- (6.5,0.9);
    \draw[thick, ForestGreen] (-1,1.1) -- (6.5,1.1);

    \node[above] at (6.5,0.2) {$U(\pazocal{N})$};
    \node[above] at (8,0.2) {$\leq$};
    \node[above] at (9.5,0.2) {$U^\infty(\pazocal{N})$};
    \node[below] at (9.5,-0.2) {$E^{\rm NS,a}(0^+,\pazocal{N})$};
    \node[below] at (3,-0.2) {$E^{\emptyset}(0^+,\pazocal{N})$};
    \node[above, ForestGreen] at (2.75,1.1) {$T(\pazocal{N})$};
    
    \node[above] at (13,0.8) {information};
    \node[above] at (13,0.4) {measures};
    \node[above] at (13,-0.6) {error};
    \node[above] at (13,-1.1) {exponents};

\end{tikzpicture}
    \caption{A pictorial representation of the channel tumula information $T(\pazocal{N})$, 
    compared with the channel umlaut information $U(\pazocal{N})$, its regularization $U^\infty(\pazocal{N})$, and two relevant error exponents: the unassisted, zero-rate reliability function $E^{\emptyset}(0^+,\pazocal{N})$, and the activated, non-signalling assisted reliability function $E^{\rm NS,a}(0^+,\pazocal{N})$. 
    For the definitions of these exponents, see~\cite{Filippo25b}.}
    \label{fig:recap}
\end{figure}
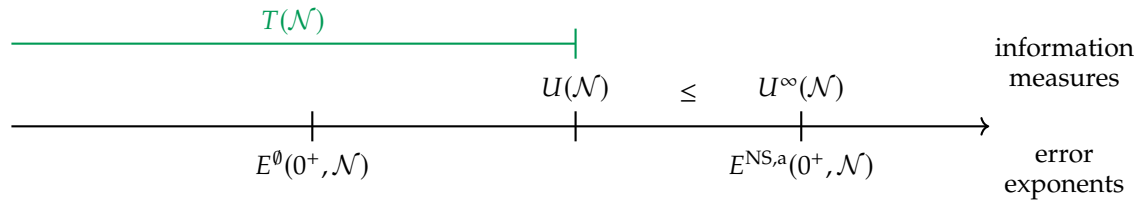

\subsection{Overview of results}

In this paper, we establish several general properties of the doubly minimized PRLI (see Theorem~\ref{thm:properties_petz}) and the tumula information (see Theorem~\ref{thm:properties}). 
For the doubly minimized PRLI of order $\alpha$, we note a direct relation to the doubly minimized PRMI of order $1-\alpha$ (see~\eqref{eq:prli_prmi2}), which allows several properties of the latter to be transferred to the former. 
In particular, this relation implies additivity of the doubly minimized PRLI of order $\alpha\in [0,1/2]$ 
and uniqueness of the minimizer for $\alpha\in (0,1/2)$. 
Moreover, it implies that the doubly minimized PRLI of order $1/2$ coincides with the doubly minimized PRMI of order $1/2$, i.e., $L_{1/2}^{\downarrow\downarrow}(A:B)_\rho=I_{1/2}^{\downarrow\downarrow}(A:B)_\rho$, where the latter is known to be equal to the min-reflected entropy~\cite{burri2025minreflectedentropydoubly}. 
For the tumula information, we prove additivity and derive an expression for $T(A:B)_\rho$ in terms of universal permutation invariant states, eliminating the need for explicit optimization and replacing it with an asymptotic limit (see~\eqref{eq:tab_omega}). 

For the same hypothesis testing problems as for the singly and doubly minimized PRMI, we study the \emph{reverse direct exponent} instead of the direct exponent. 
We show that the reverse direct exponent is determined by the singly and doubly minimized PRLI, respectively. 
These findings are summarized in the second and third row of Table~\ref{tab:direct_prli} (see also Theorem~\ref{thm:singly}, \ref{thm:doubly}). 
Qualitatively, comparing Tables~\ref{tab:direct_prmi} and~\ref{tab:direct_prli} reveals that reversing the roles of the type-I and the type-II errors ($\hat{\alpha} \leftrightarrow \hat{\beta}$) mirrors the reversal of the arguments of the Petz divergence (PRMI $\leftrightarrow$ PRLI). 
Moreover, for the hypothesis testing problem associated with the doubly minimized PRLI, we show that the \emph{Sanov exponent} equals the tumula information (see Corollary~\ref{cor:sanov}). 
These results provide operational interpretations for the singly minimized PRLI of order $\alpha\in (0,1)$, the doubly minimized PRLI of order $\alpha\in (0,1/2)$, and the tumula information. 
As an additional result, we note that taking the zero rate limits of the relations in Tables~\ref{tab:direct_prmi} and~\ref{tab:direct_prli} yields further operational interpretations of the lautum, umlaut, tumula, and the mutual information, respectively (see Corollary~\ref{cor:lowrate}).

We also explore the application of the tumula information in the context of channel coding by introducing the \emph{channel tumula information}, both in the quantum and classical setting. 
For the class of classical-quantum (CQ) channels, we derive several equivalent variational expressions and compare them with the corresponding expressions for the channel umlaut information. 
Furthermore, we investigate whether the (regularized) channel tumula information could characterize a zero-rate error exponent in an assisted communication setting, analogous to the channel umlaut information, and answer this question in the negative for the classical setting.

\emph{Outline.} 
We begin in Section~\ref{sec:preliminaries} with mathematical preliminaries, introducing our general notation 
and the relevant definitions associated with divergences (\ref{ssec:divergences}) and variants of the mutual information (\ref{ssec:variants}). 
Section~\ref{sec:properties} presents our results on elementary properties of the doubly minimized PRLI and the tumula information. 
In Section~\ref{sec:discrimination}, we discuss applications in hypothesis testing. 
Section~\ref{sec:communication} is devoted to the study of the channel tumula information.

\section{Preliminaries}\label{sec:preliminaries}

Throughout this work, we restrict attention to finite alphabets in the classical setting and to finite-dimensional Hilbert spaces in the quantum setting, for simplicity. 
A summary of frequently used notation is provided in Table~\ref{tab:notation}.

\begin{table}[h]
    \caption{Notational conventions}
    \label{tab:notation}
\begin{tabularx}{\textwidth}{p{0.13\textwidth}X}\toprule
Symbol&Description\\ \midrule
$\log (x)$&Natural logarithm of $x$\\
$[n]$& for $n\in \mathbb{N}$: the set $\{0,1,\dots, n-1\}$\\
$A$&Hilbert space on system $A$\\
$d_A$&dimension of Hilbert space $A$\\
$\mathcal{L}(A)$&Set of linear maps from $A$ to $A$\\
$\mathcal{S}(A)$
&Set of quantum states on $A$ (i.e., density matrices on $A$)\\
$\mathcal{S}_{\ll X}(A)$
&Set of quantum states $\rho$ on $A$ such that $\rho\ll X$\\
$\cptp (A,B)$
&Set of completely positive, trace-preserving linear maps from $\mathcal{L}(A)$ to $\mathcal{L}(B)$ (i.e., quantum channels from $A$ to $B$)\\
$X\ll Y$
&for $X,Y\in \mathcal{L}(A)$: kernel of $Y$ is a subset of the kernel of $X$\\
$X\perp Y$
&for $X,Y\in \mathcal{L}(A)$: $XY=YX=0$\\
$X^\dagger$
&for $X\in \mathcal{L}(A)$: Hermitian adjoint of $X$\\
$X\geq 0$
& for $X\in \mathcal{L}(A)$: $X$ is positive semidefinite\\
$X\geq Y$
& for self-adjoint $X,Y\in \mathcal{L}(A)$: $X-Y\geq 0$\\
$X^p$
&for positive semidefinite $X\in \mathcal{L}(A)$: $X$ to the power of $p$; power taken on the support of $X$\\
$|X|$
&Operator absolute value of $X$\\
$\|X \|_p$
&Schatten $p$-(quasi-)norm of $X$ for $p\in (0,\infty]$ \\
$\{X\geq Y\}$
&for self-adjoint $X,Y\in \mathcal{L}(A)$: orthogonal projection onto the subspace corresponding to the non-negative eigenvalues of $X-Y$\\
$\{X< Y\}$
&for self-adjoint $X,Y\in \mathcal{L}(A)$: orthogonal projection onto the subspace corresponding to the strictly negative eigenvalues of $X-Y$\\
$\mathrm{Sym}^n(A)$
&symmetric subspace of $A^{\otimes n}$\\
$\mathcal{L}_{\sym}(A^n)$
&set of linear operators on $A^n\equiv A^{\otimes n}$ that are permutation invariant\\
$\mathcal{S}_{\sym}(A^n)$
&set of quantum states on $A^n\equiv A^{\otimes n}$ that are permutation invariant\\
$\omega_{A^n}^n$
&universal permutation invariant state~\cite{renner2006security,christandl2009postselection,hayashi_2016-1} defined as 
$\omega_{A^n}^n \coloneqq g_{n,d_A}^{-1}\Tr_{{A'}^n}[(P^{n}_{\sym})_{A^n{A'}^n}]$ where $A'$ is a Hilbert space isomorphic to $A$, 
$g_{n,d_A}\coloneqq \dim (\mathrm{Sym}^n(A\otimes A'))$ 
and $P^{n}_{\sym}$ denotes the orthogonal projection onto $\mathrm{Sym}^n(A\otimes A')$.
\\
\bottomrule
\end{tabularx}
\end{table}

\newpage
\subsection{Divergences}\label{ssec:divergences}
The \emph{classical relative entropy (or Kullback-Leibler divergence)} of a probability distribution $P$ relative to a probability distribution $Q$ is
\begin{align}
    D(P\| Q)\coloneqq
    \sum_{x\in X}P_X(x)\log \frac{P_X(x)}{Q_X(x)}
\end{align}
if $\not\exists x\in X:(P_X(x)>0 \land Q_X(x)=0)$, and $D(P\| Q)\coloneqq\infty$ else.

The \emph{(quantum) relative entropy} of $\rho\in \mathcal{S}(A)$ relative to a positive semidefinite $\sigma\in \mathcal{L}(A)$ is
\begin{align}
D(\rho\| \sigma)\coloneqq
\Tr[\rho (\log\rho -\log\sigma)]
\end{align}
if $\rho\ll \sigma$ and $D(\rho\| \sigma)\coloneqq \infty$ else.

The \emph{von Neumann entropy} of $\rho\in \mathcal{S}(A)$ is defined as $H(A)_\rho\coloneqq -\Tr[\rho\log \rho]$.
For $\rho\in \mathcal{S}(AB)$, 
the \emph{conditional entropy} of $A$ given $B$ is $H(A|B)_\rho \coloneqq H(AB)_\rho -H(B)_\rho$ and
the \emph{mutual information} between $A$ and $B$ is $I(A:B)_\rho \coloneqq H(A)_\rho+H(B)_\rho-H(AB)_\rho$.
The \emph{R\'enyi entropy (of order $\alpha$)} of $\rho\in \mathcal{S}(A)$ is defined as
\bb
H_\alpha (A)_\rho
\coloneqq\frac{1}{1-\alpha}\log \Tr[\rho^\alpha]
\ee
for $\alpha\in (-\infty,1)\cup (1,\infty)$, 
and for $\alpha \in \{1,\infty\}$ as the corresponding limits. 

The mutual information of $\rho_{AB}\in \mathcal{S}(AB)$ can be expressed in terms of the relative entropy in the following ways~\cite{hayashi_2016-1,Gupta_2014}.
\begin{equation}\label{eq:i1}
I(A:B)_\rho
=D (\rho_{AB}\| \rho_A\otimes \rho_B)
=\inf_{\tau_B\in \mathcal{S}(B)}D (\rho_{AB}\| \rho_A\otimes \tau_B)
=\inf_{\substack{\sigma_A\in \mathcal{S}(A),\\ \tau_B\in \mathcal{S}(B)}}D (\rho_{AB}\| \sigma_A\otimes \tau_B)
\end{equation}

The \emph{Petz (quantum R\'enyi) divergence} of order $\alpha\in (0,1)\cup (1,\infty)$ of $\rho\in \mathcal{S}(A)$ relative to a positive semidefinite $\sigma\in \mathcal{L}(A)$ is given by~\cite{PetzRenyi}
\begin{equation}
D_\alpha (\rho\| \sigma)\coloneqq\frac{1}{\alpha -1} \log \Tr [\rho^\alpha \sigma^{1-\alpha}]
\end{equation}
if $(\alpha <1\land \rho\not\perp\sigma)\lor \rho\ll \sigma$ and 
$D_\alpha (\rho\| \sigma)\coloneqq \infty$ else. 
Moreover, $D_0$ and $D_1$ are defined as the limits of $D_\alpha$ for $\alpha\rightarrow\{0,1\}$.
For $\alpha\in (-\infty,\infty)$, we define
$Q_\alpha (\rho\| \sigma)\coloneqq
\Tr [\rho^\alpha \sigma^{1-\alpha}]$
for all positive semidefinite $\rho,\sigma\in \mathcal{L}(A)$. 
By the definition of the Petz divergence, we have for all $\alpha\in (0,1)$
\bb\label{eq:reverse}
    D_\alpha(\rho\|\sigma)=\frac{\alpha}{1-\alpha} D_{1-\alpha}(\sigma\|\rho).
\ee

The Petz divergence satisfies the following quantum Sibson identity. 
For any $\alpha\in [0,1)\cup (1,\infty),\rho_{AB}\in \mathcal{S}(AB),\sigma_A\in \mathcal{S}(A),\tau_B\in \mathcal{S}(B)$ such that $\rho_A\not\perp\sigma_A$ holds
\begin{align}\label{eq:sibson}
D_\alpha(\sigma_A\otimes \tau_B\| \rho_{AB})
=D_\alpha (\sigma_A\otimes \hat{\tau}_B\| \rho_{AB})
+D_\alpha (\tau_B\|\hat{\tau}_B)
\quad\text{where}\quad 
\hat{\tau}_B\coloneqq \frac{(\Tr_A[\rho_{AB}^{1-\alpha} \sigma_A^{\alpha}])^{\frac{1}{1-\alpha}}}{\Tr[(\Tr_A[\rho_{AB}^{1-\alpha} \sigma_A^{\alpha}])^{\frac{1}{1-\alpha}}]}.
\end{align}
This relation follows immediately from the definition of the Petz divergence. 
(The Sibson identity in~\eqref{eq:sibson} is the reversed version of the Sibson identity in~\cite{hayashi_2016-1,burri2025alternatingminimizationcomputingdoubly}.)

\subsection{Mutual information and its variants}\label{ssec:variants}

In this section, we restate several definitions mentioned in the introduction. 

Given a bipartite state $\rho_{AB}\in\mathcal{S}(AB)$, the \emph{mutual information} is defined by any of the following equivalent expressions
\begin{align}
    I(A:B)_\rho\coloneqq D(\rho_{AB}\|\rho_A\otimes \rho_B)
    =\inf_{\tau_B\in \mathcal{S}(B)}D(\rho_{AB}\|\rho_A\otimes\tau_B)
    =\inf_{\substack{\sigma_A\in \mathcal{S}(A),\\ \tau_B\in \mathcal{S}(B)}}D(\rho_{AB}\|\sigma_A\otimes\tau_B).
\end{align}
The \emph{(quantum) lautum, umlaut,} and \emph{tumula information} are defined as
\begin{align}
L(A:B)_{\rho}&\coloneqq D(\rho_A\otimes\rho_B\|\rho_{AB}),
\\
U(A:B)_{\rho}&\coloneqq \inf_{\tau_B\in \mathcal{S}(B)}D(\rho_A\otimes\tau_B\|\rho_{AB}),
\\
T(A:B)_{\rho}&\coloneqq \inf_{\substack{\sigma_A\in \mathcal{S}(A),\\ \tau_B\in \mathcal{S}(B)}}D(\sigma_A\otimes\tau_B\|\rho_{AB}).
\label{eq:def_umlaut}
\end{align}

We define the \emph{non-minimized, singly minimized,} and \emph{doubly minimized Petz R\'enyi mutual information (PRMI)} to be
\begin{align}
I_\alpha^{\uparrow\uparrow}(A:B)_\rho 
&\coloneqq D_\alpha (\rho_{AB}\|\rho_A\otimes \rho_B),
\\
I_\alpha^{\uparrow\downarrow}(A:B)_\rho 
&\coloneqq \inf_{\tau_B\in \mathcal{S}(B)} D_\alpha (\rho_{AB}\|\rho_A\otimes \tau_B),
\\
I_\alpha^{\downarrow\downarrow}(A:B)_\rho 
&\coloneqq \inf_{\substack{\sigma_A\in \mathcal{S}(A), \\ \tau_B\in \mathcal{S}(B)}} D_\alpha (\rho_{AB}\|\sigma_A\otimes \tau_B),
\end{align}
respectively. For $\alpha=1$ and $\alpha\rightarrow 1$, all of them coincide with the mutual information~\cite{Gupta_2014,hayashi_2016-1,burri2025doubly}, i.e.,
\begin{align}
    I(A:B)_\rho 
    &=I_1^{\uparrow\uparrow}(A:B)_\rho
    =\lim_{\alpha\rightarrow 1}I_\alpha^{\uparrow\uparrow}(A:B)_\rho
    \\
    &=I_1^{\uparrow\downarrow}(A:B)_\rho
    =\lim_{\alpha\rightarrow 1} I_\alpha^{\uparrow\downarrow}(A:B)_\rho
    \\
    &=I_1^{\downarrow\downarrow}(A:B)_\rho
    =\lim_{\alpha\rightarrow 1} I_\alpha^{\downarrow\downarrow}(A:B)_\rho.
\end{align}

Furthermore, we define the \emph{non-minimized, singly minimized,} and \emph{doubly minimized Petz R\'enyi lautum information (PRLI)} to be
\begin{align}
L_\alpha^{\uparrow\uparrow}(A:B)_\rho 
&\coloneqq D_\alpha (\rho_A\otimes \rho_B\| \rho_{AB}),
\\
L_\alpha^{\uparrow\downarrow}(A:B)_\rho 
&\coloneqq \inf_{\tau_B\in \mathcal{S}(B)} D_\alpha (\rho_A\otimes \tau_B\| \rho_{AB}),
\\
L_\alpha^{\downarrow\downarrow}(A:B)_\rho 
&\coloneqq \inf_{\substack{\sigma_A\in \mathcal{S}(A), \\ \tau_B\in \mathcal{S}(B)}} D_\alpha (\sigma_A\otimes \tau_B\| \rho_{AB}),
\end{align}
respectively. 
Remarkably, these three types of PRLI correspond to the analogous types of PRMI. By~\eqref{eq:reverse}, we have for all $\alpha\in (0,1)$
\begin{align}
L_\alpha^{\uparrow\uparrow}(A:B)_\rho 
&=\frac{\alpha}{1-\alpha}I_{1-\alpha}^{\uparrow\uparrow}(A:B)_\rho,
\label{eq:prli_prmi0}\\
L_\alpha^{\uparrow\downarrow}(A:B)_\rho 
&=\frac{\alpha}{1-\alpha}I_{1-\alpha}^{\uparrow\downarrow}(A:B)_\rho,
\label{eq:prli_prmi1}\\
L_\alpha^{\downarrow\downarrow}(A:B)_\rho 
&=\frac{\alpha}{1-\alpha}I_{1-\alpha}^{\downarrow\downarrow}(A:B)_\rho.
\label{eq:prli_prmi2}
\end{align}

For $\alpha=1$ and $\alpha\rightarrow 1$, the three types of PRLI correspond to the lautum, umlaut, and tumula information, respectively, i.e.,
\begin{equation}\label{eq:prli-limits}
        \begin{aligned}
        L(A:B)_\rho &= L_1^{\uparrow\uparrow}(A:B)_\rho
        =\lim_{\alpha\to 1}L_\alpha^{\uparrow\uparrow}(A:B)_\rho,\\ 
        U(A:B)_\rho &= L_1^{\uparrow\downarrow}(A:B)_\rho
        =\lim_{\alpha\to 1}L_\alpha^{\uparrow\downarrow}(A:B)_\rho, \\ 
        T(A:B)_\rho &= L_1^{\downarrow\downarrow}(A:B)_\rho
        =\lim_{\alpha\to 1}L_\alpha^{\downarrow\downarrow}(A:B)_\rho.
        \end{aligned}
\end{equation}
The proof technique is standard in the literature. 
For completeness, we provide a concise proof of these equalities in Appendix~\ref{proof:limits}.

\begin{rem}[(Classical setting)]
    All notions of mutual information introduced in this section can be defined analogously in the classical setting. 
    For instance, given a probability distribution $P_{XY}$ of two random variables $X,Y$ over alphabets $\mathcal{X},\mathcal{Y}$, the \emph{classical lautum}, \emph{umlaut}, and \emph{tumula information} are defined as
    \begin{align}
        L(X:Y)_P&\coloneqq D(P_XP_Y\| P_{XY}),\\
        U(X:Y)_P&\coloneqq \inf_{R_Y}D(P_XR_Y\| P_{XY}),\\
        T(X:Y)_P&\coloneqq \inf_{Q_X,R_Y}D(Q_XR_Y\| P_{XY}),
    \end{align}
    where the minimizations are over probability distributions $Q_X$ and $R_Y$ of $X$ and $Y$, respectively.
\end{rem}

\section{Properties}\label{sec:properties}

The following theorem presents several properties of the doubly minimized PRLI.

\begin{thm}[(Properties of doubly minimized PRLI)]\label{thm:properties_petz}
Let $\rho_{AB}\in \mathcal{S}(AB)$.
\begin{enumerate}[label=(\alph*)]
\item \emph{Monotonicity under local operations:} 
$L_\alpha^{\downarrow\downarrow}(A':B')_{\pazocal{N} \otimes \pazocal{M} (\rho_{AB})} \leq  L_\alpha^{\downarrow\downarrow}(A:B)_\rho$ for all $\alpha\in [0,2]$, $\pazocal{N}\in \cptp(A,A'),\pazocal{M}\in \cptp(B,B')$.
\item \emph{Non-negativity:} 
$L_\alpha^{\downarrow\downarrow}(A:B)_\rho\in [0,\infty]$ for all $\alpha\in [0,\infty)$, and it is finite for all $\alpha\in [0,1)$. 
\item \emph{Additivity:} 
$L_\alpha^{\downarrow\downarrow}(A_1A_2:B_1B_2)_{\rho_{A_1B_1}\otimes \rho'_{A_2B_2}}
=L_\alpha^{\downarrow\downarrow}(A_1:B_1)_{\rho_{A_1B_1}}
+L_\alpha^{\downarrow\downarrow}(A_2:B_2)_{\rho'_{A_2B_2}}$ for all $\alpha \in [0,\frac{1}{2}]$.

\item \emph{Joint convexity:} For any $\alpha\in (0,\frac{1}{2}],\lambda\in [0,1],\lambda'\coloneqq 1-\lambda,\sigma_A\in \mathcal{S}(A),\sigma_A'\in \mathcal{S}(A),
\tau_B\in \mathcal{S}(B),\tau_B'\in \mathcal{S}(B)$
\begin{align}
Q_\alpha((\lambda \sigma_A+\lambda'\sigma_A')\otimes (\lambda \tau_B+\lambda'\tau_B')\| \rho_{AB})
&\geq \lambda Q_\alpha( \sigma_A\otimes \tau_B\| \rho_{AB}) +\lambda' Q_\alpha(\sigma_A'\otimes \tau_B'\| \rho_{AB}),
\label{eq:joint-concavity}
\\
D_\alpha((\lambda \sigma_A+\lambda'\sigma_A')\otimes (\lambda \tau_B+\lambda'\tau_B')\| \rho_{AB})
&\leq \lambda D_\alpha(\sigma_A\otimes \tau_B\| \rho_{AB}) +\lambda' D_\alpha(\sigma_A'\otimes \tau_B'\| \rho_{AB}).
\label{eq:joint-convexity}
\end{align}

\item \emph{Uniqueness of minimizer:} 
Let $\alpha\in (0,\frac{1}{2})$. 
Then there exists 
$({\sigma}_A^\star,{\tau}_B^\star)\in \mathcal{S}(A)\times \mathcal{S}(B)$ such that 
\begin{align}
\argmin_{(\sigma_A,\tau_B)\in \mathcal{S}(A)\times\mathcal{S}(B)} 
D_\alpha (\sigma_A\otimes \tau_B\| \rho_{AB})
=\{({\sigma}_A^\star,{\tau}_B^\star)\},
\end{align}
and ${\sigma}_A^\star$ has the same support as $\rho_A$, and
${\tau}_B^\star$ has the same support as $\rho_B$.

Moreover, for all $\alpha\in (0,1)$
\begin{equation}
    \argmin_{(\sigma_A,\tau_B)\in \mathcal{S}(A) \times \mathcal{S}(B)}
    D_\alpha (\sigma_A\otimes \tau_B\| \rho_{AB})
    =\argmin_{(\sigma_A,\tau_B)\in \mathcal{S}(A) \times \mathcal{S}(B)}
    D_{1-\alpha}(\rho_{AB}\| \sigma_A\otimes \tau_B).
\end{equation}
\item \emph{Asymptotic optimality of permutation invariant state:} 
For any $\alpha\in [0,1)$
\begin{align}
L_\alpha^{\downarrow\downarrow}(A:B)_\rho &=\frac{\alpha}{1-\alpha} \lim_{n\rightarrow\infty} \frac{1}{n} D_{1-\alpha}(\rho_{AB}^{\otimes n}\| \omega_{A^n}^n\otimes \omega_{B^n}^n).
\end{align}
\item \emph{Partial minimizers:} 
Let $\alpha\in (0,1)$. 
For any fixed $\sigma_A\in \mathcal{S}(A)$ whose support is not orthogonal to that of $\rho_A$, we have 
\begin{align}
    \tau_B^\star\coloneqq \frac{(\Tr_A[\sigma_A^\alpha \rho_{AB}^{1-\alpha}])^{\frac{1}{1-\alpha}}}{\Tr[(\Tr_A[\sigma_A^\alpha \rho_{AB}^{1-\alpha}])^{\frac{1}{1-\alpha}}]}
\in \argmin_{\tau_B\in \mathcal{S}(B)}D_\alpha (\sigma_A\otimes \tau_B\| \rho_{AB}),
\end{align}
and for any fixed $\tau_B\in \mathcal{S}(B)$ whose support is not orthogonal to that of $\rho_B$, we have 
\begin{align}
    \sigma_A^\star\coloneqq \frac{(\Tr_B[\tau_B^\alpha \rho_{AB}^{1-\alpha}])^{\frac{1}{1-\alpha}}}{\Tr[(\Tr_B[\tau_B^\alpha \rho_{AB}^{1-\alpha}])^{\frac{1}{1-\alpha}}]}
\in \argmin_{\sigma_A\in \mathcal{S}(A)}D_\alpha (\sigma_A\otimes \tau_B\| \rho_{AB}).
\end{align}

\item \emph{Monotonicity in $\alpha$:} 
If $\alpha,\beta\in [0,\infty)$ are such that $\alpha\leq \beta$, then $L_\alpha^{\downarrow\downarrow}(A:B)_\rho \leq L_\beta^{\downarrow\downarrow}(A:B)_\rho$. 
\item \emph{Continuity in $\alpha$:} 
The function $[0,\infty)\rightarrow[0,\infty],\alpha\mapsto L_\alpha^{\downarrow\downarrow}(A:B)_\rho$ is continuous.
\item \emph{Convexity in $\alpha$:} 
The function $[0,1)\rightarrow\mathbb{R},\alpha\mapsto (\alpha-1)L_\alpha^{\downarrow\downarrow}(A:B)_\rho$ is convex.
\item \emph{Differentiability in $\alpha$:} 
The function $(0,\frac{1}{2})\rightarrow [0,\infty),\alpha \mapsto L_\alpha^{\downarrow\downarrow}(A:B)_\rho$ is continuously differentiable. 
For any $\alpha\in (0,\frac{1}{2})$ and any fixed $(\sigma_A,\tau_B)\in \argmin_{(\sigma_A',\tau_B')\in \mathcal{S}(A)\times\mathcal{S}(B) } D_\alpha (\sigma_A'\otimes \tau_B'\| \rho_{AB})$, the derivative at $\alpha$ is 
\begin{align}
    \frac{\dd}{\dd\alpha}L_\alpha^{\downarrow\downarrow}(A:B)_\rho 
    &=\frac{\partial}{\partial\alpha}D_\alpha (\sigma_A\otimes \tau_B\| \rho_{AB})\\
    &= \frac{1}{(1-\alpha)^2}I_{1-\alpha}^{\downarrow\downarrow}(A:B)_\rho - \frac{\alpha}{1-\alpha} \frac{\partial}{\partial \beta} I_\beta^{\downarrow\downarrow}(A:B)_\rho\big|_{\beta=1-\alpha}.
    \label{eq:diff-prmi2}
\end{align}
\item \emph{Special values of $\alpha$:} 
\begin{align}
    L_0^{\downarrow\downarrow}(A:B)_\rho&=0\\
    L_{1/2}^{\downarrow\downarrow}(A:B)_\rho&=I_{1/2}^{\downarrow\downarrow}(A:B)_\rho
    \label{eq:l12-i12}\\
    L_1^{\downarrow\downarrow}(A:B)_\rho &= T(A:B)_\rho
    \label{eq:l1-t}
\end{align}
\item \emph{Product states:} 
If $\rho_{AB}=\rho_A\otimes \rho_B$, then $L_\alpha^{\downarrow\downarrow}(A:B)_\rho=0$ for all $\alpha\in [0,\infty)$. 
Conversely, for any $\alpha\in (0,\infty)$, if $L_\alpha^{\downarrow\downarrow}(A:B)_\rho=0$, then $\rho_{AB}=\rho_A\otimes \rho_B$.
\item \emph{CC states:} 
Let $P_{XY}$ be the joint probability distribution of two random variables $X,Y$ over $\mathcal{X}\coloneqq [d_A],\mathcal{Y}\coloneqq [d_B]$. 
If there exist orthonormal bases $\{\ket{a_x}_A\}_{x\in [d_A]},\{\ket{b_y}_B\}_{y\in [d_B]}$ for $A,B$ such that 
$\rho_{AB}=\sum_{x\in \mathcal{X}}\sum_{y\in \mathcal{Y}}P_{XY}(x,y)\ketbra{a_x,b_y}_{AB}$, then for all $\alpha\in [0,\infty)$
\begin{align}
L_\alpha^{\downarrow\downarrow}(A:B)_\rho 
=L_\alpha^{\downarrow\downarrow}(X:Y)_P.
\end{align}
\end{enumerate}
\end{thm}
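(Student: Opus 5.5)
The organizing principle will be the identity \eqref{eq:prli_prmi2}, $L_\alpha^{\downarrow\downarrow}(A:B)_\rho=\frac{\alpha}{1-\alpha}I_{1-\alpha}^{\downarrow\downarrow}(A:B)_\rho$ for $\alpha\in(0,1)$. It follows pointwise from \eqref{eq:reverse}, $D_\alpha(\sigma_A\otimes\tau_B\|\rho_{AB})=\frac{\alpha}{1-\alpha}D_{1-\alpha}(\rho_{AB}\|\sigma_A\otimes\tau_B)$. This pointwise version shows that, for each $\alpha\in(0,1)$, the two objective functions differ by a positive constant factor. Hence their argmin sets coincide, which is the second part of (e).

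Most items then transfer from the known properties of the doubly minimized PRMI in~\cite{burri2025doubly}, with $\beta=1-\alpha$. The relevant correspondences are as follows.
\begin{enumerate}[label=(\roman*)]
\item Additivity of $I_\beta^{\downarrow\downarrow}$ for $\beta\in[1/2,1]$ gives (c) on $(0,1/2]$; the endpoint $\alpha=0$ is handled by (l).
\item Uniqueness and full support of the minimizer for $\beta\in(1/2,1)$ gives the first part of (e).
\item The asymptotic formula with $\omega^n_{A^n}\otimes\omega^n_{B^n}$ gives (f).
\item Differentiability of $\beta\mapsto I_\beta^{\downarrow\downarrow}$ on $(1/2,1)$ gives (k). The product rule applied to $\frac{\alpha}{1-\alpha}I_{1-\alpha}^{\downarrow\downarrow}$ yields \eqref{eq:diff-prmi2}. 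The identification with $\partial_\alpha D_\alpha(\sigma_A\otimes\tau_B\|\rho_{AB})$ at a minimizer comes from the corresponding envelope statement for the PRMI, or directly from a Danskin-type argument using uniqueness.
\item Plugging in $\alpha=1/2$ gives \eqref{eq:l12-i12}.
\item The value at $\alpha=1$ is \eqref{eq:prli-limits}, proved in Appendix~\ref{proof:limits}.
\item For $\alpha=0$, note that $D_0(\sigma\otimes\tau\|\rho)=-\log\Tr[\Pi_{\sigma\otimes\tau}\rho]$ can be made zero by taking $\sigma_A,\tau_B$ of full rank; this gives the first identity in (l).
\end{enumerate}

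Several items follow directly from properties of the Petz divergence, with no reference to the PRMI.
\begin{itemize}
\item \emph{(a):} Use the data-processing inequality for $D_\alpha$ with $\alpha\in[0,2]$. Given any candidate pair $(\sigma_A,\tau_B)$, take the pair $(\pazocal N(\sigma_A),\pazocal M(\tau_B))$ and apply $\pazocal N\otimes\pazocal M$ to both arguments; taking the infimum gives the claim.
\item \emph{(b):} Non-negativity is positivity of $D_\alpha$ between states. Finiteness for $\alpha<1$ holds because full-rank $\sigma_A,\tau_B$ are never orthogonal to $\rho_{AB}$.
\item \emph{(h):} This is monotonicity of $D_\alpha$ in $\alpha$, preserved under the infimum.
\item \emph{(j):} For each fixed pair, $\alpha\mapsto\log Q_\alpha(\sigma_A\otimes\tau_B\|\rho_{AB})=(\alpha-1)D_\alpha$ is convex. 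A supremum of convex functions is convex, and $(\alpha-1)L_\alpha^{\downarrow\downarrow}=\sup_{\sigma,\tau}\log Q_\alpha$ since $\alpha-1<0$.
\item \emph{(g):} This is the Sibson identity \eqref{eq:sibson} together with $D_\alpha(\tau_B\|\hat\tau_B)\ge0$ and the $A\leftrightarrow B$ symmetric version.
\item \emph{(m):} The forward direction uses $\sigma_A=\rho_A$, $\tau_B=\rho_B$. For the converse, the infimum is attained by compactness and lower semicontinuity. Then $D_\alpha=0$ with $\alpha>0$ forces $\sigma_A\otimes\tau_B=\rho_{AB}$, and hence $\rho_{AB}$ is a product.
\item \emph{(n):} For CC states, use pinching in the product basis, which is a local channel. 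Data processing together with the pinching-invariance of $\rho_{AB}$ reduces the optimization to diagonal $\sigma_A,\tau_B$, where $D_\alpha$ becomes classical.
\end{itemize}

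The main obstacles are (d) and (i).

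For (d), I would first prove \eqref{eq:joint-concavity} for $\alpha\in(0,1/2]$. Then \eqref{eq:joint-convexity} follows by applying the monotone map $x\mapsto\frac{1}{\alpha-1}\log x$ and using concavity of $\log$. My approach is to write $(\lambda\sigma+\lambda'\sigma')\otimes(\lambda\tau+\lambda'\tau')$ as the block mixture $\lambda^2\sigma\otimes\tau+\lambda\lambda'(\sigma\otimes\tau'+\sigma'\otimes\tau)+\lambda'^2\sigma'\otimes\tau'$, and to use concavity of $X\mapsto\Tr[X^\alpha\rho^{1-\alpha}]$ (Lieb) to get a lower bound by the corresponding combination of $Q_\alpha$ values. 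What remains is the inequality
\begin{align}
Q_\alpha(\sigma\otimes\tau'\|\rho)+Q_\alpha(\sigma'\otimes\tau\|\rho)\ge Q_\alpha(\sigma\otimes\tau\|\rho)+Q_\alpha(\sigma'\otimes\tau'\|\rho).
\end{align}
This does not hold in general, so I would instead mirror the proof of the PRMI analogue in~\cite{burri2025doubly}. That proof rewrites $Q_\alpha$ as $\|\rho^{(1-\alpha)/2}(\sigma^{\alpha}\otimes\tau^{\alpha})\rho^{(1-\alpha)/2}\|_1$-type expressions and uses that $\sigma\mapsto\sigma^{\alpha}$ with $\alpha\le1/2$ allows writing $\sigma^\alpha\otimes\tau^\alpha=(\sigma^{2\alpha}\otimes\mathbb{1})^{1/2}(\mathbb{1}\otimes\tau^{2\alpha})^{1/2}$. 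The restriction $\alpha\le1/2$ (so $2\alpha\le1$) is exactly where operator concavity of $x\mapsto x^{2\alpha}$ enters, and I expect to need a Cauchy--Schwarz step there.

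For (i), continuity on $(0,1)$ follows from convexity (j). At $\alpha=0$, continuity follows from monotonicity plus a full-rank upper bound. For $\alpha\ge1$, I would use that $L_\alpha^{\downarrow\downarrow}$ is an infimum of functions continuous in $\alpha$ (hence upper semicontinuous) and monotone (hence left-lower-semicontinuous). Right-continuity requires a compactness/minimax argument over $\mathcal S(A)\times\mathcal S(B)$, as in Appendix~\ref{proof:limits}; infinite values are handled separately via the support condition, which does not depend on $\alpha$ for $\alpha\ge1$.
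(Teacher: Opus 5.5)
Your overall route is the paper's. Items (c), (e), (f), (k) and the inequality in (d) are pulled back from the doubly minimized PRMI through \eqref{eq:prli_prmi2}, and (a), (b), (h), (m) are read off from properties of $D_\alpha$. Where you deviate, the deviations are sound and sometimes simpler:
\begin{itemize}
\item (g) comes directly from the Sibson identity~\eqref{eq:sibson}.
\item (j) comes from writing $(\alpha-1)L_\alpha^{\downarrow\downarrow}=\sup_{\sigma_A,\tau_B}\log Q_\alpha(\sigma_A\otimes\tau_B\|\rho_{AB})$ as a supremum of convex functions, instead of citing convexity of $(\beta-1)I_\beta^{\downarrow\downarrow}$.
\item (k) uses the chain rule on $\frac{\alpha}{1-\alpha}I_{1-\alpha}^{\downarrow\downarrow}$ instead of the paper's sandwich argument with one-sided derivatives.
\end{itemize}
For (d) you need not ``mirror'' anything. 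Since $Q_\alpha(\sigma_A\otimes\tau_B\|\rho_{AB})=Q_{1-\alpha}(\rho_{AB}\|\sigma_A\otimes\tau_B)$, \eqref{eq:joint-concavity} \emph{is} the PRMI joint concavity at order $1-\alpha\in[1/2,1)$, and the paper simply cites it. For a self-contained proof, combine operator concavity of $x\mapsto x^{2\alpha}$ with monotonicity and joint concavity of the operator geometric mean, using $(\sigma_A^{2\alpha}\otimes\id)\#(\id\otimes\tau_B^{2\alpha})=\sigma_A^{\alpha}\otimes\tau_B^{\alpha}$.

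Two steps, however, fail as written.

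\textbf{Continuity (i).} The one-sided claims are reversed. For a nondecreasing function,
$\lim_{\beta\to\alpha^-}L_\beta^{\downarrow\downarrow}\le L_\alpha^{\downarrow\downarrow}\le\lim_{\beta\to\alpha^+}L_\beta^{\downarrow\downarrow}$.
So monotonicity gives no lower bound from the left, and a nondecreasing upper semicontinuous function can jump up at a point.
\begin{itemize}
\item What upper semicontinuity (an infimum of functions continuous in $\alpha$) plus monotonicity actually gives is \emph{right}-continuity, for free.
\item \emph{Left}-continuity is where compactness is needed. One way is the Mosonyi--Hiai minimax on $\mathcal{S}(A)\times\mathcal{S}(B)$: $D_\beta(\sigma_A\otimes\tau_B\|\rho_{AB})$ is lower semicontinuous in $(\sigma_A,\tau_B)$ and nondecreasing in $\beta$, so $\sup_{\beta<\alpha}\inf_{\sigma_A,\tau_B}D_\beta=\inf_{\sigma_A,\tau_B}\sup_{\beta<\alpha}D_\beta=L_\alpha^{\downarrow\downarrow}$.
\end{itemize}
This same argument handles the delicate limit $\alpha\to1^-$, where $L_\alpha^{\downarrow\downarrow}$ passes from finite values to a possibly infinite $T(A:B)_\rho$. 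That is the content of \eqref{eq:prli-limits}; the paper alternatively uses the universal-state bound \eqref{eq:l1-omegaomega}. You have the right tool, attached to the wrong side.

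\textbf{CC states (n).} The claim is for all $\alpha\in[0,\infty)$, but data processing for the Petz divergence holds only for $\alpha\in[0,2]$. Your pinching argument therefore does not cover $\alpha>2$. It can be repaired without data processing, because the second argument is diagonal. We have
$Q_\alpha(\sigma_A\otimes\tau_B\|\rho_{AB})=\sum_{x,y:\,P_{XY}(x,y)>0}\langle a_x|\sigma_A^\alpha|a_x\rangle\langle b_y|\tau_B^\alpha|b_y\rangle P_{XY}(x,y)^{1-\alpha}$.
Jensen's inequality on the spectral decomposition gives $\langle a_x|\sigma_A^\alpha|a_x\rangle\ge\langle a_x|\sigma_A|a_x\rangle^\alpha$ for $\alpha\ge1$, with the reverse for $\alpha\in(0,1)$, and likewise for $\tau_B$. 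Combined with the sign of $\frac{1}{\alpha-1}$, replacing $\sigma_A,\tau_B$ by their diagonal parts never increases $D_\alpha(\cdot\|\rho_{AB})$, for every $\alpha$. This replacement also preserves $\sigma_A\otimes\tau_B\ll\rho_{AB}$, after which your reduction to the classical quantity goes through.
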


\begin{proof}
    See Appendix~\ref{proof:properties2}.
\end{proof}

The above theorem establishes several properties of the doubly minimized PRLI. 
For $\alpha=1$, this quantity reduces, by definition, to the tumula information (see~\eqref{eq:l1-t}). 
The following theorem presents additional properties of the tumula information not already covered above.

\begin{thm}[(Properties of the tumula information)] \label{thm:properties}
Let $\rho_{AB}\in \mathcal{S}(AB)$.
\begin{enumerate}[label=(\alph*)]
\item \emph{Partial minimizers:} 
If $T(A:B)_\rho<\infty$, then 
for any fixed $\sigma_A\in \mathcal{S}(A)$ whose support is not orthogonal to that of $\rho_A$, we have 
\begin{align}\label{eq:el}
    \tau_B^\star &\coloneqq \frac{\exp[\Tr_A[\sigma_A\log\rho_{AB}]]}{\Tr[\exp[\Tr_A[\sigma_A\log\rho_{AB}]]]}
\in \argmin_{\tau_B\in \mathcal{S}(B)}D(\sigma_A\otimes \tau_B\| \rho_{AB}),
\end{align}
and for any fixed $\tau_B\in \mathcal{S}(B)$ whose support is not orthogonal to that of $\rho_B$, we have 
\begin{align}
    \sigma_A^\star &\coloneqq \frac{\exp[\Tr_B[\tau_B\log\rho_{AB}]]}{\Tr[\exp[\Tr_B[\tau_B\log\rho_{AB}]]]}
\in \argmin_{\sigma_A\in \mathcal{S}(A)}D(\sigma_A\otimes \tau_B\| \rho_{AB}).
\end{align}

\item \emph{Additivity:} 
$T(A_1A_2:B_1B_2)_{\rho_{A_1B_1}\otimes \rho'_{A_2B_2}} 
= T(A_1 : B_1)_{\rho_{A_1B_1}} + T(A_2:B_2)_{\rho'_{A_2B_2}}$
\item \emph{Pure states:} 
Suppose $\rho_{AB}\equiv \ketbra{\rho}_{AB}$ is a pure state. Then, 
$T(A:B)_{\ket{\rho}\bra{\rho}_{AB}}=0$ if $\ketbra{\rho}_{AB}$ is a product state, and $T(A:B)_{\ket{\rho}\bra{\rho}_{AB}}=\infty$ else.
\item \emph{Universal permutation invariant state:} 
We have
\begin{align}\label{eq:tab_omega}
    T(A:B)_\rho &= \lim_{n\rightarrow\infty} \frac{1}{\sqrt{n}} D_{\frac{1}{\sqrt{n}}}(\rho_{AB}^{\otimes n}\| \omega_{A^n}^n\otimes \omega_{B^n}^n).
\end{align}
\end{enumerate}
\end{thm}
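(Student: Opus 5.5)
We treat the four parts separately, taking Theorem~\ref{thm:properties_petz} as given.

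For (a), fix $\sigma_A$ with $\sigma_A\not\perp\rho_A$. Since $T(A:B)_\rho<\infty$, some product state is supported in $\mathrm{supp}\,\rho_{AB}$, so we may work on that support and interpret $\log\rho_{AB}$ there. For $\tau_B$ with $\sigma_A\otimes\tau_B\ll\rho_{AB}$ we have $D(\sigma_A\otimes\tau_B\|\rho_{AB})=-H(\sigma_A)-H(\tau_B)-\Tr[\tau_B\Tr_A[\sigma_A\log\rho_{AB}]]$. Writing $K_B\coloneqq\Tr_A[\sigma_A\log\rho_{AB}]$ and $\tau_B^\star=e^{K_B}/\Tr e^{K_B}$, this becomes
\begin{align}
D(\sigma_A\otimes\tau_B\|\rho_{AB})=-H(\sigma_A)+D(\tau_B\|\tau_B^\star)-\log\Tr[e^{K_B}],
\end{align}
which is the $\alpha\to1$ analogue of the Sibson identity~\eqref{eq:sibson}. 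It is therefore minimized at $\tau_B=\tau_B^\star$, and the argument for $\sigma_A^\star$ is symmetric. The delicate point is the support bookkeeping. The exponential must be read as zero outside the subspace on which $K_B$ is finite, that is, outside those $b$ for which $\sigma_A\otimes\ketbra{b}\ll\rho_{AB}$. Finiteness of $T$ together with $\sigma_A\not\perp\rho_A$ should make this subspace nonempty; if it is not, every $\tau_B$ gives $\infty$ and the claim is trivial.

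For (b), the inequality ``$\leq$'' follows by taking tensor products of the two minimizers and using additivity of $D$. For ``$\geq$'', the plan is to pass to the limit $\alpha\to1$ in the additivity from Theorem~\ref{thm:properties_petz}(c). That additivity holds only for $\alpha\le1/2$, so it cannot be used directly. Instead we use~\eqref{eq:tab_omega}, proved below and independently of (b). Because $\omega^n$ is universal, $\omega_{A_1^nA_2^n}^n\ge c_n\,\omega_{A_1^n}^n\otimes\omega_{A_2^n}^n$ with $c_n$ polynomial, and the reverse operator inequality holds as well. Also $\rho^{\otimes n}\otimes\rho'^{\otimes n}$ factorizes. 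Hence $D_{1/\sqrt n}$ splits into a sum up to $O(\log n)$ terms. By monotonicity of $D_\beta$ in its second argument, $\frac{1}{\sqrt n}O(\log n)\to0$, so additivity follows.

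For (c), if $\ket\rho$ is a product state, choose its marginals. Otherwise $\mathrm{supp}\,\rho_{AB}$ is the span of an entangled vector, which contains no product state, so every $D(\sigma\otimes\tau\|\rho)=\infty$.

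For (d), the main step, combine Theorem~\ref{thm:properties_petz}(f) with $\alpha_n=1-1/\sqrt n$. Then $\frac{\alpha}{1-\alpha}\cdot\frac1n=\frac{\sqrt n-1}{n}\approx\frac{1}{\sqrt n}$, which matches the prefactor in~\eqref{eq:tab_omega}. The difficulty is that (f) gives a limit in $n$ for fixed $\alpha$, whereas here $\alpha$ varies with $n$, so we need uniform bounds.

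The upper bound uses $\omega^n_{A^n}\ge g_{n,d_A}^{-1}\sigma_A^{\otimes n}$ for every $\sigma_A$, with $g_{n,d_A}\le(n+1)^{d_A^2}$. This gives
\begin{align}
\frac{1}{\sqrt n}D_{1/\sqrt n}(\rho^{\otimes n}\|\omega\otimes\omega)\le\frac{1}{\sqrt n}\Big[nD_{1/\sqrt n}(\rho\|\sigma\otimes\tau)+O(\log n)\Big]=L_{1-1/\sqrt n}^{\downarrow\downarrow}\cdot\tfrac{n}{\sqrt n(\sqrt n-1)}+o(1)
\end{align}
after optimizing over $\sigma,\tau$ and using~\eqref{eq:reverse}. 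By continuity in $\alpha$ (Theorem~\ref{thm:properties_petz}(i)), the right-hand side tends to $T$.

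The lower bound is the real obstacle. Both $\rho^{\otimes n}$ and $\omega\otimes\omega$ are permutation invariant, and $\omega\otimes\omega$ is close to a mixture of $\sigma^{\otimes n}\otimes\tau^{\otimes n}$ over types. By the pinching or postselection structure, $\omega_{A^n}^n\le\mathrm{poly}(n)\int\sigma^{\otimes n}\,d\mu(\sigma)$. We would then apply quasi-concavity of $Q_{1/\sqrt n}$ in its second argument, in the form $Q_\beta(\rho\|\sum_i p_i X_i)\le\sum_i p_iQ_\beta(\rho\|X_i)^{\cdots}$, which holds since $x\mapsto x^{1-\beta}$ is operator concave, and discretize the integral over a polynomially sized net. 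This yields $D_{1/\sqrt n}(\rho^{\otimes n}\|\omega\otimes\omega)\ge\min_{\sigma,\tau}nD_{1/\sqrt n}(\rho\|\sigma\otimes\tau)-O(\log n)$, and the limit then follows as in the upper bound. If the net argument is too lossy, the fallback is to use convexity in $\alpha$ (Theorem~\ref{thm:properties_petz}(j)) to interpolate between fixed $\alpha$ values for which (f) applies.
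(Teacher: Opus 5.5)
\textbf{Additivity (b) has a genuine gap.} For ``$\ge$'' you need a \emph{lower} bound on $D_{1/\sqrt n}$ of $(\rho\otimes\rho')^{\otimes n}$ relative to the joint universal states. That requires an \emph{upper} bound $\omega^n_{(A_1A_2)^n}\le c_n\,\omega^n_{A_1^n}\otimes\omega^n_{A_2^n}$ with $\log c_n=o(\sqrt n)$, and this ``reverse operator inequality'' is false. Universality gives $\sigma_{A_1A_2}^{\otimes n}\le g_{n,d_{A_1}d_{A_2}}\,\omega^n_{(A_1A_2)^n}$ for every $\sigma_{A_1A_2}$. Combined with \eqref{eq:i1}, the reverse inequality would therefore imply
\[
nI(A_1:A_2)_\sigma\le D(\sigma_{A_1A_2}^{\otimes n}\|\omega^n_{A_1^n}\otimes\omega^n_{A_2^n})\le\log\left(g_{n,d_{A_1}d_{A_2}}\,c_n\right).
\]
Taking $\sigma$ entangled, $c_n$ must grow exponentially, so your error term $\frac{1}{\sqrt n}\log c_n$ diverges. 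Only the direction $\omega^n_{(A_1A_2)^n}\ge g_{n,d_{A_1}d_{A_2}}^{-1}\,\omega^n_{A_1^n}\otimes\omega^n_{A_2^n}$ holds with a polynomial constant, and it gives only the trivial ``$\le$''.

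The missing idea is elementary, and it is what the paper uses. If $\sigma_{A_1A_2}\otimes\tau_{B_1B_2}\ll\rho\otimes\rho'$, then $\log(\rho\otimes\rho')=\log\rho\otimes\id+\id\otimes\log\rho'$, so the cross term depends only on the marginals. Hence
\[
D(\sigma_{A_1A_2}\otimes\tau_{B_1B_2}\|\rho\otimes\rho')=D(\sigma_{A_1}\otimes\tau_{B_1}\|\rho)+D(\sigma_{A_2}\otimes\tau_{B_2}\|\rho')+I(A_1:A_2)_\sigma+I(B_1:B_2)_\tau .
\]
Subadditivity of the von Neumann entropy then gives ``$\ge$''; the support condition passes to the marginals. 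No detour through (d) is needed.

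\textbf{The lower bound in (d) is not established either.} The inequality you invoke points the wrong way. Operator concavity of $x\mapsto x^{1-\beta}$ gives $(\sum_ip_iX_i)^{1-\beta}\ge\sum_ip_iX_i^{1-\beta}$, hence $Q_\beta(\rho\|\sum_ip_iX_i)\ge\sum_ip_iQ_\beta(\rho\|X_i)$. That only \emph{upper}-bounds $D_\beta$, which is the direction you already have.

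Your fallback via Theorem~\ref{thm:properties_petz}(j) also fails as stated. Part (j) concerns the minimized single-copy quantity and says nothing about $D_{1/\sqrt n}(\rho_{AB}^{\otimes n}\|\omega^n_{A^n}\otimes\omega^n_{B^n})$. What does work is monotonicity in the order of the \emph{unminimized} $n$-copy divergence. By \eqref{eq:reverse},
\[
\tfrac{1}{\sqrt n}D_{1/\sqrt n}(\rho_{AB}^{\otimes n}\|\omega^n_{A^n}\otimes\omega^n_{B^n})=\tfrac{1}{n-\sqrt n}D_{1-1/\sqrt n}(\omega^n_{A^n}\otimes\omega^n_{B^n}\|\rho_{AB}^{\otimes n}).
\]
Fix $\gamma<1$. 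For all $n$ with $1-1/\sqrt n\ge\gamma$, this is at least $\frac{1}{n-\sqrt n}\frac{\gamma}{1-\gamma}D_{1-\gamma}(\rho_{AB}^{\otimes n}\|\omega^n_{A^n}\otimes\omega^n_{B^n})$. That tends to $L_\gamma^{\downarrow\downarrow}(A:B)_\rho$ by Theorem~\ref{thm:properties_petz}(f). Letting $\gamma\to1^-$ and using \eqref{eq:prli-limits} gives $\liminf\ge T(A:B)_\rho$, including when $T=\infty$.

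The paper takes a different route here. It quotes two-sided finite-$n$ bounds from \cite{burri2025doubly} relating $I_\varepsilon^{\downarrow\downarrow}(A:B)_\rho$ and $\frac1nD_\varepsilon(\rho_{AB}^{\otimes n}\|\omega^n_{A^n}\otimes\omega^n_{B^n})$. Their error terms are explicit in $\varepsilon$, so substituting $\varepsilon=1/\sqrt n$ is legitimate. This supplies exactly the uniformity in the order that you identified as the obstacle.

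Your arguments for (a), (c) and the upper bound in (d) are sound. In (a), your decomposition $D(\sigma_A\otimes\tau_B\|\rho_{AB})=-H(\sigma_A)+D(\tau_B\|\tau_B^\star)-\log\Tr e^{K_B}$ is a cleaner and more rigorous version of the paper's Lagrangian stationarity argument.
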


\begin{proof}
    See Appendix~\ref{proof:properties}.
\end{proof}

Although the quantum tumula information can be infinite (see Theorem \ref{thm:properties}~(c)), the classical tumula information is generally finite. 
Indeed, it is upper bounded by the cardinalities of the alphabets $\mathcal{X}$ and $\mathcal{Y}$ associated with the random variables $X$ and $Y$, as shown in the following proposition.

\begin{prop}[(Upper bound on classical tumula information)]\label{prop:bound} Let $P_{XY}$ be the joint probability distribution of two random variables $X,Y$ over $\mathcal{X},\mathcal{Y}$. 
Then
\bb\label{eq:bound_T}
    T(X:Y)_P\leq \log \min\{|\mathcal{X}|,|\mathcal{Y}|\}.
\ee
Furthermore, this bound is tight, i.e., there exists a probability distribution $P_{XY}$ such that \eqref{eq:bound_T} is an equality.
\end{prop}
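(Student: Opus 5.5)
The plan is to bound $T(X:Y)_P$ from above by a simple choice of product distributions in the infimum, and then to exhibit a perfectly correlated distribution for which every competing product distribution is forced to be deterministic.

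\emph{Upper bound.} Fix $x_0\in\mathcal{X}$ with $P_X(x_0)>0$ and set $Q_X\coloneqq\delta_{x_0}$. For any $R_Y$ we have
\begin{align}
D(\delta_{x_0}R_Y\|P_{XY})=\sum_y R_Y(y)\log\frac{R_Y(y)}{P_{XY}(x_0,y)}.
\end{align}
Writing $P_{XY}(x_0,y)=P_X(x_0)\,P_{Y|X=x_0}(y)$, this equals
\begin{align}
D(R_Y\|P_{Y|X=x_0})-\log P_X(x_0).
\end{align}
The choice $R_Y=P_{Y|X=x_0}$ makes the first term vanish, so $T(X:Y)_P\le -\log P_X(x_0)$. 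This choice agrees with the partial minimizer of Theorem~\ref{thm:properties}~(a) in the classical case.

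Now take $x_0\in\arg\max_x P_X(x)$. Since $\max_x P_X(x)\ge 1/|\mathcal{X}|$, we get $T(X:Y)_P\le\log|\mathcal{X}|$. Exchanging the roles of $X$ and $Y$, with $R_Y$ a point mass and $Q_X$ the corresponding conditional distribution, gives $T(X:Y)_P\le\log|\mathcal{Y}|$. Taking the better of the two bounds yields \eqref{eq:bound_T}.

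\emph{Tightness.} Let $d\coloneqq\min\{|\mathcal{X}|,|\mathcal{Y}|\}$, identify $d$ elements of each alphabet with $[d]$, and set $P_{XY}(x,y)=\frac1d\,\delta_{x,y}$ for $x,y\in[d]$, with zero probability elsewhere. For $D(Q_XR_Y\|P_{XY})<\infty$ we need $\operatorname{supp}(Q_X)\times\operatorname{supp}(R_Y)\subseteq\{(x,x):x\in[d]\}$. A product set lies inside the diagonal only if both supports are the same single point, so $Q_X=R_Y=\delta_x$ for some $x\in[d]$. For such a pair, $D(\delta_x\delta_x\|P_{XY})=\log d$. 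Every other pair gives $+\infty$, hence $T(X:Y)_P=\log d$ and \eqref{eq:bound_T} holds with equality.

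The argument has no genuine obstacle. The only point needing care is the support bookkeeping: both in the reduction to $D(R_Y\|P_{Y|X=x_0})$, which requires $P_X(x_0)>0$, and in the claim that only deterministic product pairs give finite divergence against a diagonally supported $P_{XY}$.
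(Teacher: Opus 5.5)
Your proposal is correct and follows essentially the same route as the paper. You restrict $Q_X$ to a point mass at a mode of $P_X$ and optimize $R_Y$, which gives $T(X:Y)_P\le-\log\max_x P_X(x)\le\log|\mathcal{X}|$; you then use symmetry in $X$ and $Y$, and get tightness from the diagonal distribution, for which only deterministic product pairs have finite divergence. The one difference is that the paper optimizes $R_Y$ via Gibbs' variational principle inside a chain of inequalities, while you exhibit the feasible point $(\delta_{x_0},P_{Y|X=x_0})$ directly, which is a slightly cleaner presentation of the same argument.
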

\begin{proof}
    See Appendix~\ref{proof:bound}.
\end{proof}

\newpage
\section{Operational interpretation in binary quantum state discrimination}\label{sec:discrimination}
\subsection{Composite asymmetric hypothesis testing}

Let $\mathcal{H}$ be a Hilbert space and let $H_0=(H_0^n)_{n\geq 1}$ and $H_1=(H_1^n)_{n\geq 1}$ be sequences of sets of quantum states such that $H_0^n,H_1^n\subseteq \mathcal{S}(\mathcal{H}^{\otimes n})$. 
In the task of hypothesis testing, one is given a sequence of states $(\xi_n)_{n\geq 1}$, 
and the task is to decide whether $\xi_n\in H_0^n$ (the null hypothesis) or $\xi_n\in H_1^n$ (the alternative hypothesis). 
For each $n\in \mathbb{N}$, the decision is based on the outcome of a POVM $\{T_n,\id-T_n\}$. 
If the outcome corresponding to $T_n$ occurs, then the null hypothesis is believed to be true, and if the outcome corresponding to $\id -T_n$ occurs, then the alternative hypothesis is believed to be true. 
In such a setting, two kinds of errors can occur:
\begin{itemize}
\item type-I error: The null hypothesis holds, but the alternative hypothesis is believed to be true.
\item type-II error: The alternative hypothesis holds, but the null hypothesis is believed to be true.
\end{itemize}
The probabilities with which these errors occur in the worst case are called 
the \emph{(worst case) type-I error probability} and the \emph{(worst case) type-II error probability}. 
They are given, respectively, by
\bb
\alpha_n(T_n)&\coloneqq \sup_{\rho_{n}\in H_0^n} \Tr[\rho_n (\id-T_n)],
\\
\beta_n(T_n)&\coloneqq \sup_{\sigma_{n}\in H_1^n}\Tr[\sigma_n T_n].
\ee

In \emph{asymmetric} hypothesis testing, one is interested in minimizing one of these probabilities while keeping the other one bounded. 
We denote the \emph{minimum} type-I error probability when the type-II error probability is upper bounded by $\varepsilon\in [0,1]$ by 
\begin{align}
\hat{\alpha}_n(\varepsilon)\coloneqq \inf_{T_n\in \mathcal{L}(\mathcal{H}^{\otimes n})}\{\alpha_n(T_n) :0\leq T_n\leq \id ,\beta_n(T_n)\leq \varepsilon\}.
\end{align}
Analogously, we denote the \emph{minimum} type-II error probability when the type-I error probability is upper bounded by $\varepsilon\in [0,1]$ by 
\begin{align}
\hat{\beta}_n(\varepsilon)\coloneqq \inf_{T_n\in \mathcal{L}(\mathcal{H}^{\otimes n})}\{\beta_n(T_n) :0\leq T_n\leq \id ,\alpha_n(T_n)\leq \varepsilon\}.
\end{align}

In the limit where $n\rightarrow\infty$, the trade-off between the type-I and type-II error probabilities can be characterized by various error exponents~\cite{Mosonyi2022}. 
In this work, we are only interested in the following exponents. 
\begin{itemize}
\item The \emph{Stein exponent} is defined as 
$\text{Stein}(H_0\|H_1)\coloneqq 
\lim\limits_{\epsilon\to 0^+}\liminf\limits_{n\to \infty}-\frac{1}{n}\log \hat{\beta}_n(\varepsilon).$ 

Similarly, we define $\text{Stein}_\varepsilon(H_0\|H_1)\coloneqq \liminf\limits_{n\to \infty}-\frac{1}{n}\log \hat{\beta}_n(\varepsilon)$ for all $\varepsilon\in (0,1)$.
\item The \emph{Sanov exponent} is defined as
$\text{Sanov}(H_0\|H_1)\coloneqq 
\lim\limits_{\epsilon\to 0^+}\liminf\limits_{n\to \infty}-\frac{1}{n}\log \hat{\alpha}_n(\varepsilon).$

Similarly, we define $\text{Sanov}_\varepsilon(H_0\|H_1)\coloneqq \liminf\limits_{n\to \infty}-\frac{1}{n}\log \hat{\alpha}_n(\varepsilon)$ for all $\varepsilon\in (0,1)$.
\item The \emph{direct exponent} with respect to $R\in [0,\infty)$ 
is defined as 
$\liminf\limits_{n\rightarrow\infty} -\frac{1}{n}\log \hat{\alpha}_n (e^{-nR})$
if this limit exists, and as $+\infty$ else. 

\item The \emph{reverse direct exponent} with respect to $R\in [0,\infty)$ 
is defined as 
$\liminf\limits_{n\rightarrow\infty} -\frac{1}{n}\log \hat{\beta}_n (e^{-nR})$
if this limit exists, and as $+\infty$ else. 
\end{itemize}

\subsection{Operational interpretation of singly minimized PRLI}
The following theorem provides an operational interpretation of the singly minimized PRLI in terms of a composite hypothesis testing problem.

\begin{boxed}{}
\begin{thm}[(Reverse direct exponent, singly minimized PRLI)]\label{thm:singly}
Let $\rho_{AB}\in \mathcal{S}(AB)$. 
Consider the null hypothesis $H_0^n\coloneqq \{\rho_{AB}^{\otimes n}\}$ and 
any of the following alternative hypotheses: 
\bb
H_1^n\coloneqq \{\rho_{A}^{\otimes n}\otimes \tau_{B}^{\otimes n}\}_{\tau_{B}\in \mathcal{S}(B)}, 
\quad
H_1^n\coloneqq \{\rho_{A}^{\otimes n}\otimes \tau_{B^n}\}_{\tau_{B^n}\in \mathcal{S}_{\sym}(B^n)}, 
\quad
H_1^n\coloneqq \{\rho_{A}^{\otimes n}\otimes \tau_{B^n}\}_{\tau_{B^n}\in \mathcal{S}(B^n)}.
\ee
Then, for any $R\in (0,\infty)$
\begin{align}
\lim_{n\rightarrow\infty}-\frac{1}{n}\log \hat{\beta}_{n}(e^{-nR})
= \sup_{s\in (0,1)}\frac{1-s}{s}(L_s^{\uparrow\downarrow}(A:B)_\rho - R).
\end{align}
\end{thm}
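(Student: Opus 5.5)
The plan is to reduce Theorem~\ref{thm:singly} to the known direct exponent of the second row of Table~\ref{tab:direct_prmi} (Hayashi--Tomamichel~\cite{hayashi_2016-1}). The reduction uses the observation that $\hat{\beta}_n$ for the present problem is the type-II error of a problem whose roles are swapped. Write $\varepsilon_0=e^{-nR}$. Three ingredients carry the argument. First, by~\eqref{eq:prli_prmi1}, $L_s^{\uparrow\downarrow}=\frac{s}{1-s}I_{1-s}^{\uparrow\downarrow}$. Substituting $t=1-s$ turns the target expression into
\begin{align}
\sup_{s\in(0,1)}\frac{1-s}{s}\Bigl(L_s^{\uparrow\downarrow}-R\Bigr)=\sup_{t\in(0,1)}\Bigl(I_t^{\uparrow\downarrow}-\frac{t}{1-t}R\Bigr),
\end{align}
which is a Hoeffding-type expression in the PRMI. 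Second, I will handle the three alternative hypotheses by sandwiching. The i.i.d.\ family is contained in the permutation-invariant family, which is contained in the full family, so $\hat\beta_n$ is monotone along this chain. It therefore suffices to prove the achievability bound for the largest set $\mathcal{S}(B^n)$ and the converse for the smallest set $\{\tau_B^{\otimes n}\}$.

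For achievability, I would first reduce from all $\tau_{B^n}$ to permutation-invariant ones by symmetrizing the test. Since $\rho_{AB}^{\otimes n}$ and $\rho_A^{\otimes n}$ are permutation invariant, one may take $T_n$ invariant under simultaneous permutations of $A^nB^n$ without loss. Then every state $\tau_{B^n}$ can be replaced by its twirl, and every permutation-invariant $\tau_{B^n}$ satisfies $\tau_{B^n}\le g_{n,d_B}\,\omega^n_{B^n}$, where $g_{n,d_B}$ is polynomial in $n$. This yields the bound $\beta_n(T_n)\le \mathrm{poly}(n)\Tr[(\rho_A^{\otimes n}\otimes\omega^n_{B^n})T_n]$. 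The problem is now a simple-versus-simple test of $\rho_{AB}^{\otimes n}$ against $\rho_A^{\otimes n}\otimes\omega^n_{B^n}$. I would use a Neyman--Pearson type test $T_n=\{\rho_{AB}^{\otimes n}> e^{n\lambda}\rho_A^{\otimes n}\otimes\omega_{B^n}^n\}$ and apply the Audenaert inequality $\Tr[X^{s}Y^{1-s}]\ge \Tr[X\{X\le Y\}]+\Tr[Y\{X>Y\}]$ in both directions. This gives
\begin{align}
\alpha_n\le e^{n\lambda(1-t)}Q_{t}(\rho^{\otimes n}_{AB}\|\rho_A^{\otimes n}\otimes\omega^n_{B^n})
\end{align}
for $t\in(0,1)$, and a similar bound for $\beta_n$ with exponent $-n\lambda t$. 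By the asymptotic optimality of $\omega$ for the singly minimized PRMI, $\frac1n D_t(\rho^{\otimes n}\|\rho_A^{\otimes n}\otimes\omega^n_{B^n})\to I^{\uparrow\downarrow}_t$; the proof is the same as in Theorem~\ref{thm:properties_petz}(f), with polynomial prefactors. Choosing $\lambda$ so that $\alpha_n\le e^{-nR}$ and optimizing over $t$ then recovers the displayed Hoeffding expression, up to the polynomial factors, which do not affect the exponent.

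For the converse, I would restrict the alternative to the single state $\rho_A^{\otimes n}\otimes\tau_B^{\otimes n}$ for a fixed $\tau_B$. The classical i.i.d.\ strong-converse/Hoeffding result (first row of Table~\ref{tab:direct_prli}, i.e., Nagaoka--Hayashi--Audenaert) then gives
\begin{align}
\limsup_n -\frac1n\log\hat\beta_n(e^{-nR})\le \sup_{t\in(0,1)}\Bigl(D_t(\rho_{AB}\|\rho_A\otimes\tau_B)-\frac{t}{1-t}R\Bigr).
\end{align}
Minimizing over $\tau_B$ reduces the claim to exchanging $\inf_{\tau_B}$ and $\sup_t$. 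I expect this minimax step to be the main obstacle. I would handle it with Sion's theorem, as Hayashi--Tomamichel do for the direct exponent, working with $Q_t$ rather than $D_t$. For $t\in(0,1)$, the map $\tau_B\mapsto Q_t(\rho\|\rho_A\otimes\tau_B)$ is concave in $\tau_B$, by operator concavity of $x^{1-t}$, so $-\frac{1}{1-t}\log Q_t$ is convex in $\tau_B$. Concavity in $t$ of $\frac{t-1}{t}\cdot$ in the reparametrized variable comes from log-convexity of $Q$ in $t$. To apply Sion I need compactness of $\mathcal{S}(B)$ and semicontinuity on a truncated $t$-range, which should require restricting $t$ to $[\delta,1-\delta]$ and then letting $\delta\to0$. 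Alternatively, I could cite~\cite{hayashi_2016-1} directly for the equality of the double optimization with its swapped version.
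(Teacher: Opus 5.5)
Your proposal is correct and follows essentially the same route as the paper. It uses the ordering of the three $\hat{\beta}_n$'s via symmetrized tests. Achievability uses the test comparing $\rho_{AB}^{\otimes n}$ with $e^{n\lambda}\rho_A^{\otimes n}\otimes\omega^n_{B^n}$, together with the Audenaert inequality, $\tau_{B^n}\le g_{n,d_B}\omega^n_{B^n}$, and $\frac{1}{n}D_t(\rho_{AB}^{\otimes n}\|\rho_A^{\otimes n}\otimes\omega^n_{B^n})\to I_t^{\uparrow\downarrow}(A:B)_\rho$. The converse uses Nagaoka's Hoeffding converse for a fixed $\tau_B$ followed by an inf--sup exchange, which the paper takes from Hayashi--Tomamichel after restricting to $\tau_B\ll\rho_B$ so that the objective stays finite.

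Your Sion route also works without truncating $t$. On $\mathcal{S}_{\ll\rho_B}(B)$ the objective is convex and lower semicontinuous in $\tau_B$. In $t$ it is the concave function $-\log Q_t-tR$ divided by the positive affine function $1-t$, hence quasi-concave and continuous.
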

\end{boxed}{}

\begin{proof}
    See Appendix~\ref{proof:singly}.
\end{proof}

\subsection{Operational interpretation of doubly minimized PRLI and tumula information}

The following theorem provides an operational interpretation of the doubly minimized PRLI in terms of a composite hypothesis testing problem.

\begin{boxed}{}
\begin{thm}[(Reverse direct exponent, doubly minimized PRLI)]\label{thm:doubly}
Let $\rho_{AB}\in \mathcal{S}(AB)$ and let 
\begin{align}\label{eq:def-rl12}
    R^L_{1/2} \coloneqq L_{1/2}^{\downarrow\downarrow}(A:B)_\rho-\frac{1}{4}\frac{\partial}{\partial s^-}L_s^{\downarrow\downarrow}(A:B)_\rho \big|_{s=1/2} 
    \in [0,L_{1/2}^{\downarrow\downarrow}(A:B)_\rho].
\end{align}
Consider the null hypothesis $H_0^n\coloneqq \{\rho_{AB}^{\otimes n}\}$ and 
any of the following alternative hypotheses: 
\bb
H_1^n\coloneqq \{\sigma_{A}^{\otimes n}\otimes \tau_{B}^{\otimes n}\}_{\sigma_{A}\in \mathcal{S}(A),\tau_{B}\in \mathcal{S}(B)}, 
\quad
H_1^n\coloneqq \{\sigma_{A^n}\otimes \tau_{B^n}\}_{\sigma_{A^n}\in \mathcal{S}_{\sym}(A^n),\tau_{B^n}\in \mathcal{S}_{\sym}(B^n)}. 
\ee
Then, for any $R\in (0,R^L_{1/2})\cup (T(A:B)_\rho,\infty)$
\begin{align}
\lim_{n\rightarrow\infty}-\frac{1}{n}\log \hat{\beta}_{n}(e^{-nR})
= \sup_{s\in (0,\frac{1}{2})}\frac{1-s}{s}(L_s^{\downarrow\downarrow}(A:B)_\rho - R),
\end{align}
and the same holds when $\sup_{s\in (0,\frac{1}{2})}$ is replaced by $\sup_{s\in (0,1)}$.
\end{thm}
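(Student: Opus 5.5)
The plan is to reduce Theorem~\ref{thm:doubly} to the direct-exponent result for the doubly minimized PRMI in the third row of Table~\ref{tab:direct_prmi}, using the correspondence~\eqref{eq:prli_prmi2}, and to redo its two halves (achievability and converse) with the roles of the errors swapped.

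\textbf{Achievability.} Use the universal test built from $\omega^n_{A^n}\otimes\omega^n_{B^n}$. By Theorem~\ref{thm:properties_petz}~(f) and the fact that $\omega^n$ dominates every permutation-invariant state up to a polynomial factor $g_{n,d}$, it suffices to test $\rho_{AB}^{\otimes n}$ against the single state $\omega^n_{A^n}\otimes\omega^n_{B^n}$. Take a Neyman--Pearson projector $T_n=\{\rho_{AB}^{\otimes n}\ge e^{n\lambda}\omega^n_{A^n}\otimes\omega^n_{B^n}\}$. The Audenaert/Hayashi-type inequality gives, for $s\in(0,1)$,
\begin{align}
\Tr[\rho^{\otimes n}(\id-T_n)] \le e^{n\lambda(1-s)}Q_s(\rho^{\otimes n}\|\omega\otimes\omega),\qquad
\Tr[\omega\otimes\omega\, T_n]\le e^{-n\lambda s}Q_s(\rho^{\otimes n}\|\omega\otimes\omega).
\end{align}
With Theorem~\ref{thm:properties_petz}~(f), the quantity $\frac1n\log Q_{1-\alpha}(\rho^{\otimes n}\|\omega\otimes\omega)$ converges to $-\alpha I^{\downarrow\downarrow}_{1-\alpha}=-(1-\alpha)L^{\downarrow\downarrow}_\alpha$. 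Optimizing $\lambda$ so that the type-I exponent equals $R$ and rewriting via~\eqref{eq:reverse} yields the lower bound $\sup_{s\in(0,1)}\frac{1-s}{s}(L_s^{\downarrow\downarrow}-R)$ for all $R>0$. Since each $\beta_n$ is dominated by the $\omega$-error times $\mathrm{poly}(n)$, this bound holds for both alternative hypotheses, and in particular it holds for the smaller of them, the i.i.d.\ one.

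\textbf{Converse.} For the i.i.d.\ alternative (which gives the strongest converse, since it is a subset of the symmetric one), fix the minimizer $(\sigma^\star_A,\tau^\star_B)$ of $D_s(\sigma\otimes\tau\|\rho)$ given by Theorem~\ref{thm:properties_petz}~(e). Testing against the single state $\sigma^{\star\otimes n}\otimes\tau^{\star\otimes n}$ lower bounds $\hat\beta_n$. By the standard simple-hypothesis Hoeffding converse (row one of Table~\ref{tab:direct_prli}), the exponent is at most $\sup_{s\in(0,1)}\frac{1-s}{s}(D_s(\sigma^\star\otimes\tau^\star\|\rho)-R)$. One then needs a minimax exchange: the upper bound is $\inf_{\sigma,\tau}\sup_s$, while the desired expression is $\sup_s\inf_{\sigma,\tau}$.

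For $R<R^L_{1/2}$, I expect the optimal $s^\star$ to lie in $(0,1/2)$, determined by the stationarity condition $\frac{d}{ds}[\frac{1-s}{s}(L_s^{\downarrow\downarrow}-R)]=0$ (using differentiability, Theorem~\ref{thm:properties_petz}~(k)). Choosing $(\sigma^\star,\tau^\star)$ as the unique minimizer at $s^\star$, the envelope formula in (k) shows that $s^\star$ is also stationary for the fixed-pair function. By concavity of the Hoeffding objective for a fixed pair, it is then the global maximizer there, which closes the gap. The threshold $R^L_{1/2}$ is precisely where $s^\star\to1/2$, as follows from the form of the derivative. For $R>T(A:B)_\rho$, pick the pair achieving $T$ (or a near-minimizer); its simple-test exponent is $0$ because $R>D(\sigma\otimes\tau\|\rho)$, so both sides equal $0$. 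For the right-hand side this holds because $L_s^{\downarrow\downarrow}\le T$ by monotonicity in $\alpha$, part (h).

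Finally, replacing $\sup_{(0,1/2)}$ by $\sup_{(0,1)}$ is harmless. In the first range, the concavity (part (j)) makes the maximizer interior to $(0,1/2)$. In the second range, both suprema are $0$, since their terms are negative and tend to $0$ as $s\to1$ or $s$ is suitably chosen.

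The main obstacle is the converse minimax step. It needs the uniqueness and differentiability of the minimizer on $(0,1/2)$ to guarantee that the envelope argument works. It also explains why the intermediate band $[R^L_{1/2},T]$ is excluded, since outside $(0,1/2)$ joint convexity and uniqueness are unavailable.
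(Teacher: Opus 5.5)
\textbf{Overall.} Your achievability argument is the same as the paper's. It uses the Neyman--Pearson test against $\omega_{A^n}^n\otimes\omega_{B^n}^n$, the Audenaert-type bounds, the factor $g_{n,d_A}g_{n,d_B}$, and Theorem~\ref{thm:properties_petz}~(f). Your converse also follows the paper essentially step by step. It applies Nagaoka's Hoeffding converse for a fixed pair. For $R>T(A:B)_\rho$ it uses the $D$-minimizer together with $D_s\le D$. For $R<R^L_{1/2}$ it uses the envelope/stationarity argument with the unique minimizer at $\hat s\in(0,1/2)$. One small point: choose the threshold $\lambda_n$ from the finite-$n$ quantity $D_{1-s}(\rho_{AB}^{\otimes n}\|\omega_{A^n}^n\otimes\omega_{B^n}^n)$, as the paper does. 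That way the constraint that the type-I error is at most $e^{-nR}$ holds exactly for every $n$, not only asymptotically.

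\textbf{The gap: your final paragraph.} Take $R>T(A:B)_\rho$, so $T(A:B)_\rho<\infty$, and $s\in(0,1/2)$. Then $\frac{1-s}{s}>1$. By monotonicity in the order, $L_s^{\downarrow\downarrow}(A:B)_\rho-R\le L_{1/2}^{\downarrow\downarrow}(A:B)_\rho-R\le T(A:B)_\rho-R<0$. Hence
\[
\sup_{s\in(0,\frac12)}\frac{1-s}{s}\bigl(L_s^{\downarrow\downarrow}(A:B)_\rho-R\bigr)\le L_{1/2}^{\downarrow\downarrow}(A:B)_\rho-R<0 .
\]
The exponent, however, is always $\ge 0$, since $\hat\beta_n(e^{-nR})\le 1$ (use the test $T_n=\id$, which always accepts the null hypothesis). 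The terms only approach $0$ as $s\to 1^-$, which lies outside $(0,1/2)$. So your claim that ``both suprema are $0$'' is false, and no choice of $s$ can fix it.

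A product state makes this concrete. There $T(A:B)_\rho=L_s^{\downarrow\downarrow}(A:B)_\rho=0$ and the exponent is $0$, but the $(0,1/2)$-supremum equals $-R$.

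\textbf{What actually holds.} On $(T(A:B)_\rho,\infty)$ only the $\sup_{s\in(0,1)}$ form can be true. That is also all your argument establishes there, and all the paper's proof establishes: its Case~1 ends at $\lim_{s\to1^-}$. The two suprema coincide only on $(0,R^L_{1/2})$, where the maximizer $\hat s$ lies in $(0,1/2)$. You should state the result in that form rather than try to prove the $(0,1/2)$ version for $R>T(A:B)_\rho$.
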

\end{boxed}

\begin{proof}
    See Appendix~\ref{proof:doubly}.
\end{proof}

\begin{rem}[($R_{1/2}^L$ vs. $R_{1/2}$)]\label{rem:r12l}
    Note that $R_{1/2}^L$ as defined in Theorem~\ref{thm:doubly} can be different from $R_{1/2}$ as defined in Table~\ref{tab:direct_prmi}. 
    For examples, see Appendix~\ref{app:r12}.
\end{rem}

Previous work~\cite{berta_composite,Bjelakovic2005,Mosonyi_2015,Noetzel_2014} established Stein's theorem for a composite iid null hypothesis and simple iid alternative hypothesis. 
By reversing the roles of the null and the alternative hypothesis, this corresponds to a Sanov's theorem for a simple iid null hypothesis and a composite iid alternative hypothesis. 
Applying this result to our setting, we arrive at the following proposition for the first choice of the alternative hypothesis in~\eqref{eq:sanov-h1n}. 
The same assertion can be proved for the second choice of the alternative hypothesis in~\eqref{eq:sanov-h1n}. 
This follows as a corollary of the proof of achievability of Theorem~\ref{thm:doubly}. 
Corollary~\ref{cor:sanov} provides an operational interpretation of the tumula information.

\begin{boxed}{}
\begin{cor}[(Sanov exponent)]\label{cor:sanov}
Let $\rho_{AB}\in \mathcal{S}(AB)$. 
Consider the null hypothesis $H_0^n\coloneqq \{\rho_{AB}^{\otimes n}\}$ and 
any of the following alternative hypotheses: 
\bb \label{eq:sanov-h1n}
H_1^n\coloneqq \{\sigma_{A}^{\otimes n}\otimes \tau_{B}^{\otimes n}\}_{\sigma_{A}\in \mathcal{S}(A),\tau_{B}\in \mathcal{S}(B)}, 
\quad 
H_1^n\coloneqq \{\sigma_{A^n}\otimes \tau_{B^n}\}_{\sigma_{A^n}\in \mathcal{S}_{\sym}(A^n),\tau_{B^n}\in \mathcal{S}_{\sym}(B^n)}.
\ee
Then, for any $\varepsilon\in (0,1)$
\begin{align}
{\rm Sanov}_\varepsilon(H_0\| H_1) &= T(A:B)_\rho,
\end{align}
and as a consequence, 
${\rm Sanov}(H_0\| H_1) = T(A:B)_\rho$.
\end{cor}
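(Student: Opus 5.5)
We prove ${\rm Sanov}_\varepsilon(H_0\|H_1)=T(A:B)_\rho$ for every fixed $\varepsilon\in(0,1)$. The statement ${\rm Sanov}(H_0\|H_1)=T(A:B)_\rho$ then follows by letting $\varepsilon\to 0^+$. We show a converse (upper bound) and an achievability (lower bound), handling the two alternative hypotheses in \eqref{eq:sanov-h1n} together where possible.

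\emph{Converse.} Call the two families $H_1^{n,\mathrm{iid}}$ and $H_1^{n,\sym}$. Since $H_1^{n,\mathrm{iid}}\subseteq H_1^{n,\sym}$, the worst-case type-II error only grows when passing to the larger set, so it suffices to bound the exponent for the iid family. Fix any $\sigma_A,\tau_B$ with $D(\sigma_A\otimes\tau_B\|\rho_{AB})<\infty$; if no such pair exists, $T=\infty$ and there is nothing to prove.

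Any test $T_n$ with $\beta_n(T_n)\le\varepsilon$ satisfies $\Tr[(\sigma_A\otimes\tau_B)^{\otimes n}T_n]\le\varepsilon$. Hence $\{\id-T_n\}$ is a test for the simple pair $(\sigma_A\otimes\tau_B)^{\otimes n}$ versus $\rho_{AB}^{\otimes n}$ with type-I error at most $\varepsilon$, whose type-II error equals $\Tr[\rho_{AB}^{\otimes n}(\id-T_n)]$. The strong converse of the quantum Stein lemma (valid for every $\varepsilon\in(0,1)$) gives
\[
\limsup_{n\to\infty}-\tfrac1n\log \hat\alpha_n(\varepsilon)\le D(\sigma_A\otimes\tau_B\|\rho_{AB}).
\]
Taking the infimum over $\sigma_A,\tau_B$ yields $\limsup_n -\frac1n\log\hat\alpha_n(\varepsilon)\le T(A:B)_\rho$.

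\emph{Achievability.} It suffices to treat the larger family $H_1^{n,\sym}$. For the iid family one could instead quote the composite Stein theorem~\cite{berta_composite,Bjelakovic2005,Mosonyi_2015,Noetzel_2014} with roles reversed, but a single argument covers both. Every permutation-invariant state is dominated by the universal state: $\sigma_{A^n}\le g_{n,d_A}\,\omega^n_{A^n}$, and similarly on $B$, where $g_{n,d}$ is polynomial in $n$. Hence, for any test,
\[
\beta_n(T_n)\le g_{n,d_A}g_{n,d_B}\Tr[(\omega^n_{A^n}\otimes\omega^n_{B^n})T_n].
\]
This reduces the problem to a simple test between $\rho_{AB}^{\otimes n}$ and $\omega^n\otimes\omega^n$ at polynomial cost.

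Take $T_n$ to be a Neyman–Pearson projector
\[
T_n=\{\rho_{AB}^{\otimes n}\ge e^{-nr}\,\omega^n_{A^n}\otimes\omega^n_{B^n}\}.
\]
Audenaert-type bounds then give, for every $s\in(0,1)$, an exponential bound on $\Tr[(\omega\otimes\omega)T_n]$. The type-I error $\Tr[\rho^{\otimes n}(\id-T_n)]$ is controlled by a Petz–Rényi quantity of the form $D_{1-\alpha}(\rho_{AB}^{\otimes n}\|\omega^n\otimes\omega^n)$. By Theorem~\ref{thm:properties_petz}(f),
\[
\frac1n D_{1-\alpha}(\rho_{AB}^{\otimes n}\|\omega^n\otimes\omega^n)\to \frac{1-\alpha}{\alpha}L_\alpha^{\downarrow\downarrow}(A:B)_\rho .
\]
Choosing $r$ slightly above the appropriate threshold and optimizing over $\alpha\to 1^-$, the type-II error becomes at most $\varepsilon$ for large $n$, while the type-I exponent approaches $L_1^{\downarrow\downarrow}(A:B)_\rho=T(A:B)_\rho$ by Theorem~\ref{thm:properties_petz}(i),(l). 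The limit \eqref{eq:tab_omega} in Theorem~\ref{thm:properties}(d) is exactly the tool for performing this $\alpha\to1$ exchange jointly with $n$.

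Alternatively, one can read achievability off Theorem~\ref{thm:doubly}. For $R>T$ its formula gives reverse direct exponent $\sup_s\frac{1-s}{s}(L_s^{\downarrow\downarrow}-R)=0$ at the boundary, and swapping the roles of the errors yields tests whose type-II error vanishes while the type-I error decays at any rate below $T$. This is the route suggested in the text: it is a corollary of the achievability proof of Theorem~\ref{thm:doubly}.

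\emph{Main obstacle.} The hard step is exchanging the limits $n\to\infty$ and $\alpha\to1$ (equivalently $s\to 0$ in the reversed parametrization). A fixed-$\alpha$ Rényi bound only gives exponent $L_\alpha^{\downarrow\downarrow}<T$. To reach $T$ itself, we use continuity in $\alpha$ (Theorem~\ref{thm:properties_petz}(i)) together with an $\varepsilon$-room argument: fix $\delta>0$, then pick $\alpha$ with $L_\alpha^{\downarrow\downarrow}\ge T-\delta$, then send $n\to\infty$. The case $T=\infty$ is handled separately by taking an arbitrary finite target.
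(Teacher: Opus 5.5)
Your proposal is correct and follows essentially the paper's proof. The upper bound is the converse half of the composite Stein/Sanov theorem that the paper cites for the i.i.d.\ family; you make it explicit via the strong converse for the single pair $(\sigma_A\otimes\tau_B)^{\otimes n}$ versus $\rho_{AB}^{\otimes n}$ and transfer it to the larger permutation-invariant family by inclusion. The lower bound is exactly the paper's reuse of the universal-state test from the achievability proof of Theorem~\ref{thm:doubly}, with every rate $R<T(A:B)_\rho$ reached by taking the R\'enyi order close to $1$.

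There are two small slips. In your ``alternatively'' paragraph the relevant regime is $R<T(A:B)_\rho$, not $R>T(A:B)_\rho$, where the exponent is $0$ and yields nothing. For $R<T(A:B)_\rho$ the achievability bound \eqref{eq:doubly-ach}, which holds for all $R>0$, is strictly positive, so the type-II error vanishes while the type-I error is at most $e^{-nR}$, with no role swap needed. In your own test the type-II bound behaves like $e^{n(r(1-\alpha)-\alpha I_{1-\alpha}^{\downarrow\downarrow}(A:B)_\rho)+o(n)}$, so it vanishes only if $r<L_\alpha^{\downarrow\downarrow}(A:B)_\rho$; that is, $r$ must be taken slightly below this value, not above it.
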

\end{boxed}{}

\begin{proof}
    See Appendix~\ref{proof:sanov}.
\end{proof}

\subsection{Zero rate limits}
The following corollary provides an operational interpretation of the lautum, umlaut, tumula, and the mutual information, respectively. 
Parts (a), (b), and~(c) follow from the previous results on the direct exponent outlined in Table~\ref{tab:direct_prmi}. 
Part~(d) follows from the results on the reverse direct exponent outlined in Table~\ref{tab:direct_prli}. 

\begin{cor}[(Zero rate limits)]\label{cor:lowrate}
    Let $\rho_{AB}\in \mathcal{S}(AB)$. 
    Consider the null hypothesis $H_0^n\coloneqq \{\rho_{AB}^{\otimes n}\}$ and any of the following alternative hypotheses.
    \begin{enumerate}[label=(\alph*)]
    \item Let $H_1^n\coloneqq \{\rho_A^{\otimes n}\otimes \rho_B^{\otimes n}\}$. 
    Then, 
    \begin{align}
    \lim_{R\to 0^+}\lim_{n\rightarrow\infty}-\frac{1}{n}\log \hat{\alpha}_{n}(e^{-nR})=L_1^{\uparrow\uparrow}(A:B)_\rho=L(A:B)_\rho.
\end{align}
    \item Let $H_1^n\coloneqq \{\rho_A^{\otimes n}\otimes \tau_{B}^{\otimes n}\}_{\tau_{B}\in \mathcal{S}(B)}$,  
    $H_1^n\coloneqq \{\rho_A^{\otimes n}\otimes \tau_{B^n}\}_{\tau_{B^n}\in \mathcal{S}_{\sym}(B^n)}$, or
    $H_1^n\coloneqq \{\rho_A^{\otimes n}\otimes \tau_{B^n}\}_{\tau_{B^n}\in \mathcal{S}(B^n)}$.
    Then, 
    \begin{align}
    \lim_{R\to 0^+}\lim_{n\rightarrow\infty}-\frac{1}{n}\log \hat{\alpha}_{n}(e^{-nR})=L_1^{\uparrow\downarrow}(A:B)_\rho=U(A:B)_\rho.
    \end{align}
    \item Let $H_1^n\coloneqq \{\sigma_A^{\otimes n}\otimes \tau_{B}^{\otimes n}\}_{{\sigma_A\in \mathcal{S}(A),\\ \tau_{B}\in \mathcal{S}(B)}}$ or 
    $H_1^n\coloneqq \{\sigma_{A^n}\otimes \tau_{B^n}\}_{\sigma_{A^n}\in \mathcal{S}_{\sym}(A^n),\tau_{B^n}\in \mathcal{S}_{\sym}(B^n)}$.
    Let $R_{1/2}$ be defined as in Table~\ref{tab:direct_prmi}. 
    If $R_{1/2}=0$, then
    \begin{align}
    \lim_{R\to 0^+}\lim_{n\rightarrow\infty}-\frac{1}{n}\log \hat{\alpha}_{n}(e^{-nR})=L_1^{\downarrow\downarrow}(A:B)_\rho=T(A:B)_\rho.
    \end{align}
    \item Let $H_1^n$ be given by any of the expressions in (a) or (b). Then, 
\begin{align}
    \lim_{R\to 0^+}\lim_{n\rightarrow\infty}-\frac{1}{n}\log \hat{\beta}_{n}(e^{-nR})=I(A:B)_\rho.
\end{align}
    Moreover, the same holds if $H_1^n$ is given by the expression in (c) if $R_{1/2}^L>0$, where $R_{1/2}^L$ is defined as in~\eqref{eq:def-rl12}.
\end{enumerate}
\end{cor}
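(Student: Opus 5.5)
The plan is to read each part off the corresponding single-letter exponent formula in Tables~\ref{tab:direct_prmi} and~\ref{tab:direct_prli}, let $R\to 0^+$, and then use the reversal identities~\eqref{eq:prli_prmi0}--\eqref{eq:prli_prmi2} together with monotonicity and continuity in the order. One elementary fact is used throughout. Each exponent has the form $E(R)=\sup_{s\in S} f_s(R)$ with $f_s(R)=\frac{1-s}{s}(X_s-R)$, and each $f_s$ is nonincreasing in $R$. Hence two monotone limits commute:
\begin{align}
\lim_{R\to 0^+}E(R)=\sup_{R>0}\sup_{s\in S}f_s(R)=\sup_{s\in S}\sup_{R>0} f_s(R)=\sup_{s\in S}\frac{1-s}{s}X_s .
\end{align}
So for every part it suffices to (i) know that the exponent formula holds for all sufficiently small $R>0$, and (ii) evaluate $\sup_{s\in S}\frac{1-s}{s}X_s$.

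For (a) and (b), the formulas hold for all $R>0$: part (a) by the first row of Table~\ref{tab:direct_prmi}, and part (b) by~\cite{hayashi_2016-1}, which I take to cover the symmetric and general alternatives as well. By~\eqref{eq:prli_prmi0} with $\alpha=1-s$ we have $\frac{1-s}{s}I^{\uparrow\uparrow}_s=L^{\uparrow\uparrow}_{1-s}$. The supremum over $s\in(0,1)$ is therefore $\sup_{\alpha\in(0,1)}L^{\uparrow\uparrow}_\alpha$. Since $\alpha\mapsto D_\alpha$ is nondecreasing, this equals $\lim_{\alpha\to1}L^{\uparrow\uparrow}_\alpha=L(A:B)_\rho$ by~\eqref{eq:prli-limits}. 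Part (b) is identical with~\eqref{eq:prli_prmi1} and $U$.

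For (d), I use Table~\ref{tab:direct_prli} and Theorem~\ref{thm:singly}. Here $\frac{1-s}{s}L_s=I_{1-s}$, and $\sup_{s\in(0,1)}I_{1-s}=\lim_{\beta\to1}I_\beta=I(A:B)_\rho$ by monotonicity of the PRMI in the order and the stated limits. In the doubly minimized case, Theorem~\ref{thm:doubly} gives the formula for $R\in(0,R^L_{1/2})$, which is nonempty precisely when $R^L_{1/2}>0$. With $S=(0,\frac12)$ the supremum becomes $\sup_{\beta\in(1/2,1)}I^{\downarrow\downarrow}_\beta=I(A:B)_\rho$ by the same monotonicity and limit.

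Part (c) is the main obstacle. The direct formula of~\cite{burri2025doubly} uses $s\in(\frac12,1)$, which corresponds to $\alpha=1-s\in(0,\frac12)$. Naively, the zero-rate limit is therefore
\begin{align}
\sup_{s\in(1/2,1)}\frac{1-s}{s}I_s^{\downarrow\downarrow}=\sup_{\alpha\in(0,1/2)}L^{\downarrow\downarrow}_\alpha=L^{\downarrow\downarrow}_{1/2}(A:B)_\rho .
\end{align}
The hypothesis $R_{1/2}=0$ first guarantees that the formula holds for all $R>0$. The remaining task is to show that this hypothesis forces $L^{\downarrow\downarrow}_{1/2}=T$. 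Set $g(s)\coloneqq\frac{1-s}{s}I^{\downarrow\downarrow}_s=L^{\downarrow\downarrow}_{1-s}$. A direct computation gives $g'(\frac12^+)=\partial_{s^+}I_s|_{1/2}-4I_{1/2}=-4R_{1/2}$. Thus $R_{1/2}=0$ means that $g$ is stationary from the right at $s=\frac12$.

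Next, $h(s)\coloneqq(s-1)I^{\downarrow\downarrow}_s$ is convex on $(0,1)$; this is the PRMI counterpart of Theorem~\ref{thm:properties_petz}~(j), and I would transfer it via~\eqref{eq:prli_prmi2}. Writing $u=1/s$, the function $-g=h(s)/s=u\,h(1/u)$ is the perspective of $h$, hence convex in $u$. So $g$ is concave in $u$, and a stationary point at $u=2$ is a global maximizer over $s\in(0,1)$. Consequently
\begin{align}
L^{\downarrow\downarrow}_{1/2}=\sup_{s\in(0,1)}g(s)=\sup_{\alpha\in(0,1)}L^{\downarrow\downarrow}_\alpha=T(A:B)_\rho,
\end{align}
where the last step uses monotonicity and~\eqref{eq:l1-t} (Theorem~\ref{thm:properties_petz}~(h),(i)). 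Two points need care. First, the one-sided derivative: if $g$ has a kink at $\frac12$, the condition $g'(\frac12^+)=0$ must still suffice. Concavity in $u$ handles this, since it only requires the right derivative in $s$, i.e.\ the left derivative in $u$, to vanish. Second, the convexity of $h$ must be available for the doubly minimized PRMI on all of $(0,1)$.
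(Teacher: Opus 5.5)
Your proof is correct. For parts (a), (b) and (d) it follows the paper's argument: interchange $\sup_{R>0}$ with $\sup_s$, convert the exponent via~\eqref{eq:prli_prmi0}--\eqref{eq:prli_prmi2}, and use monotonicity in the order.

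You genuinely depart from the paper in part (c). The paper only says that (c) follows ``analogously'' from the third row of Table~\ref{tab:direct_prmi}. Carried out literally, that argument stops at $\sup_{s\in(1/2,1)}\frac{1-s}{s}I_s^{\downarrow\downarrow}(A:B)_\rho=\sup_{\alpha\in(0,1/2)}L_\alpha^{\downarrow\downarrow}(A:B)_\rho=L_{1/2}^{\downarrow\downarrow}(A:B)_\rho$. The paper's proof never explains why this quantity equals $T(A:B)_\rho$.

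Your argument supplies exactly this missing step. You note that $R_{1/2}=0$ means the right derivative of $g(s)=\frac{1-s}{s}I_s^{\downarrow\downarrow}(A:B)_\rho$ vanishes at $s=\frac12$. You then use that $g$ is concave in $u=1/s$, being minus the perspective of the convex function $(s-1)I_s^{\downarrow\downarrow}(A:B)_\rho$. That convexity on $(0,1)$ is available from~\cite{burri2025doubly}, or from Theorem~\ref{thm:properties_petz}~(j) via~\eqref{eq:prli_prmi2}. Your treatment of the one-sided derivative is also right. For a concave function $G(u)=g(1/u)$, a vanishing left derivative at $u=2$ already gives $G(u)\le G(2)$ for all $u$, so a kink at $s=\frac12$ does no harm.

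The paper's route is shorter but, as written, only reaches $L^{\downarrow\downarrow}_{1/2}$. Yours costs one extra convexity lemma and actually proves the stated identity.

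Your argument also yields a bit more. Since $G(u)=L^{\downarrow\downarrow}_{1-1/u}(A:B)_\rho$ is nondecreasing in $u$ (monotonicity in $\alpha$), and its left derivative at $u=2$ equals $R_{1/2}$, you get $R_{1/2}\ge 0$ in general. Moreover, $R_{1/2}=0$ forces $\alpha\mapsto L_\alpha^{\downarrow\downarrow}(A:B)_\rho$ to be constant on $[\frac12,1]$.
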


\begin{proof}
    See Appendix~\ref{proof:lowrate}.
\end{proof}

\section{Tumula information of channels}\label{sec:communication}

The aim of this section is the introduction and discussion of the tumula information as an information measure for quantum channels, in the same spirit as for the mutual information (which yields the capacity of a channel), the lautum information \cite{Lautum_08}, and the umlaut information \cite{Filippo25,Filippo25b}.

\subsection{Quantum channel tumula information}

Reversed mutual-information variants extend naturally from quantum states to quantum channels. 
For a quantum channel $\pazocal{N}$ from $A$ to $B$, the \emph{quantum channel umlaut information} is defined as~\cite{Filippo25b}
\begin{equation}
U(\pazocal{N})\coloneqq 
\sup_{\Psi_{A'A}} 
U(A':B)_{(\mathrm{Id}\otimes\pazocal{N})(\Psi_{A'A})},
\end{equation}
where $A'$ is a Hilbert space isomorphic to $A$, 
and the supremum is over all pure states $\Psi_{A'A}$. 

Motivated by this, we define the \emph{quantum channel tumula information} as the analogous optimization of the tumula information instead of the umlaut information.
\begin{Def}[(Quantum channel tumula information)]\label{def:channel_lautum}
Let $\pazocal{N} \in \cptp(A,B)$. 
Then the \emph{quantum channel tumula information of $\pazocal{N}$} is defined as
\bb
T(\pazocal{N}) \coloneqq \sup_{\Psi_{A'A}} 
T(A':B)_{(\mathrm{Id}\otimes\pazocal{N})(\Psi_{A'A})} 
= \sup_{\rho_{A'}\in \mathcal{S}(A')} 
\inf_{\sigma_{A'}\in \mathcal{S}(A')}\inf_{\tau_B\in \mathcal{S}(B)} 
D\left(\sigma_{A'}\otimes \tau_B \,\middle\|\, \rho_{A'}^{1/2} J_{A'B}^{(\pazocal{N})} \rho_{A'}^{1/2}\right),
\label{channel_umlaut}
\ee
where $A'$ is a Hilbert space isomorphic to $A$, 
$\Psi_{A'A}$ ranges over all pure states, 
and $J_{A'B}^{(\pazocal{N})}$ is the (unnormalized) Choi--Jamio\l kowski matrix of $\pazocal{N}$, defined as
\bb
J^{(\pazocal{N})}_{A'B}\coloneq (\mathrm{Id}_{A'}\otimes\pazocal{N}_{A\to B})(\Phi_{A'A}),
\label{unnormalized_Choi}
\ee
where $\Phi_{A'A}\coloneqq\sum_{i,j\in [d_A]}\ketbraa{ii}{jj}_{A'A}$ is the (unnormalized) maximally entangled state between $A'$ and $A$.
\end{Def}

The following proposition asserts that the quantum channel tumula information is super-additive under the tensor product of quantum channels. 
This follows immediately from the additivity of the tumula information for quantum states.
\begin{prop}[(Super-additivity)] \label{prop:superadditivity}
Let $\pazocal{N}_1$ and $\pazocal{N}_2$ be quantum channels. Then
\bb
    T(\pazocal{N}_1\otimes\pazocal{N}_2)\geq T(\pazocal{N}_1)+T(\pazocal{N}_2).
\ee
\end{prop}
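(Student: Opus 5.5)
The plan is to restrict the supremum defining $T(\pazocal{N}_1\otimes\pazocal{N}_2)$ to product input states and then invoke the additivity of the tumula information for states, Theorem~\ref{thm:properties}~(b).

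\emph{Setup.} Let $\pazocal{N}_1\in\cptp(A_1,B_1)$ and $\pazocal{N}_2\in\cptp(A_2,B_2)$. Take arbitrary pure states $\Psi^{(1)}_{A_1'A_1}$ and $\Psi^{(2)}_{A_2'A_2}$. Their tensor product $\Psi^{(1)}\otimes\Psi^{(2)}$ is a pure state on $A'A$ with $A'\coloneqq A_1'A_2'$ isomorphic to $A\coloneqq A_1A_2$. It is therefore an admissible input in Definition~\ref{def:channel_lautum} for the channel $\pazocal{N}_1\otimes\pazocal{N}_2$ from $A_1A_2$ to $B_1B_2$.

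\emph{Factorization of the output.} After reordering tensor factors, which is a local unitary on each side and leaves $T$ unchanged, the output state factorizes:
\begin{align}
(\mathrm{Id}\otimes\pazocal{N}_1\otimes\pazocal{N}_2)(\Psi^{(1)}\otimes\Psi^{(2)})=\rho^{(1)}_{A_1'B_1}\otimes\rho^{(2)}_{A_2'B_2},
\end{align}
where $\rho^{(i)}\coloneqq(\mathrm{Id}\otimes\pazocal{N}_i)(\Psi^{(i)})$.

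\emph{Invariance under the reordering.} To justify that $T$ is unchanged by the reordering, let $U_{A'}$ and $V_B$ be the unitaries implementing it. The map $(\sigma,\tau)\mapsto(U\sigma U^\dagger,V\tau V^\dagger)$ is a bijection of the feasible product states, and $D$ is unitarily invariant. Hence $T(A':B)_{(U\otimes V)\rho(U\otimes V)^\dagger}=T(A':B)_\rho$.

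\emph{Applying additivity.} Theorem~\ref{thm:properties}~(b) now gives
\begin{align}
T(\pazocal{N}_1\otimes\pazocal{N}_2)\geq T(A_1'A_2':B_1B_2)_{\rho^{(1)}\otimes\rho^{(2)}}=T(A_1':B_1)_{\rho^{(1)}}+T(A_2':B_2)_{\rho^{(2)}}.
\end{align}
Taking the supremum over $\Psi^{(1)}$ and $\Psi^{(2)}$ independently turns the right-hand side into $T(\pazocal{N}_1)+T(\pazocal{N}_2)$.

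\emph{The case of infinite values.} If either summand is $+\infty$, the inequality still holds, because additivity in Theorem~\ref{thm:properties}~(b) is stated in $[0,\infty]$ and the suprema are taken there as well.

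\emph{Main obstacle.} The argument is routine given the state-level additivity. The only care needed is the bookkeeping of the system reordering $A_1'A_1A_2'A_2\to A_1'A_2'A_1A_2$, together with the identification of $A'$ with $A_1'A_2'$, to match the form required in the definition.
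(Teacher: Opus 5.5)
Your proposal is correct and matches the paper's proof: restrict the supremum to product inputs $\Psi_{A_1'A_1}\otimes\Psi_{A_2'A_2}$, then invoke the state-level additivity of Theorem~\ref{thm:properties}~(b). Your extra bookkeeping on the subsystem reordering and on infinite values is a harmless refinement that the paper leaves implicit.
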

\begin{proof}
    See Appendix~\ref{proof:superadditivity}.
\end{proof}

We will denote by $T^\infty(\pazocal{N})$ the regularised tumula information of the channel $\pazocal{N}$, defined as
\bb\label{eq:regularised}
    T^\infty(\pazocal{N}) \coloneqq \lim_{n\to \infty} \frac{1}{n}T(\pazocal{N}^{\otimes n}).
\ee
Due to the super-additivity of the channel tumula information, by Fekete's lemma this limit exists and can be equivalently written as
\bb
    T^\infty(\pazocal{N}) = \sup_{n\geq 1} \frac{1}{n}T(\pazocal{N}^{\otimes n}).
\ee

\subsection{The case of classical-quantum channels}

In this section, we consider the evaluation of the quantum channel tumula information for the special class of quantum channels called classical-quantum (CQ) channels. 
A \emph{CQ channel} is a quantum channel $\pazocal{N}\in \cptp(A,B)$ of the form 
$\pazocal{N}(\cdot)\coloneqq \sum_{x\in \mathcal{X}}\bra{x}\cdot \ket{x} \rho_x$ where $\mathcal{X}$ is an arbitrary finite set, 
$\{\ket{x}_A\}_{x\in \mathcal{X}}$ is an orthonormal basis for $A$, 
and $\rho_x\in \mathcal{S}(B)$ for all $x\in \mathcal{X}$.

For CQ channels, the quantum channel umlaut information can be expressed as~\cite[Proposition~19]{Filippo25b}
\begin{equation}\label{eq:u(n)_1}
    U(\pazocal{N}) = \sup_{P_X} (-\log Z(P_X))
    \quad \text{where}\quad 
    Z(P_X) \coloneqq \Tr\exp\Big(\sum_{x\in \mathcal{X}} P_X(x)\log \rho_x\Big),
\end{equation}
and the supremum is over all probability distributions $P_X$.

The following theorem establishes a similar expression for the quantum channel tumula information of CQ channels. 
\begin{thm}[(Quantum channel tumula information of CQ-channels)]\label{thm:CQ}
    Let $\pazocal{N}(\cdot) = \sum_{x\in \mathcal{X}}\bra{x}\cdot\ket{x}\rho_x$ be a CQ channel. Then the quantum channel tumula information of $\pazocal{N}$ can be expressed as
    \begin{align}\label{eq:alternative_form_channel}
        T(\pazocal{N}) = \sup_{P_X} \min_{Q_X} \big( D(Q_X\|P_X) - \log Z(Q_X)\big) 
        \quad \text{where}\quad 
         Z(Q_X) \coloneqq \Tr\exp\left(\sum_{x\in \mathcal{X}} Q_X(x)\log \rho_x\right),
    \end{align}
    and the optimizations are over probability distributions $P_X,Q_X$.
\end{thm}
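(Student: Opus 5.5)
The plan is to reduce the optimization over pure inputs $\Psi_{A'A}$ to an optimization over classical distributions $P_X$. For each $P_X$ we then evaluate the tumula information of the associated classical-quantum state $\rho_{XB}\coloneqq\sum_x P_X(x)\ketbra{x}_X\otimes\rho_x$ in closed form.

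\emph{Step 1 (structure of the output state).} Fix a pure $\Psi_{A'A}$ and write $\ket{v_x}_{A'}\coloneqq(\id_{A'}\otimes\bra{x}_A)\ket{\Psi}$. Then $P_X(x)\coloneqq\|v_x\|^2=\bra{x}\Psi_A\ket{x}$ is a probability distribution. Setting $\ket{\phi_x}\coloneqq\ket{v_x}/\|v_x\|$, the output state is
\begin{align}
(\mathrm{Id}\otimes\pazocal{N})(\Psi)=\sum_{x\in\mathcal{X}}P_X(x)\,\ketbra{\phi_x}_{A'}\otimes\rho_x .
\end{align}
The vectors $\ket{\phi_x}$ need not be orthogonal. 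This state is obtained from $\rho_{XB}$ by the local measure-and-prepare channel $\ketbra{x}\mapsto\ketbra{\phi_x}$ acting on $X$. Monotonicity under local operations (Theorem~\ref{thm:properties_petz}~(a) with $\alpha=1$, i.e.\ data processing for $D$, using that the channel maps products to products) gives $T(A':B)_{(\mathrm{Id}\otimes\pazocal{N})(\Psi)}\le T(X:B)_{\rho_{XB}}$.

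Conversely, the choice $\ket{\Psi}=\sum_x\sqrt{P_X(x)}\ket{x}_{A'}\ket{x}_A$ yields exactly $\rho_{XB}$, with $X$ identified with $A'$. Every distribution $P_X$ arises in this way, so $T(\pazocal{N})=\sup_{P_X}T(X:B)_{\rho_{XB}}$.

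\emph{Step 2 (reduce $\sigma_{A'}$ to a distribution).} Let $\Delta$ be the pinching in the basis $\{\ket{x}\}$. It fixes $\rho_{XB}$ and maps $\sigma\otimes\tau$ to $\Delta(\sigma)\otimes\tau$. Data processing therefore gives $D(\sigma\otimes\tau\|\rho_{XB})\ge D(\Delta(\sigma)\otimes\tau\|\rho_{XB})$, so we may restrict to diagonal $\sigma=\sum_xQ_X(x)\ketbra{x}$. For such $\sigma$ the relative entropy decomposes blockwise:
\begin{align}
D(Q_X\otimes\tau_B\|\rho_{XB})=D(Q_X\|P_X)+\sum_x Q_X(x)\,D(\tau_B\|\rho_x).
\end{align}

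\emph{Step 3 (minimize over $\tau_B$).} Put $G\coloneqq\exp\big(\sum_xQ_X(x)\log\rho_x\big)$ and $Z(Q_X)=\Tr G$. Then
\begin{align}
\sum_xQ_X(x)D(\tau\|\rho_x)=-H(\tau)-\Tr\Big[\tau\sum_xQ_X(x)\log\rho_x\Big]=D(\tau\|G/Z(Q_X))-\log Z(Q_X).
\end{align}
By non-negativity of relative entropy (the Gibbs variational principle), the infimum over $\tau$ equals $-\log Z(Q_X)$ and is attained at $\tau=G/Z(Q_X)$. This is consistent with Theorem~\ref{thm:properties}~(a).

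Combining Steps 1--3 gives $T(\pazocal{N})=\sup_{P_X}\inf_{Q_X}\big(D(Q_X\|P_X)-\log Z(Q_X)\big)$. The infimum over $Q_X$ is attained because the objective is lower semicontinuous on the compact simplex; it is an infimum of continuous-in-$Q_X$ functions $Q_X\mapsto D(Q_X\otimes\tau\|\rho_{XB})$.

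\emph{Main obstacle.} I expect the delicate step to be the support bookkeeping in Step 3. When the $\rho_x$ are not full rank, $\log\rho_x$ must be understood on the support of $\rho_x$. The $\tau$ with $D(\tau\|\rho_x)<\infty$ for all $x$ with $Q_X(x)>0$ must then be supported on $\bigcap_{x:Q_X(x)>0}\operatorname{supp}\rho_x$. Accordingly, the exponential in $Z(Q_X)$ has to be interpreted as being taken on that intersection, with $Z(Q_X)=0$ and $-\log Z=\infty$ if the intersection is trivial. I would handle this by restricting all operators to this subspace before applying the Gibbs variational principle, and by checking that the convention matches the one used in~\eqref{eq:u(n)_1}.
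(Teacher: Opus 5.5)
Your argument is correct, but it takes a different route from the paper's. The paper asserts without justification that
$T(\pazocal{N})=\sup_{P_X}\min_{\sigma_A,\tau_B}D(\sigma_A\otimes\tau_B\|\sum_xP_X(x)\ketbra{x}\otimes\rho_x)$.
It then plugs in the Lagrangian-derived partial minimizer $\tau_B^\star\propto\exp(\sum_xQ_X(x)\log\rho_x)$ from Theorem~\ref{thm:properties}~(a), which gives $-H(\sigma_A)-\log Z(Q_X)-\sum_xQ_X(x)\log P_X(x)$. Finally it concludes that the optimal $\sigma_A$ is diagonal because the Gibbs-form partial minimizer $\sigma_A^\star\propto\exp(\Tr_B[\tau_B\log\rho_{XB}])$ is diagonal.

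You instead make every step a global statement:
\begin{itemize}
\item The measure-and-prepare channel plus data processing proves the reduction $\sup_{\Psi}=\sup_{P_X}$ that the paper takes for granted.
\item Pinching plus data processing shows that diagonal $\sigma$ suffices for every $\tau$, not just at a stationary point.
\item The block decomposition plus the Gibbs variational principle gives the exact infimum over $\tau$. The paper uses this same identity only later, in proving Proposition~\ref{prop:TU_CQ}.
\end{itemize}
The paper's version is shorter and reuses its general partial-minimizer formulas. However, it rests on first-order conditions whose support hypotheses are not checked in this application, and it ignores rank-deficient $\rho_x$. Yours is self-contained and makes the support issue explicit. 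As a bonus, your intermediate expression $D(Q_X\|P_X)+\sum_xQ_X(x)D(\tau\|\rho_x)$, minimized over $Q_X$ first, gives Proposition~\ref{prop:TU_CQ} in one line.

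Two small corrections.

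\emph{Attainment of the minimum over $Q_X$.} Your justification is wrong as stated. An infimum of continuous functions is upper, not lower, semicontinuous, and $Q_X\mapsto D(Q_X\otimes\tau\|\rho_{XB})$ is only lower semicontinuous in any case. Instead, use that $(\sigma,\tau)\mapsto D(\sigma\otimes\tau\|\rho_{XB})$ is lower semicontinuous on the compact set $\mathcal{S}(X)\times\mathcal{S}(B)$, so it has a minimizer $(\sigma^\ast,\tau^\ast)$. By your Steps 2--3, the diagonal of $\sigma^\ast$ then attains the minimum over $Q_X$.

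\emph{Restriction to the common support.} Let $\mathcal{K}=\bigcap_{x:Q_X(x)>0}\mathrm{supp}\,\rho_x$ with projector $\Pi$. The operator to exponentiate is $\sum_xQ_X(x)\,\Pi(\log\rho_x)\Pi$, which is what $\Tr[\tau\log\rho_x]$ sees for $\tau$ supported on $\mathcal{K}$. It is not $\sum_xQ_X(x)\log(\Pi\rho_x\Pi)$, because $\mathcal{K}$ need not be invariant under $\rho_x$. With this reading, $Z(Q_X)$ agrees with the $\epsilon\to0$ limit of $\Tr\exp(\sum_xQ_X(x)\log(\rho_x+\epsilon\id))$.
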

\begin{proof}
    See Appendix~\ref{proof:CQ}.
\end{proof}

\begin{rem}[(Relation between~\eqref{eq:u(n)_1} and~\eqref{eq:alternative_form_channel})]
Note that if one sets $Q_X=P_X$ as an ansatz for the minimization over $Q_X$ in~\eqref{eq:alternative_form_channel}, one gets 
\begin{align}
T(\pazocal{N}) \stackrel{\eqref{eq:alternative_form_channel}}{\leq} {\displaystyle\sup_{P_X}} - \log\Tr\exp\Big(\sum_{x\in \mathcal{X}} P_X(x)\log \rho_x\Big) \eqt{\eqref{eq:u(n)_1}} U(\pazocal{N}).
\end{align}
\end{rem}

\begin{prop}[(Alternative expressions for channel umlaut and tumula information)]\label{prop:TU_CQ}
    Let \mbox{$\pazocal{N}(\cdot) = \sum_{x\in \mathcal{X}}\bra{x}\cdot\ket{x}\rho_x$} be a classical-to-quantum channel. Then
    \begin{align}\label{eq:optim}
    U(\pazocal{N}) &= \sup_{P_X} \min_\sigma \left(\sum_{x\in \mathcal{X}} P_X(x)  D(\sigma\|\rho_x) \right),
    \\
    T(\pazocal{N}) &= \sup_{P_X} \min_\sigma \left(-\log \sum_{x\in \mathcal{X}} P_X(x) e^{-D(\sigma\|\rho_x)} \right),
    \end{align}
    where the optimizations are over probability distributions $P_X$ and quantum states $\sigma\in \mathcal{S}(B)$.
\end{prop}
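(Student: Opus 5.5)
The approach is to start from the Choi-matrix expressions and reduce each inner minimization over the product state to a minimization over a single state $\sigma$ on $B$ by means of a variational (Gibbs/Donsker--Varadhan) identity.

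\emph{Setup for CQ channels.} For a CQ channel the Choi matrix is $J^{(\pazocal{N})}_{A'B}=\sum_x \ketbra{x}\otimes\rho_x$. Consider an input $\rho_{A'}$. The $A'$-part of $\rho_{A'}^{1/2}J\rho_{A'}^{1/2}$ is then a function of the diagonal of $\rho_{A'}$ only. Moreover, pinching in the $\{\ket{x}\}$ basis on the first argument cannot increase the divergence, because the second argument is block diagonal. Hence in both $U(\pazocal{N})$ and $T(\pazocal{N})$ we may restrict to classical $\rho_{A'}=\sum_x P_X(x)\ketbra{x}$ and to the output state $\rho_{XB}=\sum_x P_X(x)\ketbra{x}\otimes\rho_x$. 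Likewise we may take $\sigma_{A'}=\sum_x Q_X(x)\ketbra{x}$ diagonal. This is the same reduction that underlies Theorem~\ref{thm:CQ}, which I will take as given.

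\emph{Umlaut identity.} Put $\sigma_{A'}=P_X$ and $\tau_B=\sigma$. The divergence splits blockwise:
\[
D(P_X\otimes\sigma\|\rho_{XB})=\sum_x P_X(x)\big(\log P_X(x)-\log P_X(x)+D(\sigma\|\rho_x)\big)=\sum_x P_X(x)D(\sigma\|\rho_x).
\]
Minimizing over $\sigma$ and taking the supremum over $P_X$ gives the first formula. As a consistency check, one can note that the minimum over $\sigma$ equals $-\log Z(P_X)$ by the Gibbs variational principle, in agreement with \eqref{eq:u(n)_1}.

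\emph{Tumula identity.} For general $Q_X$ the same blockwise computation gives
\[
D(Q_X\otimes\sigma\|\rho_{XB})=D(Q_X\|P_X)+\sum_x Q_X(x)D(\sigma\|\rho_x).
\]
Now exchange the order of minimization: minimize over $Q_X$ first with $\sigma$ fixed. By the classical Donsker--Varadhan (Gibbs) variational identity,
\[
\min_{Q}\Big(D(Q\|P)+\sum_x Q(x)f(x)\Big)=-\log\sum_x P(x)e^{-f(x)},
\]
with the minimizer $Q(x)\propto P(x)e^{-f(x)}$. Apply this with $f(x)=D(\sigma\|\rho_x)$, where terms with $f=\infty$ simply drop out. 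This yields
\[
\min_\sigma\Big(-\log\sum_x P_X(x)e^{-D(\sigma\|\rho_x)}\Big).
\]
Since both infima are joint infima over $(Q_X,\sigma)$, swapping their order is free, so no minimax theorem is needed.

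\emph{Main obstacle.} The delicate step is justifying that the minimum over $\sigma$ is attained and that the CQ restriction to diagonal $\sigma_{A'}$ is valid. The diagonal restriction follows from data processing under pinching together with the block-diagonal structure. Attainment follows from lower semicontinuity of $D(\sigma\|\rho_x)$ on the compact set $\mathcal{S}(B)$: the map $\sigma\mapsto-\log\sum_x P(x)e^{-D(\sigma\|\rho_x)}$ is then lower semicontinuous with values in $[0,\infty]$, so the infimum is a minimum. Infinite values have to be handled with some care. If $\sigma\not\ll\rho_x$ for every $x$ in the support of $P_X$, the value is $+\infty$. Such $\sigma$ are never optimal, because $\sigma=\rho_{x_0}$ for any $x_0$ with $P_X(x_0)>0$ gives a finite value.
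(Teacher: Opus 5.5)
Your proof is correct and follows essentially the paper's route: after restricting to classical inputs, both swap the two infima and apply the Gibbs/Donsker--Varadhan identity in $Q_X$ for fixed $\sigma$. The difference is that you compute $D(Q_X\otimes\sigma\|\rho_{XB})=D(Q_X\|P_X)+\sum_x Q_X(x)D(\sigma\|\rho_x)$ directly, whereas the paper first minimizes out $\tau_B$ (Theorem~\ref{thm:CQ}, giving $-\log Z(Q_X)$) and then reintroduces $\sigma$ via $-\log Z(Q_X)=\min_\sigma\{-\Tr[\sigma\sum_x Q_X(x)\log\rho_x]-H(B)_\sigma\}$. Your version skips that round trip and handles supports and infinite values more cleanly.

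One caveat: your justification for the classical-input reduction does not work as stated. For non-diagonal $\rho_{A'}$, the operator $\rho_{A'}^{1/2}J^{(\pazocal{N})}_{A'B}\rho_{A'}^{1/2}=\sum_x\rho_{A'}^{1/2}\ketbra{x}\rho_{A'}^{1/2}\otimes\rho_x$ is not block diagonal in $\{\ket{x}\}$, so pinching does not apply; its $A'$-marginal is also $\rho_{A'}$ itself, not a function of the diagonal alone. The reduction is nonetheless valid. This state is the image of $\sum_x\bra{x}\rho_{A'}\ket{x}\,\ketbra{x}\otimes\rho_x$ under a local measure-and-prepare channel on $A'$, so monotonicity under local operations, which holds for both $U$ and $T$, shows that classical inputs suffice.
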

\begin{proof}
    See Appendix~\ref{proof:TU_CQ}.
\end{proof}

\subsection{Channel tumula information in the fully classical setting}
In this section, we consider classical channels from $\mathcal{X}$ to $\mathcal{Y}$ denoted by $\pazocal{W}_{Y|X}(y|x)\equiv \pazocal{W}(y|x)$ for all $x\in \mathcal{X},y\in \mathcal{Y}$. 
For a classical channel $\pazocal{W}$, the \emph{classical channel umlaut information} is defined as~\cite{Filippo25}
\begin{align}
    U(\pazocal{W})
    \coloneqq \sup_{P_X}U(X:Y)_{\pazocal{W}_{Y|X}P_X}\, ,
\end{align}
where the optimization is over probability distributions $P_X$ on $\mathcal{X}$. 

In direct analogy, we define the classical channel tumula information as follows.

\begin{Def}[(Classical channel tumula information)]\label{def:channel_classical}
Let $\pazocal{W}$ be a stochastic matrix from $\mathcal{X}$ to $\mathcal{Y}$ and 
let $X$ and $Y$ be random variables taking values in $\mathcal{X}$ and $\mathcal{Y}$. 
Then we define the \emph{classical channel tumula information of $\pazocal{W}$} as
\bb
T(\pazocal{W})
\coloneqq \sup_{P_X} T(X:Y)_{\pazocal{W}_{Y|X}P_X}
=\sup_{P_X}\inf_{Q_X,R_Y}D(Q_XR_Y\| \pazocal{W}_{Y|X}P_X)\, ,
\ee
where the optimizations are over probability distributions $P_X,Q_X,R_Y$.
\end{Def}

As in the quantum case, the additivity of the tumula information for states directly implies super-additivity of the channel tumula information.

\begin{prop}[(Super-additivity)] \label{prop:superadditivity_classical}
Let $\pazocal{W}_1$ and $\pazocal{W}_2$ be classical channels. Then
\bb
    T(\pazocal{W}_1\times\pazocal{W}_2)\geq T(\pazocal{W}_1)+T(\pazocal{W}_2).
\ee
\end{prop}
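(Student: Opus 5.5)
The plan is to restrict the supremum over input distributions on $\mathcal{X}_1\times\mathcal{X}_2$ to product distributions, and then invoke the classical version of additivity of the tumula information, Theorem~\ref{thm:properties}~(b).

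\textbf{Setup.} Write $\pazocal{W}_1\times\pazocal{W}_2$ for the product channel from $\mathcal{X}_1\times\mathcal{X}_2$ to $\mathcal{Y}_1\times\mathcal{Y}_2$, with $(\pazocal{W}_1\times\pazocal{W}_2)(y_1y_2|x_1x_2)=\pazocal{W}_1(y_1|x_1)\pazocal{W}_2(y_2|x_2)$. Fix arbitrary distributions $P_{X_1}$ and $P_{X_2}$, and use the product input $P_{X_1}P_{X_2}$. The resulting joint distribution is
\[
(\pazocal{W}_{1}P_{X_1})(\pazocal{W}_{2}P_{X_2}),
\]
a product of the two joint distributions on $X_1Y_1$ and $X_2Y_2$. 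By Definition~\ref{def:channel_classical}, $T(\pazocal{W}_1\times\pazocal{W}_2)\geq T(X_1X_2:Y_1Y_2)$ evaluated on this product distribution.

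\textbf{Classical additivity.} Next I need additivity of the classical tumula information. Embed each distribution as a CC state, diagonal in fixed product bases. For CC states the quantum tumula information equals the classical one: the quantity $L_1^{\downarrow\downarrow}=T$ is covered by Theorem~\ref{thm:properties_petz}~(n) at $\alpha=1$. The tensor product of two CC states is again the CC state of the product distribution. Theorem~\ref{thm:properties}~(b) therefore gives
\[
T(X_1X_2:Y_1Y_2)_{P\otimes P'} = T(X_1:Y_1)_P+T(X_2:Y_2)_{P'}.
\]
As a fallback if one prefers to avoid the embedding, the classical argument can be done directly. The bound "$\leq$" follows by using product minimizers. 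For "$\geq$", note that given any $Q_{X_1X_2}R_{Y_1Y_2}$, the chain rule and the fact that the reference distribution is a product give
\[
D(Q_{X_1X_2}R_{Y_1Y_2}\|PP')\ge D(Q_{X_1}R_{Y_1}\|P)+D(Q_{X_2}R_{Y_2}\|P').
\]
This holds because the relative entropy of a joint distribution against a product exceeds the sum of the marginal relative entropies, the excess being a mutual-information term.

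\textbf{Conclusion.} Combining the two steps,
\[
T(\pazocal{W}_1\times\pazocal{W}_2)\ge T(X_1:Y_1)_{\pazocal{W}_1P_{X_1}}+T(X_2:Y_2)_{\pazocal{W}_2P_{X_2}}.
\]
Taking suprema over $P_{X_1}$ and $P_{X_2}$ separately finishes the proof.

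The only mildly delicate point is justifying the classical additivity, either via the CC-state reduction or via the direct superadditivity inequality above. I expect this step to be short.
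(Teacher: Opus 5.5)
Your proposal is correct and is essentially the paper's proof: you restrict the supremum to product inputs $P_{X_1}P_{X_2}$, invoke additivity of the tumula information for classical product distributions via the CC-state reduction (Theorem~\ref{thm:properties_petz}~(n) combined with Theorem~\ref{thm:properties}~(b)), and then take the suprema over $P_{X_1}$ and $P_{X_2}$ separately. Your direct classical fallback, which lower-bounds $D(Q_{X_1X_2}R_{Y_1Y_2}\|PP')$ by the two marginal divergences with nonnegative excess $I(X_1Y_1:X_2Y_2)$, is valid as stated and is just the classical form of the entropy-subadditivity argument the paper uses to prove Theorem~\ref{thm:properties}~(b).
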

\begin{proof}
    See Appendix~\ref{proof:superadditivity_classical}.
\end{proof}

For classical channels, the channel umlaut information is given by~\cite[Proposition~18]{Filippo25}
\begin{equation}\label{eq:u(n)_2}
    U(\pazocal{W}) = \sup_{P_X} (-\log Z(P_X)) 
    \quad \text{where}\quad 
    Z(P_X) \coloneqq  \sum_{y\in \mathcal{Y}} e^{\sum_{x\in \mathcal{X}} P_X(x) \log \pazocal{W}(y|x)}
\end{equation}
while the classical channel tumula information has the following expression.
\begin{cor}[(Tumula information of classical channels)] Let $\pazocal{W}$ be a classical channel. Then
\begin{align}\label{eq:classical1}
    T(\pazocal{W}) &= \sup_{P_X} \min_{Q_X} \left( D(Q_X\|P_X) - \log Z(Q_X)\right) 
    \quad \text{where}\quad 
    Z(Q_X) \coloneqq  \sum_{y\in \mathcal{Y}} e^{\sum_{x\in \mathcal{X}} Q_X(x) \log \pazocal{W}(y|x)}
\end{align}
and the optimizations are over probability distributions $P_X,Q_X$. In addition, we have
\begin{align}\label{eq:classical2}
    T(\pazocal{W}) &= \sup_{P_X} \min_{R_Y} \left(-\log \sum_{x\in \mathcal{X}} P_X(x) e^{-D\left(R_Y\|\pazocal{W}(\cdot|x)\right)} \right),
\end{align}
where the optimizations are over probability distributions $P_X,R_Y$.
\end{cor}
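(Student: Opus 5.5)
The plan is to obtain both formulas as the special case of Theorem~\ref{thm:CQ} and Proposition~\ref{prop:TU_CQ} in which the output states commute. Given a classical channel $\pazocal{W}$, we take the CQ channel $\pazocal{N}(\cdot)=\sum_{x}\bra{x}\cdot\ket{x}\rho_x$ with diagonal outputs $\rho_x\coloneqq\sum_{y}\pazocal{W}(y|x)\ketbra{y}$ in a fixed basis $\{\ket{y}\}$ of $B$.

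First, we show that $T(\pazocal{W})=T(\pazocal{N})$.

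\emph{Direction $T(\pazocal{N})\leq T(\pazocal{W})$.} For a CQ channel the proof of Theorem~\ref{thm:CQ} reduces the supremum over pure inputs to classical input distributions. Concretely, $\rho_{A'}^{1/2}J^{(\pazocal{N})}\rho_{A'}^{1/2}$ can be dephased on $A'$ in the $\{\ket{x}\}$ basis. By data processing for $D$ under the pinching of $A'$, followed by the pinching of $B$ (which leaves $\rho_{XB}$ invariant), the inner infimum is at most its value at $\rho_{XB}=\sum_x P_X(x)\ketbra{x}\otimes\rho_x$. We can even avoid this route, since Theorem~\ref{thm:CQ} already gives a formula depending only on $P_X$ and the $\rho_x$.

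\emph{Direction $T(\pazocal{N})\geq T(\pazocal{W})$.} For the cq state $\rho_{XB}$, Theorem~\ref{thm:properties_petz}(n) with $\alpha=1$, together with \eqref{eq:l1-t}, gives $T(X:B)_{\rho_{XB}}=T(X:Y)_{\pazocal{W}P_X}$. This shows that the quantum tumula information at classical input $P_X$ equals the classical one.

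Combining the two directions, $T(\pazocal{W})=\sup_{P_X}T(X:Y)_{\pazocal{W}P_X}=T(\pazocal{N})$.

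Second, we insert the diagonal $\rho_x$ into \eqref{eq:alternative_form_channel}. Since the $\rho_x$ commute, $\sum_x Q_X(x)\log\rho_x$ is diagonal with entries $\sum_x Q_X(x)\log\pazocal{W}(y|x)$. The trace of its exponential is therefore exactly the classical $Z(Q_X)$ in \eqref{eq:classical1}. Some care is needed with zero entries. Operator logarithms are taken on supports, so we must check that the convention in Theorem~\ref{thm:CQ} matches $e^{\log 0}=0$. Equivalently, whenever $Q_X(x)>0$ and $\pazocal{W}(y|x)=0$, the $y$-entry of $Z(Q_X)$ should vanish. I expect this to follow from the same support convention used in the proof of Theorem~\ref{thm:CQ}, since there $\exp$ acts on the support of the product.

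Third, for \eqref{eq:classical2} we apply Proposition~\ref{prop:TU_CQ} with the same diagonal $\rho_x$. The remaining step, and the main obstacle, is to show that the minimization over all $\sigma\in\mathcal{S}(B)$ can be restricted to diagonal $\sigma$. For $D(\sigma\|\rho_x)$ with diagonal $\rho_x$, the pinching $\mathcal{P}$ in the $\{\ket{y}\}$ basis satisfies $D(\mathcal{P}(\sigma)\|\rho_x)\leq D(\sigma\|\rho_x)$ by data processing, since $\mathcal{P}(\rho_x)=\rho_x$. Because $t\mapsto -\log\sum_x P_X(x)e^{-t_x}$ is monotone increasing in each $t_x$, replacing $\sigma$ by $\mathcal{P}(\sigma)$ does not increase the objective. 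Hence the minimum is attained on diagonal $\sigma$, i.e., on distributions $R_Y$, for which $D(\sigma\|\rho_x)=D(R_Y\|\pazocal{W}(\cdot|x))$. This yields \eqref{eq:classical2}.
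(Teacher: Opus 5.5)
Your argument is correct in its overall structure, but it takes a different route from the paper. The paper re-runs the proofs of Theorem~\ref{thm:CQ} and Proposition~\ref{prop:TU_CQ} with probability distributions in place of operators. It writes $D(Q_XR_Y\|\pazocal{W}_{Y|X}P_X)=D(Q_X\|P_X)+\sum_x Q_X(x)D(R_Y\|\pazocal{W}(\cdot|x))$. It then minimizes over $R_Y$ with the Gibbs partial minimizer to obtain \eqref{eq:classical1}, and uses the Gibbs variational principle to trade the $Q_X$-minimization for an $R_Y$-minimization to obtain \eqref{eq:classical2}. This is self-contained and never relates $T(\pazocal{W})$ to the pure-input definition of the quantum channel quantity.

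You instead embed $\pazocal{W}$ as a CQ channel with commuting outputs and specialize the quantum statements. This makes explicit that the classical formulas are the commuting case of the CQ ones. The price is three extra checks: $T(\pazocal{W})=T(\pazocal{N})$, the $\log 0$ convention, and the restriction to diagonal $\sigma$. Your pinching argument for the third check is correct.

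The first check needs repair. Your ``concrete'' justification of $T(\pazocal{N})\leq T(\pazocal{W})$ runs the wrong way. Dephasing $\omega_{A'B}\coloneqq\rho_{A'}^{1/2}J^{(\pazocal{N})}\rho_{A'}^{1/2}$ on $A'$ does not produce $\rho_{XB}$; the conditional outputs become mixtures of the $\rho_x$. Data processing would then only show that the dephased state has tumula information at most $T(A':B)_\omega$, which is the opposite of what you need. The reverse map is what works. Set $P_X(x)\coloneqq\bra{x}\rho_{A'}\ket{x}$ and $\phi_x\coloneqq\rho_{A'}^{1/2}\ketbra{x}\rho_{A'}^{1/2}/P_X(x)$ (arbitrary if $P_X(x)=0$). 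Then $\omega_{A'B}=(\mathcal{E}\otimes\mathrm{Id})(\rho_{XB})$ for the measure-and-prepare channel $\mathcal{E}(Y)\coloneqq\sum_x\bra{x}Y\ket{x}\phi_x$. Theorem~\ref{thm:properties_petz}(a) at $\alpha=1$ then gives $T(A':B)_\omega\leq T(X:B)_{\rho_{XB}}$.

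This is the reduction the paper asserts without comment in the first line of the proof of Theorem~\ref{thm:CQ}. The bare formula in Theorem~\ref{thm:CQ} does not by itself give this direction. So your remark that you ``can avoid this route'' holds only if you use that reduction, or the per-$P_X$ identity $T(X:B)_{\rho_{XB}}=\min_{Q_X}(D(Q_X\|P_X)-\log Z(Q_X))$ from the proof.

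Your zero-entry worry is settled by noting that Theorem~\ref{thm:CQ} is valid only with the eigenvalue convention $\log 0=-\infty$, $e^{-\infty}=0$, $0\cdot(-\infty)=0$. If $\log$ were taken on the support, the noiseless bit channel embedded as a CQ channel would give $Z(Q_X)=2$ for every $Q_X$, and hence a negative value of $T$. Under the correct convention, the diagonal of $\exp(\sum_x Q_X(x)\log\rho_x)$ is exactly $\big(e^{\sum_x Q_X(x)\log\pazocal{W}(y|x)}\big)_y$, so the two $Z$'s agree.
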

\begin{proof}
    The proof of~\eqref{eq:classical1} is analogous to the proof of Theorem~\ref{thm:CQ}. 
    The proof of~\eqref{eq:classical2} is analogous to the proof of Proposition~\ref{prop:TU_CQ}.
\end{proof}

\begin{rem}[(Binary symmetric channel)]
The example of the binary symmetric channels illustrates that the classical channel tumula information can become smaller than the unassisted error exponent, see Figure~\ref{fig:BSC}. Therefore, the tumula information of a channel cannot generally be interpreted as an assisted error exponent. However, since the classical channel tumula information is super-additive (Proposition~\ref{prop:superadditivity_classical}), it is possible that the regularized quantity could still have an operational meaning. However, as we will see now, even the regularized channel tumula information of the identity channel is finite.
\end{rem}

\begin{figure}
        \centering
        \includegraphics[width=0.9\linewidth]{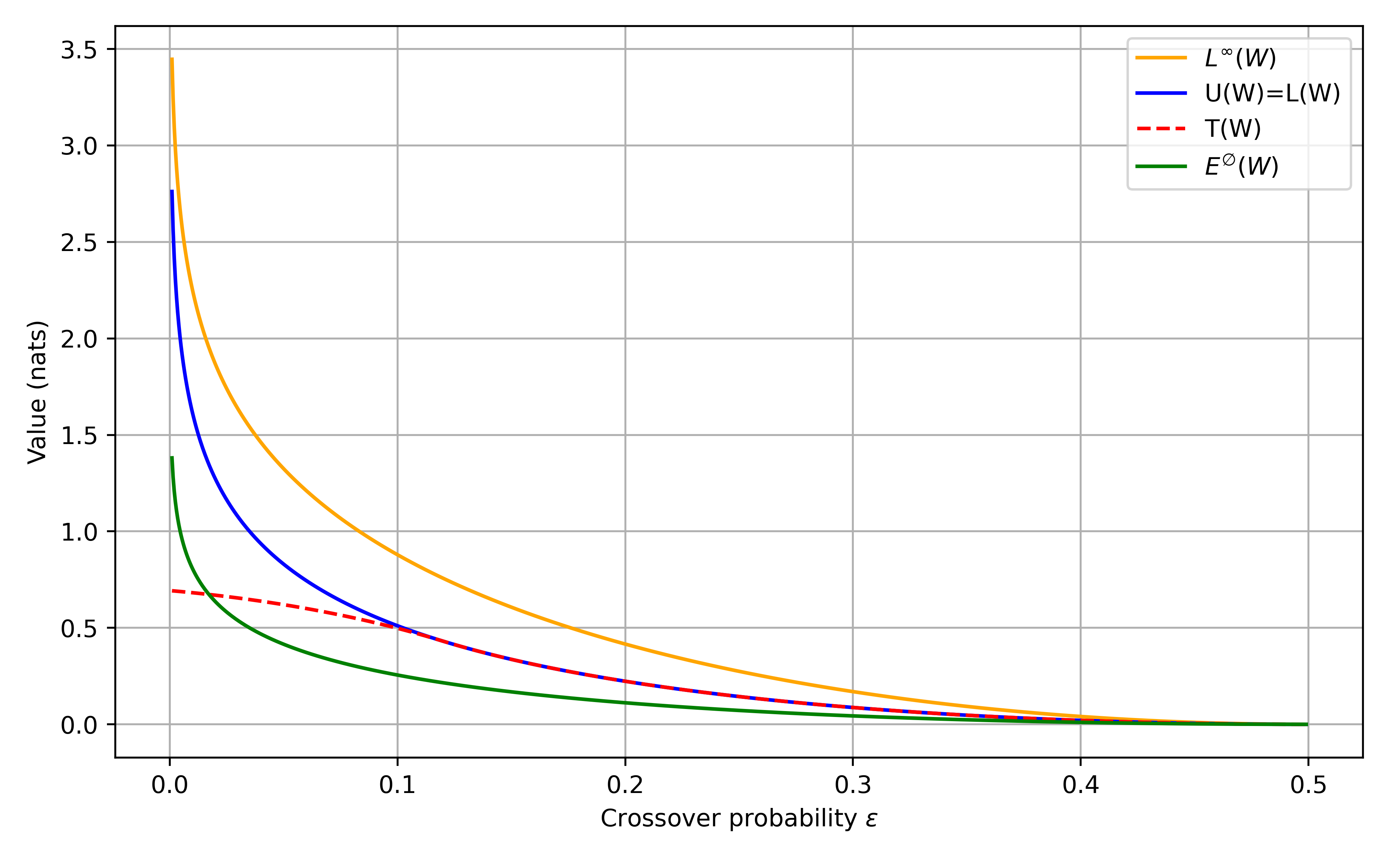}
        \caption{Regularized lautum information (yellow), umlaut information and lautum information (blue), tumula information (red), and zero-rate unassisted error exponent (green) of the binary symmetric channel $\pazocal{W}$ in terms of the crossover probability $\varepsilon$. Except for the tumula information, these quantities have been previously considered in~\cite{Filippo25, Lautum_08}. The bifurcation point between the umlaut and tumula information lies at $\varepsilon = \frac{1}{1+e^2}\approx 0.1192$. For larger crossover probabilities, the two quantities coincide. For smaller crossover probabilities, a phase transition happens and the optimizer $q^\ast$ is no longer unique. In fact, in the limit of vanishing error, it approaches a deterministic distribution and one finds $\displaystyle\lim_{\varepsilon \to 0} T (\pazocal W) = \log 2\approx 0.69$. }
        \label{fig:BSC}
\end{figure}

\begin{prop}[(Regularized tumula information of the identity channel)]\label{thm:tu_id_channel}
    Consider the classical identity channel $\pazocal{W} = I$. Then 
    $$T(I) = T^\infty(I) = \log |\mathcal{X}|.$$ 
\end{prop}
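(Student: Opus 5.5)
The plan is to prove the sandwich $\log|\mathcal{X}| \le T(I)\le T^\infty(I)\le \log|\mathcal{X}|$. The middle inequality follows from super-additivity (Proposition~\ref{prop:superadditivity_classical}) together with Fekete's lemma, because $T^\infty(I)=\sup_n \frac1n T(I^{\otimes n})\ge T(I)$. It therefore remains to show a lower bound $T(I)\ge\log|\mathcal{X}|$ and a single-letter upper bound $T(I^{\otimes n})\le n\log|\mathcal{X}|$.

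\emph{Upper bound.} The $n$-fold product of the identity channel on $\mathcal{X}$ is the identity channel on $\mathcal{X}^n$. For any input $P_{X^n}$, the output joint distribution is $P_{X^nY^n}=P_{X^n}(x)\,\delta_{x,y}$ on $\mathcal{X}^n\times\mathcal{X}^n$. Proposition~\ref{prop:bound} then gives $T(X^n:Y^n)_P\le\log|\mathcal{X}|^n=n\log|\mathcal{X}|$. Taking the supremum over $P_{X^n}$ yields $T(I^{\otimes n})\le n\log|\mathcal{X}|$, and hence $T^\infty(I)\le\log|\mathcal{X}|$.

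\emph{Lower bound.} Choose $P_X$ uniform, so that $P_{XY}(x,y)=\delta_{x,y}/|\mathcal{X}|$. Finiteness of $D(Q_XR_Y\|P_{XY})$ forces $Q_X(x)R_Y(y)=0$ whenever $x\neq y$. If $Q_X$ had two points $x\neq x'$ in its support, then $R_Y$ would have to vanish on every $y\neq x$ and on every $y\neq x'$, i.e.\ everywhere, which is impossible. Hence $Q_X=R_Y=\delta_{x_0}$ for some $x_0$, and then
\[
D(Q_XR_Y\|P_{XY})=\log\frac{1}{1/|\mathcal{X}|}=\log|\mathcal{X}|.
\]
So $T(X:Y)_P=\log|\mathcal{X}|$, and consequently $T(I)\ge\log|\mathcal{X}|$. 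In fact this is exactly the tightness example from Proposition~\ref{prop:bound}, so that part of the earlier proposition can be cited directly. Combining the three inequalities gives $T(I)=T^\infty(I)=\log|\mathcal{X}|$.

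\emph{Main obstacle.} The only potential subtlety is checking that Proposition~\ref{prop:bound} applies to the $n$-letter distribution with alphabet size $|\mathcal{X}|^n$. This is immediate, since that proposition holds for arbitrary finite alphabets and $I^{\otimes n}$ is literally the identity channel on the product alphabet. The rest is routine.
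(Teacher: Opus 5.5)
Your proof is correct, but it takes a somewhat different route from the paper. The paper works at the channel level. It inserts the identity channel on $\mathcal{X}^n$ into the variational formula \eqref{eq:classical2} and observes that finiteness forces the minimizing $R_{Y^n}$ to be a point mass. The objective then collapses to $\min_{x^n}(-\log P_{X^n}(x^n))=-\log\max_{x^n}P_{X^n}(x^n)$, and optimizing over $P_{X^n}$ gives $T(I^{\otimes n})=n\log|\mathcal{X}|$ exactly for every $n$, with no super-additivity or Fekete argument. You instead stay at the level of states: the upper bound comes from the cardinality bound of Proposition~\ref{prop:bound} on the product alphabet, the lower bound from its tightness example, and super-additivity links the single-letter and $n$-letter quantities.

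The underlying mechanism is the same in both arguments. The proof of Proposition~\ref{prop:bound} is essentially the Gibbs-variational computation $T(X:Y)_P\le-\log\max_x P_X(x)$, and your support argument is the same point-mass reasoning. What your route buys is independence from \eqref{eq:classical2}. That formula is derived through stationarity conditions for partial minimizers, which the paper only sketches, and they are delicate here because $P_{XY}$ has zeros and $\log P_{XY}$ takes the value $-\infty$. Proposition~\ref{prop:bound} and your direct support argument handle these zeros cleanly. The paper's route, in turn, gives the exact value for each $n$ in a single computation.

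You could also drop super-additivity entirely. Your lower-bound argument applies verbatim to the uniform input on $\mathcal{X}^n$ and yields $T(I^{\otimes n})\ge n\log|\mathcal{X}|$ directly.
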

\begin{proof}
    See Appendix \ref{proof:id}.
\end{proof}
This then implies a general upper bound for the (regularized) channel tumula information.
\begin{cor}[(Upper bound on (regularized) channel tumula information)]\label{cor:channel_upperbound}
    For any classical channel $\pazocal{W}$
    \begin{equation}
        T(\pazocal{W}) \leq T^\infty(\pazocal{W}) \leq \log|\mathcal{X}| \,.
    \end{equation}
\end{cor}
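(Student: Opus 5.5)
We prove the upper bound by combining Proposition~\ref{thm:tu_id_channel} with a data-processing argument. The inequality $T(\pazocal{W})\leq T^\infty(\pazocal{W})$ is immediate: by super-additivity (Proposition~\ref{prop:superadditivity_classical}) and Fekete's lemma, $T^\infty(\pazocal{W})=\sup_{n\geq 1}\frac{1}{n}T(\pazocal{W}^{\times n})$, and the $n=1$ term is $T(\pazocal{W})$. So the work lies in showing $T^\infty(\pazocal{W})\leq \log|\mathcal{X}|$.

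The key step is to show that for every classical channel $\pazocal{W}$ from $\mathcal{X}$ to $\mathcal{Y}$ we have $T(\pazocal{W})\leq T(I_{\mathcal{X}})$, where $I_{\mathcal{X}}$ is the identity channel on $\mathcal{X}$.

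Fix $P_X$ and set $P_{XY}=\pazocal{W}_{Y|X}P_X$. This is the image of the identity-channel joint distribution $P_{XX'}$ (with $X'=X$) under the local operation $\mathrm{id}_X\times\pazocal{W}$ acting on $X'$.

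By monotonicity of the tumula information under local operations, we get $T(X:Y)_{P_{XY}}\leq T(X:X')_{P_{XX'}}$. This monotonicity is Theorem~\ref{thm:properties_petz}~(a) at $\alpha=1$, combined with~\eqref{eq:l1-t}. Classical distributions are embedded as CC states via Theorem~\ref{thm:properties_petz}~(n).

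Alternatively, the monotonicity can be seen directly. For any product $Q_XR_{X'}$ we have
\[
D(Q_X R_{X'}\| P_{XX'})\geq D\bigl(Q_X (\pazocal{W}\circ R)_{Y}\,\big\|\,P_{XY}\bigr)
\]
by the data-processing inequality for the relative entropy. The output $Q_X(\pazocal{W}\circ R)_Y$ is again a product distribution, so taking the infimum on the left dominates the infimum over products on the right.

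Taking the supremum over $P_X$ on both sides gives $T(\pazocal{W})\leq T(I_{\mathcal{X}})$.

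To pass to the regularization, apply the same argument to $\pazocal{W}^{\times n}$. This is a classical channel from $\mathcal{X}^n$ whose corresponding identity channel is $I_{\mathcal{X}}^{\times n}=I_{\mathcal{X}^n}$, so
\[
T(\pazocal{W}^{\times n})\leq T(I^{\times n}),
\]
and hence $T^\infty(\pazocal{W})\leq T^\infty(I)$. Proposition~\ref{thm:tu_id_channel} then gives $T^\infty(I)=\log|\mathcal{X}|$, which completes the chain.

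The only point needing care is that monotonicity is used for the unminimized input $P_X$ fixed, with the minimization over product states taking place inside. The direct data-processing argument above handles this cleanly, so I expect no real obstacle; everything substantive sits in Proposition~\ref{thm:tu_id_channel}.
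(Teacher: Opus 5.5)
Your proof is correct and follows the paper's route: the paper likewise obtains the bound from Proposition~\ref{thm:tu_id_channel} combined with monotonicity of the tumula information under local operations (Theorem~\ref{thm:properties_petz}~(a)). You make explicit two steps the paper leaves implicit, namely the superadditivity/Fekete argument (Proposition~\ref{prop:superadditivity_classical}) giving $T(\pazocal{W})\leq T^\infty(\pazocal{W})$, and the comparison of $\pazocal{W}^{\times n}$ with $I^{\times n}=I_{\mathcal{X}^n}$; your direct classical data-processing argument also avoids having to pass through the CC-state embedding.
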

\begin{proof}
     The corollary follows from Proposition~\ref{thm:tu_id_channel} and Theorem~\ref{thm:properties_petz} (monotonicity under local operations).
\end{proof}

At first glance, the finiteness of the (regularized) channel tumula information as expressed in Corollary~\ref{cor:channel_upperbound} is quite surprising, as the zero-rate unassisted error exponent and the umlaut information diverges for some channels. 
The umlaut information of a classical channel can be connected to the meta-converse bound in~\cite{PPV} in the zero-rate limit. 
This bound is achieved by means of non-signalling assisted codes~\cite{Matthews2012}, yielding an operational interpretation of the umlaut information. 
This bound on the number of messages that can be sent using a noisy (classical) channel is phrased and proved in terms of a hypothesis testing problem. 
Our results imply that it is impossible to connect the channel tumula information to a meta-converse analogous to the one in~\cite{PPV}.

Indeed, since the zero-rate unassisted error exponent $E^\emptyset(0^+,I)$ of the identity channel is infinite, Proposition \ref{thm:tu_id_channel} exhibits a strict gap between this exponent and the (regularized) channel tumula information, as $T(I)=T^\infty(I)=\log |\mathcal{X}|<+\infty =E^\emptyset(0^+,I)$. 
As a consequence, the tumula information of a channel cannot identify any error exponent in an assisted-communication setting. 

However, if one considers quantum channels instead of classical channels, the quantum channel tumula information can be infinite, as illustrated by the noiseless quantum channel. 
This raises the question whether the quantum channel tumula information admits other operational interpretations in quantum channel coding. 
We leave this question for future work.

\section{Acknowledgments} 
The authors thank Ludovico Lami for helpful discussions.
L.B. and L.S. acknowledge support from the National Centre of Competence in Research SwissMAP, the Quantum Center at ETH Zurich, the SNSF project No. 20QU-1\_225171, and the CHIST-ERA project MoDIC. 
F.G. acknowledges financial support from the European Union (ERC StG ETQO, Grant Agreement no.\ 101165230).

\appendix
\section{Proofs of properties}
\subsection{Proof of \eqref{eq:prli-limits}}\label{proof:limits}

\begin{proof}
Since $\displaystyle{\lim_{\alpha\to 1} D_\alpha(\rho\|\sigma)=D(\rho\|\sigma)}$, we immediately have
$L(A:B)_\rho = \lim_{\alpha\to 1}L_\alpha^{\uparrow\uparrow}(A:B)_\rho$ 
by the very definition of these quantities.

Let us first consider the limit where $\alpha\rightarrow 1$ from above. 
Since $\alpha\mapsto D_\alpha(\rho\|\sigma)$ is a monotonically increasing function, we can write
\bb
 \lim_{\alpha\to 1^+}L_\alpha^{\uparrow\downarrow}(A:B)_\rho
 &=\inf_{\substack{\alpha>1,\\ \tau_B\in \mathcal{S}(B)}} D_\alpha (\rho_A\otimes \tau_B\|\rho_{AB})
 =\inf_{\tau_B\in \mathcal{S}(B)}D (\rho_A\otimes \tau_B\|\rho_{AB})
 =U(A:B)_\rho,
\ee
and similarly
\bb
\lim_{\alpha\to 1^+}L_\alpha^{\downarrow\downarrow}(A:B)_\rho 
&= \inf_{\substack{\alpha>1,\\ \sigma_A\in \mathcal{S}(A), \\ \tau_B\in \mathcal{S}(B)}} D_\alpha (\sigma_A\otimes \tau_B\| \rho_{AB})
= \inf_{\substack{\sigma_A\in \mathcal{S}(A), \\ \tau_B\in \mathcal{S}(B)}} D (\sigma_A\otimes \tau_B\|\rho_{AB})=T(A:B)_\rho.
\ee

    Now, let us consider the limit where $\alpha\rightarrow 1$ from below. 
    Leveraging again the monotonicity of $\alpha\mapsto D_\alpha(\rho\|\sigma)$, by the Mosonyi--Hiai minimax theorem~\cite[Corollary A2]{MosonyiHiai}, we can rewrite
    \bb
 \lim_{\alpha\to 1^-}L_\alpha^{\uparrow\downarrow}(A:B)_\rho
 =\sup_{\alpha<1}\inf_{\tau_B\in \mathcal{S}(B)} D_\alpha (\rho_A\otimes \tau_B\| \rho_{AB})
 &=\inf_{\tau_B\in \mathcal{S}(B)}\sup_{\alpha<1} D_\alpha (\rho_A\otimes \tau_B\| \rho_{AB})\\
 &=\inf_{\tau_B\in \mathcal{S}(B)}D (\rho_A\otimes \tau_B\| \rho_{AB})
 =U(A:B)_\rho,
\ee
and similarly
\bb
\lim_{\alpha\to 1^-}L_\alpha^{\downarrow\downarrow}(A:B)_\rho 
&= \inf_{\substack{\sigma_A\in \mathcal{S}(A), \\ \tau_B\in \mathcal{S}(B)}} \sup_{\alpha<1} D_\alpha (\sigma_A\otimes \tau_B\| \rho_{AB})
= \inf_{\substack{\sigma_A\in \mathcal{S}(A), \\ \tau_B\in \mathcal{S}(B)}} D (\sigma_A\otimes \tau_B\|\rho_{AB})=T(A:B)_\rho.
\ee
\end{proof}

\subsection{Proof of Theorem~\ref{thm:properties_petz}}\label{proof:properties2}
\begin{proof}[Proof: Varia]
The following properties follow from corresponding properties of the doubly minimized PRMI~\cite{burri2025doubly} due to the correspondence between the doubly minimized PRMI and the doubly minimized LRMI, see~\eqref{eq:prli_prmi2}: 
Additivity for $\alpha\in (0,1/2)$ (additivity for $\alpha\in \{0,1/2\}$ follows from continuity), 
joint convexity, 
uniqueness of minimizer, 
asymptotic optimality of universal permutation invariant state, 
partial minimizers.
\end{proof}

\begin{proof}[Proof: Monotonicity under local operations]
Let $\alpha\in [0,2]$. 

\textbf{Case 1}: $L_\alpha^{\downarrow\downarrow}(A:B)_{\rho}<\infty$. 
Let
\begin{equation}
(\sigma_{A}^\star, \tau_{B}^\star)\in \argmin_{(\sigma_A,\tau_B)\in \mathcal{S}(A)\times \mathcal{S}(B)} D_\alpha(\sigma_{A}\otimes \tau_{B}\| \rho_{AB}).    
\end{equation}
Then, 
\begin{align}
    L_\alpha^{\downarrow\downarrow}(A':B')_{\pazocal{N} \otimes \pazocal{M} (\rho)} 
    &= \inf_{\substack{\sigma_{A'}\in \mathcal{S}(A'), \\ \tau_{B'}\in \mathcal{S}(B')}} D_\alpha(\sigma_{A'} \otimes \tau_{B'} \| \pazocal{N} \otimes \pazocal{M} (\rho_{AB})) \\
    &\leq D_\alpha(\pazocal{N}(\sigma_A^\star)\otimes\pazocal{M}(\tau_B^\star) \| \pazocal{N} \otimes \pazocal{M} (\rho_{AB})) \\
    &\leq D_\alpha(\sigma_A^\star \otimes \tau_B^\star \|\rho_{AB}) 
    = L_\alpha^{\downarrow\downarrow}(A:B)_\rho
\end{align}

\textbf{Case 2}: $L_\alpha^{\downarrow\downarrow}(A:B)_{\rho}=\infty$. Then the claim is trivially true.
\end{proof}

\begin{proof}[Proof of continuity in $\alpha$]
The continuity of $L_\alpha^{\downarrow\downarrow}(A:B)_\rho$ on $\alpha\in [0,1)$ and on $\alpha\in [1,\infty)$ follows from the continuity in $\alpha$ of the Petz divergence. 
It remains to prove left-continuity at $\alpha=1$. 
By \eqref{eq:l1-omegaomega}, we have for any $n\in \mathbb{N}_{>0}$
\begin{align}\label{eq:proof-prmi2-i1}
    &\left(1-\frac{1}{\sqrt{n}}\right)\frac{1}{\sqrt{n}}D_{\frac{1}{\sqrt{n}}}(\rho_{AB}^{\otimes n}\| \omega_{A^n}^n\otimes \omega_{B^n}^n)-\left(1-\frac{1}{\sqrt{n}}\right)\frac{\log (g_{n,d_A}g_{n,d_B})}{\sqrt{n}}
    \\
    &\leq L_{1-\frac{1}{\sqrt{n}}}^{\downarrow\downarrow}(A:B)_\rho 
    \leq \lim_{\alpha\rightarrow 1^-}L_\alpha^{\downarrow\downarrow}(A:B)_\rho
    \leq L_1^{\downarrow\downarrow}(A:B)_\rho,
    \label{eq:proof-prmi2-i2}
\end{align}
where the last two inequalities follows from the monotonicity in $\alpha$.
The second term in~\eqref{eq:proof-prmi2-i1} vanishes in the limit  $n\rightarrow\infty$~\cite[Proposition~1]{burri2025doubly}. Therefore,
\begin{align}\label{eq:prmi2-proof-left-c}
L_1^{\downarrow\downarrow}(A:B)_\rho
=\lim_{n\rightarrow\infty}\frac{1}{n}D_1(\rho_{AB}^{\otimes n}\| \omega_{A^n}^n\otimes \omega_{B^n}^n)
\leq \lim_{\alpha\rightarrow 1^-}L_\alpha^{\downarrow\downarrow}(A:B)_\rho
\leq L_1^{\downarrow\downarrow}(A:B)_\rho,
\end{align}
where the first equality in~\eqref{eq:prmi2-proof-left-c} follows from~\eqref{eq:proof-prmi2-i2}. 
Thus, $\lim_{\alpha\rightarrow 1^-}L_\alpha^{\downarrow\downarrow}(A:B)_\rho=L_1^{\downarrow\downarrow}(A:B)_\rho$.
\end{proof}
\begin{proof}[Proof of convexity in $\alpha$]
    Let us denote the given function by $f$, and let us define $g:[0,1]\rightarrow\mathbb{R},\alpha\mapsto (\alpha-1)I_\alpha^{\downarrow\downarrow}(A:B)_\rho$. 
    Then, $f(\alpha)=g(1-\alpha)$ for all $\alpha\in [0,1)$. 
    Since $g$ is convex~\cite{burri2025doubly} and $\alpha\mapsto 1-\alpha$ is an affine transformation, it follows that $f$ is convex.
\end{proof}

\begin{proof}[Proof of continuous differentiability]
The following proof technique is adapted from~\cite{burri2025doubly}. 
Let us define the following two functions.
\begin{align}
f:\quad (0,1/2)\rightarrow \mathbb{R},\quad \alpha
&\mapsto L_\alpha^{\downarrow\downarrow}(A:B)_\rho
\\
g:\quad (0,1/2)\rightarrow \mathbb{R},\quad \alpha
&\mapsto (\alpha -1)L_\alpha^{\downarrow\downarrow}(A:B)_\rho
\end{align}
$f$ is continuous, and 
$g$ is convex and continuous. 
The convexity of $g$ implies that the left and right derivative of $g$ exist at all points within its domain. 

For any $\alpha\in (0,\frac{1}{2})$ and 
any fixed $(\sigma_A,\tau_B)\in \argmin_{(\sigma_A',\tau_B')\in \mathcal{S}(A)\times \mathcal{S}(B)}D_\alpha (\sigma_A'\otimes \tau_B'\| \rho_{AB})$
\begin{subequations}\label{eq:prmi2-diff0}
\begin{align}
\frac{\partial}{\partial \alpha^+}f(\alpha)
&=\lim_{\varepsilon\rightarrow 0^+} \frac{1}{\varepsilon} (L_{\alpha+\varepsilon}^{\downarrow\downarrow}(A:B)_\rho -L_{\alpha}^{\downarrow\downarrow}(A:B)_\rho )
\\
&\leq \lim_{\varepsilon\rightarrow 0^+} \frac{1}{\varepsilon} (D_{\alpha+\varepsilon}(\sigma_A\otimes \tau_B\| \rho_{AB}) -D_{\alpha}(\sigma_A\otimes \tau_B\| \rho_{AB}) )
\\
&=\frac{\partial}{\partial \alpha}D_{\alpha}(\sigma_A\otimes \tau_B\| \rho_{AB})
\label{eq:prmi2-diff1}\\
&=\lim_{\varepsilon\rightarrow 0^-} \frac{1}{\varepsilon} (D_{\alpha+\varepsilon}(\sigma_A\otimes \tau_B\| \rho_{AB}) -D_{\alpha}(\sigma_A\otimes \tau_B\| \rho_{AB}) )
\label{eq:prmi2-diff2}\\
&\leq \lim_{\varepsilon\rightarrow 0^-} \frac{1}{\varepsilon} (L_{\alpha+\varepsilon}^{\downarrow\downarrow}(A:B)_\rho -L_{\alpha}^{\downarrow\downarrow}(A:B)_\rho )
=\frac{\partial}{\partial \alpha^-}f(\alpha).
\end{align}
\end{subequations}
The equalities in~\eqref{eq:prmi2-diff1} and~\eqref{eq:prmi2-diff2} hold due to the differentiability in $\alpha$ of the Petz divergence. 

For any $\alpha\in (0,\frac{1}{2})$, let 
\begin{equation}
(\sigma_A^{(\alpha)},\tau_B^{(\alpha)})\in \argmin_{(\sigma_A,\tau_B)\in \mathcal{S}(A)\times \mathcal{S}(B)}D_\alpha (\sigma_A\otimes \tau_B\| \rho_{AB})
\end{equation}
denote the unique minimizer. 
Let us define the function 
\begin{align}
(0,1/2)\rightarrow \mathbb{R}, \quad 
\alpha\mapsto 
h(\alpha)\coloneqq 
f(\alpha)+(\alpha -1) 
\frac{\partial}{\partial \alpha}D_\alpha (\sigma_A^{(\alpha)}\otimes \tau_B^{(\alpha)}\| \rho_{AB}),
\end{align}
where $\sigma_A^{(\alpha)}$ and $\tau_B^{(\alpha)}$ are kept fixed.  

The map $\alpha\mapsto (\sigma_A^{(\alpha)},\tau_B^{(\alpha)})$ is continuous on $\alpha\in (0,\frac{1}{2})$ 
due to the uniqueness of $(\sigma_A^{(\alpha)},\tau_B^{(\alpha)})$. 
By the continuous differentiability of the Petz divergence, it follows that $h$ is continuous. 
For any $\alpha\in (0,\frac{1}{2})$,
\begin{align}
\frac{\partial}{\partial \alpha^+}f(\alpha)
&\leq \frac{\partial}{\partial \alpha}D_\alpha (\sigma_A^{(\alpha)}\otimes \tau_B^{(\alpha)}\| \rho_{AB})
\leq \frac{\partial}{\partial \alpha^-}f(\alpha),
\label{eq:diff-h0}\\ 
\frac{\partial}{\partial \alpha^-}g(\alpha)
&\leq h(\alpha)
\leq \frac{\partial}{\partial \alpha^+}g(\alpha).
\label{eq:diff-g0}
\end{align}
\eqref{eq:diff-h0} follows from~\eqref{eq:prmi2-diff0}, and it is understood that $\sigma_A^{(\alpha)}$ and $ \tau_B^{(\alpha)}$ are kept fixed in~\eqref{eq:diff-h0}. 
\eqref{eq:diff-g0} follows from~\eqref{eq:diff-h0}. 
Therefore, for any $\alpha\in (0,\frac{1}{2})$, 
\begin{subequations}\label{eq:prmi2-g0}
\begin{align}
h(\alpha)
=\lim_{\varepsilon\rightarrow 0^+}h(\alpha-\varepsilon)
&\leq \lim_{\varepsilon\rightarrow 0^+}\frac{\partial}{\partial \alpha^+}g(\alpha-\varepsilon)
\label{eq:prmi2-g1}\\
&\leq \frac{\partial}{\partial \alpha^+}g(\alpha)
\leq \lim_{\varepsilon\rightarrow 0^+}\frac{\partial}{\partial \alpha^-}g(\alpha+\varepsilon)
\leq\lim_{\varepsilon\rightarrow 0^+}h(\alpha+\varepsilon)
=h(\alpha).
\label{eq:prmi2-g2}
\end{align}
\end{subequations}
The first two inequalities in~\eqref{eq:prmi2-g2} follow from the convexity of $g$. 
It follows that all inequalities in~\eqref{eq:prmi2-g0} must be saturated, so 
$\frac{\partial}{\partial \alpha^+}g(\alpha)=h(\alpha)$ for all $\alpha\in (0,\frac{1}{2})$. 
Since $h$ is continuous, also $\frac{\partial}{\partial \alpha^+}g(\alpha)$ is continuous on $\alpha\in (0,\frac{1}{2})$. 
Since $g$ is convex, the continuity of the right derivative of $g$ implies that $g$ is differentiable and $g'(\alpha)=\frac{\partial}{\partial \alpha^+}g(\alpha)=h(\alpha)$ for all $\alpha\in (0,\frac{1}{2})$. 
Since $h$ is continuous, this proves that 
$g$ is \emph{continuously} differentiable on $\alpha\in (0,\frac{1}{2})$. 
By the product rule, this implies that also 
$f$ is continuously differentiable on $\alpha\in (0,\frac{1}{2})$. 

The additional claim in~\eqref{eq:diff-prmi2} follows from the relation between the doubly minimized PRLI and the doubly minimized PRMI.
\end{proof}

\subsection{Proof of Theorem~\ref{thm:properties}}\label{proof:properties}
\begin{proof}
    [Proof: Partial minimizers a)]
    To find the partial minimizers of
    \begin{equation}
        \min_{\sigma_A,\tau_B} D(\sigma_A \otimes \tau_B \|\rho_{AB}) =  \min_{\sigma_A,\tau_B} \left( \Tr[\sigma_A\log\sigma_A] + \Tr[\tau_B \log \tau_B] - \Tr[\sigma_A\otimes\tau_B \log \rho_{AB}] \right) \, ,
    \end{equation}
    we formulate a Lagrangian
    \begin{equation}
        L(\sigma_A,\tau_B,\lambda_A,\lambda_B) = \Tr[\sigma_A\log\sigma_A] + \Tr[\tau_B \log \tau_B] - \Tr[\sigma_A\otimes\tau_B \log \rho_{AB}] + \lambda_A (\Tr[\sigma_A]-1) + \lambda_B (\Tr[\tau_B]-1) 
    \end{equation}
    Taking Fr\'echet derivatives for $\sigma$ and $\tau$, while fixing the other, and demanding that they vanish leads to
    \begin{equation}
        \log \sigma_A + I - \Tr_B[\tau_B \log \rho_{AB}] + \lambda_A I = 0
    \end{equation}
    and 
    \begin{equation}
        \log \tau_B + I - \Tr_A[\sigma_A \log \rho_{AB}] + \lambda_B I = 0 \, .
    \end{equation}
    Which gives the self-consistency equations 
    \begin{equation}
    \sigma_A^\star =\frac{\exp[\Tr_B[\tau_B\log\rho_{AB}]]}{\Tr[\exp[\Tr_B[\tau_B\log\rho_{AB}]]]}
    \end{equation}
    and 
    \begin{equation}
    \tau_B^\star =\frac{\exp[\Tr_A[\sigma_A\log\rho_{AB}]]}{\Tr[\exp[\Tr_A[\sigma_A\log\rho_{AB}]]]} \, .
    \end{equation}
\end{proof}

\begin{proof}
[Proof: Additivity b)]
Since we minimize over set of different size, we have
\begin{align}
    T(A_1A_2:B_1B_2)_{\rho_{A_1B_1}\otimes \rho'_{A_2B_2}} 
    &\leq T(A_1 : B_1)_{\rho_{A_1B_1}} + T(A_2:B_2)_{\rho'_{A_2B_2}}
\end{align}
because $\mathcal{S}(A_1A_2)\supseteq \{\sigma_{A_1}\otimes \sigma_{A_2}'\}_{(\sigma_{A_1},\sigma_{A_2}')\in \mathcal{S}(A_1)\times \mathcal{S}(A_2)}$ and 
$\mathcal{S}(B_1B_2)\supseteq \{\tau_{B_1}\otimes \tau_{B_2}'\}_{(\tau_{B_1},\tau_{B_2}')\in \mathcal{S}(A_1)\times \mathcal{S}(B_2)}$. 
It remains to prove the opposite inequality. \\

\textbf{Case 1}: $T(A_1A_2:B_1B_2)_{\rho\otimes \rho'}<\infty$. 
Choose minimizers 
\begin{equation}
(\sigma_{A_1A_2}^\star, \tau_{B_1B_2}^\star)\in \argmin_{(\sigma_{A_1A_2}, \tau_{B_1B_2})\in \mathcal{S}(A_1A_2)\times \mathcal{S}(B_1B_2)} D(\sigma_{A_1A_2}\otimes \tau_{B_1B_2}\| \rho_{A_1B_1} \otimes \rho_{A_2B_2}').    
\end{equation}
Then, 
\begin{align}
    &T(A_1A_2:B_1B_2)_{\rho\otimes \rho'} 
    \\
    &= D(\sigma_{A_1A_2}^\star\otimes \tau_{B_1B_2}^\star\|\rho_{A_1B_1} \otimes \rho_{A_2B_2}') \\ 
    &=-H(A_1A_2)_{\rho\otimes \rho'} - H(B_1B_2)_{\rho\otimes \rho'} - \Tr[\sigma_{A_1}^\star\otimes\tau_{B_1}^\star \log \rho_{A_1B_1}] - \Tr[\sigma_{A_2}^\star\otimes\tau_{B_2}^\star \log \rho_{A_2B_2}'] \\
    &\geq -H(A_1)_{\rho}-H(A_2)_{\rho'}-H(B_1)_{\rho}-H(B_2)_{\rho'} - \Tr[\sigma_{A_1}^\star\otimes\tau_{B_1}^\star \log \rho_{A_1B_1}] - \Tr[\sigma_{A_2}^\star\otimes\tau_{B_2}^\star \log \rho_{A_2B_2}'] \\
    &\geq T(A_1 : B_1)_{\rho} + T(A_2:B_2)_{\rho'} \, ,
\end{align}
where the first inequality follows from the subadditivity of von Neumann entropy, and the second inequality from the minimization in the definition of the tumula information.

\textbf{Case 2}: $T(A_1A_2:B_1B_2)_{\rho\otimes \rho'}=\infty$. 
Then, the opposite inequality is trivially true.
\end{proof}

\begin{proof}[Proof: Monotonicity under local operations c)]
\textbf{Case 1}: $T(A:B)_{\rho}<\infty$. 
Let
\begin{equation}
(\sigma_{A}^\star, \tau_{B}^\star)\in \argmin_{(\sigma_A,\tau_B)\in \mathcal{S}(A)\times \mathcal{S}(B)} D(\sigma_{A}\otimes \tau_{B}\| \rho_{AB}).    
\end{equation}
Then, 
\begin{align}
    T(A':B')_{\pazocal{N} \otimes \pazocal{M} (\rho)} 
    &= \inf_{\substack{\sigma_{A'}\in \mathcal{S}(A'), \\ \tau_{B'}\in \mathcal{S}(B')}} D(\sigma_{A'} \otimes \tau_{B'} \| \pazocal{N} \otimes \pazocal{M} (\rho_{AB})) \\
    &\leq D(\pazocal{N}(\sigma_A^\star)\otimes\pazocal{M}(\tau_B^\star) \| \pazocal{N} \otimes \pazocal{M} (\rho_{AB})) \\
    &\leq D(\sigma_A^\star \otimes \tau_B^\star \|\rho_{AB}) 
    = T(A:B)_\rho
\end{align}

\textbf{Case 2}: $T(A:B)_{\rho}=\infty$. Then the claim is trivially true.
\end{proof}

\begin{proof}[Proof: Universal permutation invariant state d)] 
Let $\varepsilon\in (0,\frac{1}{2}),n\in \mathbb{N}_{>0}$. On the one hand, 
\begin{align}
    L_{1-\varepsilon}^{\downarrow\downarrow}(A:B)_\rho 
    &=\frac{1-\varepsilon}{\varepsilon}I_\varepsilon^{\downarrow\downarrow}(A:B)_\rho
    \leq \frac{1-\varepsilon}{\varepsilon}\frac{1}{n}D_{\varepsilon}(\rho_{AB}^{\otimes n}\| \omega_{A^n}^n\otimes \omega_{B^n}^n)+\frac{\log (g_{n,d_A}g_{n,d_B})}{n},
\end{align}
where the inequality follows from~\cite{burri2025doubly}. 
On the other hand,
\begin{align}
    L_{1-\varepsilon}^{\downarrow\downarrow}(A:B)_\rho 
    &=\frac{1-\varepsilon}{\varepsilon}I_\varepsilon^{\downarrow\downarrow}(A:B)_\rho
    \geq \frac{1-\varepsilon}{\varepsilon}\frac{1}{n}D_{\varepsilon}(\rho_{AB}^{\otimes n}\| \omega_{A^n}^n\otimes \omega_{B^n}^n)-\frac{1-\varepsilon}{\varepsilon}\frac{\log (g_{n,d_A}g_{n,d_B})}{n}
\end{align}
where the inequality follows from~\cite{burri2025doubly}. 
Evaluating these inequalities for $\varepsilon=\frac{1}{\sqrt{n}}$ and assuming $n>4$ implies that
\begin{align}
    L_{1-\frac{1}{\sqrt{n}}}^{\downarrow\downarrow}(A:B)_\rho 
    &\leq \left(1-\frac{1}{\sqrt{n}}\right)\frac{1}{\sqrt{n}}D_{\frac{1}{\sqrt{n}}}(\rho_{AB}^{\otimes n}\| \omega_{A^n}^n\otimes \omega_{B^n}^n)+\frac{\log (g_{n,d_A}g_{n,d_B})}{n},
    \\
    L_{1-\frac{1}{\sqrt{n}}}^{\downarrow\downarrow}(A:B)_\rho 
    &\geq \left(1-\frac{1}{\sqrt{n}}\right)\frac{1}{\sqrt{n}}D_{\frac{1}{\sqrt{n}}}(\rho_{AB}^{\otimes n}\| \omega_{A^n}^n\otimes \omega_{B^n}^n)-\left(1-\frac{1}{\sqrt{n}}\right)\frac{\log (g_{n,d_A}g_{n,d_B})}{\sqrt{n}}.
    \label{eq:l1-omegaomega}
\end{align}
In the previous two lines, the terms on the right-hand side vanish as $n\rightarrow\infty$, see~\cite[Proposition~1]{burri2025doubly}. 
Thus, 
\begin{align}
    T(A:B)_\rho 
    &= \lim_{n\rightarrow\infty} L_{1-\frac{1}{\sqrt{n}}}^{\downarrow\downarrow}(A:B)_\rho 
    = \lim_{n\rightarrow\infty} \frac{1}{\sqrt{n}} D_{\frac{1}{\sqrt{n}}}(\rho_{AB}^{\otimes n}\| \omega_{A^n}^n\otimes \omega_{B^n}^n).
\end{align}
\end{proof}

\begin{proof}[Classical states]
    The assertion follows from Sibson's identity~\eqref{eq:sibson}.
\end{proof}

\subsection{Proof of Proposition~\ref{prop:bound}}\label{proof:bound}
\begin{proof}
Let $P_{XY}$ be the joint probability distribution of $X$ and $Y$. Then,
\bb
    T(X:Y)&=\min_{Q_X}\min_{Q_Y} D(Q_XQ_Y\|P_{XY})\\
    &=\min_{Q_X}\min_{Q_Y}\Big( -H(Q_X)-H(Q_Y)-\sum_{x\in \mathcal{X},y\in \mathcal{Y}}Q_X(x)Q_Y(y)\log P_{XY}(x,y)\Big)\\
    &\leqt{(i)} \min_{Q_Y}\min_{Q_X}\Big(-H(Q_Y)-\sum_{x\in \mathcal{X},y\in \mathcal{Y}}Q_X(x)Q_Y(y)\log P_{XY}(x\in \mathcal{X},y\in \mathcal{Y})\Big)\\
    &=\min_{Q_X}\Big(-H(Q_Y)-\max_{Q_X}\sum_{x\in \mathcal{X},y\in \mathcal{Y}}Q_X(x)Q_Y(y)\log P_{XY}(x,y)\Big)\\
    &=\min_{Q_X}\Big(-H(Q_Y)-\max_{x\in \mathcal{Y}}\sum_{y\in \mathcal{Y}}Q_Y(y)\log P_{XY}(x,y)\Big)\\
    &=\min_{x\in \mathcal{Y}}\min_{Q_X}\Big(-H(Q_Y)-\sum_{y\in \mathcal{Y}}Q_Y(y)\log P_{XY}(x,y)\Big)\\
    &= \min_{x\in \mathcal{X}}\Big(-\log\sum_{x,y}P_{XY}(x,y)\Big)
    = -\log \max_{x\in \mathcal{X}}P_{X}(x)
    \leqt{(ii)} \log|\mathcal{X}|,
\ee
where (i) follows from Gibb's variational principle, and in (ii) we have noticed that there always exists a symbol $x\in\mathcal{X}$ such that $P_X(x)\geq 1/|\mathcal{X}|$. 
Since $T$ is symmetric, we also immediately get $T(X:Y)\leq \log |\mathcal{Y}|$. 
Now, without loss of generality, let us suppose $\mathcal{X}\subseteq\mathcal{Y}$.
\bb
    \bar P_{XY}(x,y)\coloneqq \frac{1}{|\mathcal{X}|}\cdot\begin{cases}
        1 & x= y,\\
        0 & x\neq  y.
    \end{cases}
\ee
Then, we have that $D(Q_XQ_Y\|\bar P_{XY})\iff Q_XQ_Y\ll \bar P_{XY}$: this means that the minimisation will select $Q_X$ and $Q_Y$ such that $\text{supp}(Q_X) \times \text{supp}(Q_Y) \subseteq \text{supp}(\bar P_{XY})$. This is only possible if both $Q_x$ and $Q_Y$ are Dirac deltas; therefore, we get $T(\bar X:\bar Y)=\log |\mathcal{X}|$. Again by symmetry, we conclude that the bound is tight also when $|\mathcal{Y}|\leq|\mathcal{X}|$.
\end{proof}

\section{Proofs for hypothesis testing}
\subsection{Proof of Theorem~\ref{thm:singly}}\label{proof:singly}

Based on the three possible choices for the null hypothesis, 
let us define the following functions of $\mu\in [0,\infty)$ 
for any $\rho_{AB}\in \mathcal{S}(AB),n\in \mathbb{N}_{>0}$.
\begin{align}
\hat{\beta}_{n}^{\mathrm{iid}}(\mu )
&\coloneqq \min_{\substack{T^n_{A^nB^n}\in \mathcal{L} (A^nB^n): \\ 0\leq T^n_{A^nB^n}\leq 1}}
\{\max_{\tau_{B} \in \mathcal{S}(B)}
\Tr[\rho_A^{\otimes n}\otimes \tau_{B}^{\otimes n} T^n_{A^nB^n}]:
\Tr[\rho_{AB}^{\otimes n}(1-T^n_{A^nB^n})]
\leq \mu\}
\label{eq:def-beta-singly1}
\\
\hat{\beta}_{n}^{\sym}(\mu )
&\coloneqq \min_{\substack{T^n_{A^nB^n}\in \mathcal{L} (A^nB^n): \\ 0\leq T^n_{A^nB^n}\leq 1}}
\{\max_{\tau_{B^n} \in \mathcal{S}_{\sym} (B^n)}
\Tr[\rho_A^{\otimes n}\otimes \tau_{B^n} T^n_{A^nB^n}]:
\Tr[\rho_{AB}^{\otimes n}(1-T^n_{A^nB^n})]
\leq \mu\}
\label{eq:def-beta-singly2}
\\
\hat{\beta}_{n}^{\mathrm{ind}}(\mu )
&\coloneqq \min_{\substack{T^n_{A^nB^n}\in \mathcal{L} (A^nB^n): \\ 0\leq T^n_{A^nB^n}\leq 1}}
\{\max_{\tau_{B^n} \in \mathcal{S}(B^n)}
\Tr[\rho_A^{\otimes n}\otimes \tau_{B^n} T^n_{A^nB^n}]:
\Tr[\rho_{AB}^{\otimes n}(1-T^n_{A^nB^n})]
\leq \mu\}
\label{eq:def-beta-singly3}
\end{align} 

The following lemma describes their natural ordering. 
\begin{lemma}[(Minimum type-I errors)]\label{lem:order}
Let $\rho_{AB}\in \mathcal{S}(AB),n\in \mathbb{N}_{>0}$. Then, for all $\mu\in [0,\infty)$
\begin{align}
    0&\leq \hat{\beta}_{n}^{\mathrm{iid}} (\mu)
\leq \hat{\beta}_{n}^{\sym}(\mu)
= \hat{\beta}_{n}^{\mathrm{ind}}(\mu)
\leq \max(0,1-\mu) \leq 1\,.
\end{align}
\end{lemma}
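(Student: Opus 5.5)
The chain splits into four comparisons; every quantity is the minimum of a worst-case type-II error over the same set of feasible tests, namely $0\le T\le 1$ with $\Tr[\rho_{AB}^{\otimes n}(1-T)]\le\mu$. The outer bounds are immediate. All objectives are traces of a state against a positive operator, so they are non-negative, which gives $0\le\hat\beta_n^{\mathrm{iid}}(\mu)$. For the upper bound I would use the test $T=t\,\id$ with $t\coloneqq\max(0,1-\mu)$. Its type-I error is $1-t=\min(1,\mu)\le\mu$, so it is feasible, and its type-II error is $t$ for every alternative state. Hence all three minima are at most $\max(0,1-\mu)\le 1$.

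For the first inequality, fix any test. Every i.i.d. state $\tau_B^{\otimes n}$ is permutation invariant, so the maximum over $\mathcal{S}_{\sym}(B^n)$ is taken over a larger set than the maximum over $\{\tau_B^{\otimes n}\}$. The worst-case type-II error of each test therefore does not decrease, and minimizing over the same feasible tests gives $\hat\beta_n^{\mathrm{iid}}\le\hat\beta_n^{\sym}$. The same argument with $\mathcal{S}_{\sym}(B^n)\subseteq\mathcal{S}(B^n)$ yields $\hat\beta_n^{\sym}\le\hat\beta_n^{\mathrm{ind}}$.

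The main step is the reverse inequality $\hat\beta_n^{\mathrm{ind}}\le\hat\beta_n^{\sym}$, which I would prove by symmetrizing the test. Take an optimal or near-optimal test $T$ for $\hat\beta_n^{\sym}(\mu)$. Define $\bar T\coloneqq\frac{1}{n!}\sum_{\pi\in S_n}U_\pi T U_\pi^\dagger$, where $\pi$ permutes the pairs $(A_iB_i)$ jointly. Then $0\le\bar T\le 1$. Since $\rho_{AB}^{\otimes n}$ is invariant under these permutations, the type-I error of $\bar T$ equals that of $T$, so $\bar T$ is feasible.

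Next, for arbitrary $\tau_{B^n}$, cyclicity of the trace gives
\[
\Tr[(\rho_A^{\otimes n}\otimes\tau_{B^n})\bar T]=\Tr[(\rho_A^{\otimes n}\otimes\bar\tau_{B^n})T],
\qquad
\bar\tau_{B^n}\coloneqq\frac{1}{n!}\sum_{\pi}U_\pi^{B\dagger}\tau_{B^n}U_\pi^{B}\in\mathcal{S}_{\sym}(B^n).
\]
This uses that $U_\pi=U_\pi^A\otimes U_\pi^B$ and that $\rho_A^{\otimes n}$ is invariant. The maximum of the left-hand side over $\mathcal{S}(B^n)$ is therefore at most the maximum over $\mathcal{S}_{\sym}(B^n)$ with the test $T$, which gives $\hat\beta_n^{\mathrm{ind}}(\mu)\le\hat\beta_n^{\sym}(\mu)$.

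The minima are attained by compactness of the set of feasible tests together with continuity of the maximum; alternatively the argument can be run with infima throughout.
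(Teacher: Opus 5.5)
Your proof is correct and follows essentially the same route as the paper. You use set inclusions for the easy inequalities, and for $\hat\beta_n^{\mathrm{ind}}\le\hat\beta_n^{\sym}$ you symmetrize the test while twirling $\tau_{B^n}$, merging into a single identity what the paper does in two steps (first restricting to permutation-invariant tests, then replacing the maximum over $\mathcal{S}(B^n)$ by that over $\mathcal{S}_{\sym}(B^n)$). Your trivial test $T=\max(0,1-\mu)\,\id$ is the right one for the upper bound: the test stated in the paper ($\mu\,\id$ for $\mu\le 1$, $\id$ for $\mu>1$) appears to be carried over from the type-I version of the lemma and does not yield $\max(0,1-\mu)$.
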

The proof of this lemma is straightforward. We use the same proof technique that was used in an analogous proof in~\cite{burri2025doubly}.
\begin{proof}
The first inequality holds because any feasible $T^n_{A^nB^n}$ is such that $T^n_{A^nB^n}\leq 1$, which implies that $(1-T^n_{A^nB^n})\geq 0$. 

The second inequality follows from $\{\tau_B^{\otimes n}\}_{\tau_B\in\mathcal{S}(B)} \subseteq \mathcal{S}_{\sym} (B^n)$.

The inequality $\hat{\beta}_{n}^{\sym}(\mu)
\leq \hat{\beta}_{n}^{\mathrm{ind}}(\mu)$ follows from $\mathcal{S}_{\sym} (B^n)\subseteq \mathcal{S}(B^n)$.

The second last inequality follows from choosing the test $T^n_{A^nB^n}\coloneqq \mu 1$ if $\mu\in [0,1]$, and $T^n_{A^nB^n}\coloneqq 1$ if $\mu\in (1,\infty)$. 

The last inequality is trivial.

It remains to prove that $\hat{\beta}_{n}^{\sym}(\mu)
\geq \hat{\beta}_{n}^{\mathrm{ind}}(\mu)$.

Let $\mu\in [0,\infty)$. 
Let $T^n_{A^nB^n}\in \mathcal{L}(A^nB^n)$ be in the feasible set of the optimization problem that defines $\hat{\beta}_{n}^{\sym}(\mu)$. 
Let $\hat{T}^n_{A^nB^n}\coloneqq \frac{1}{|S_n|}\sum_{\pi \in S_n}U(\pi)_{A^n}\otimes U(\pi)_{B^n} T^n_{A^nB^n}U(\pi)_{A^n}^\dagger\otimes U(\pi)_{B^n}^\dagger$. 
Then, for any $\tau_{B^n} \in \mathcal{S}_{\sym}(B^{\otimes n})$
\begin{align}
\Tr[\rho_A^{\otimes n}\otimes \tau_{B^n}T^n_{A^nB^n}]
&=\frac{1}{|S_n|}\sum_{\pi \in S_n}\Tr[(U(\pi)_{A^n}^\dagger \rho_A^{\otimes n}U(\pi)_{A^n})\otimes ( U(\pi)_{B^n}^\dagger\tau_{B^n}U(\pi)_{B^n})T^n_{A^nB^n}]
\\
&=\frac{1}{|S_n|}\sum_{\pi \in S_n}\Tr[\rho_A^{\otimes n}\otimes\tau_{B^n} U(\pi)_{A^n}\otimes U(\pi)_{B^n} T^n_{A^nB^n}U(\pi)_{A^n}^\dagger\otimes U(\pi)_{B^n}^\dagger]
\\
&=\Tr[\rho_A^{\otimes n}\otimes \tau_{B^n}\hat{T}^n_{A^nB^n}].
\end{align}
Hence, 
\begin{align}
\max_{\tau_{B^n} \in \mathcal{S}_{\sym}(B^{\otimes n})}
\Tr[\rho_A^{\otimes n}\otimes \tau_{B^n}T^n_{A^nB^n}] 
=\max_{\tau_{B^n} \in \mathcal{S}_{\sym}(B^{\otimes n})}
\Tr[\rho_A^{\otimes n}\otimes \tau_{B^n}\hat{T}^n_{A^nB^n}] .
\end{align}
Since $\rho_{AB}^{\otimes n}\in \mathcal{S}_{\sym}((AB)^{\otimes n})$, we have 
$\Tr[\rho_{AB}^{\otimes n}(1-T^n_{A^nB^n})]=\Tr[\rho_{AB}^{\otimes n}(1-\hat{T}^n_{A^nB^n})]$.
Since $\hat{T}^n_{A^nB^n}$ is permutation invariant, 
it follows that
\begin{align}\label{eq:a-t-sym}
\hat{\beta}_{n}^{\sym}(\mu)
&=\min_{\substack{T^n_{A^nB^n}\in \mathcal{L}_{\sym}((AB)^{\otimes n}) : \\ 0\leq T^n_{A^nB^n}\leq 1}}
\{\max_{ \tau_{B^n} \in \mathcal{S}_{\sym}(B^{\otimes n})}
\Tr[\rho_A^{\otimes n}\otimes \tau_{B^n}T^n_{A^nB^n}]:
\Tr[\rho_{AB}^{\otimes n}(1-T^n_{A^nB^n})] \leq \mu \}.
\end{align}
Let $\widetilde{T}^n_{A^nB^n}\in \mathcal{L}_{\sym}((AB)^{\otimes n})$ be positive semidefinite.
Then, for all 
$\tau_{B^n} \in \mathcal{S}(B^n)$
\begin{align}
\Tr[\rho_A^{\otimes n}\otimes \tau_{B^n} \widetilde{T}^n_{A^nB^n}]
&=\frac{1}{|S_n|}\sum_{\pi\in S_n}
\Tr[\rho_A^{\otimes n}\otimes \tau_{B^n} (U(\pi)_{A^n}\otimes U(\pi)_{B^n} \widetilde{T}^n_{A^nB^n}  U(\pi)_{A^n}^\dagger\otimes U(\pi)_{B^n}^\dagger)]
\\
&=\frac{1}{|S_n|}\sum_{\pi\in S_n}
\Tr[\underbrace{(U(\pi)_{A^n}^\dagger\rho_A^{\otimes n} U(\pi)_{A^n})}_{=\rho_A^{\otimes n}} \otimes (U(\pi)_{B^n}^\dagger\tau_{B^n} U(\pi)_{B^n}) \widetilde{T}^n_{A^nB^n}]
\\
&=
\Tr[\rho_A^{\otimes n}\otimes \underbrace{\frac{1}{|S_n|}\sum_{\pi\in S_n}(U(\pi)_{B^n}^\dagger\tau_{B^n} U(\pi)_{B^n})}_{\in \mathcal{S}_{\sym}(B^{\otimes n})} \widetilde{T}^n_{A^nB^n}].
\end{align} 
Hence,
\begin{equation}\label{eq:max-symt}
\max_{\tau_{B^n} \in \mathcal{S}(B^n)}
\Tr[\rho_A^{\otimes n}\otimes \tau_{B^n} \widetilde{T}^n_{A^nB^n}]
=\max_{ \tau_{B^n} \in \mathcal{S}_{\sym}(B^{\otimes n})}
\Tr[\rho_A^{\otimes n}\otimes \tau_{B^n} \widetilde{T}^n_{A^nB^n}].
\end{equation}
We conclude that 
\begin{align}
\hat{\beta}_{n}^{\mathrm{ind}}(\mu)
&\leq \min_{\substack{T^n_{A^nB^n}\in \mathcal{L}_{\sym}((AB)^{\otimes n}): \\ 0\leq T^n_{A^nB^n}\leq 1}}
\{\max_{ \tau_{B^n} \in \mathcal{S}(B^n)}
\Tr[\rho_A^{\otimes n}\otimes \tau_{B^n}T^n_{A^nB^n}]:
\Tr[\rho_{AB}^{\otimes n}(1-T^n_{A^nB^n})] \leq \mu \}
\\
&= \min_{\substack{T^n_{A^nB^n}\in \mathcal{L}_{\sym}((AB)^{\otimes n}): \\ 0\leq T^n_{A^nB^n}\leq 1}}
\{\max_{\tau_{B^n} \in \mathcal{S}_{\sym}(B^{\otimes n})}
\Tr[\rho_A^{\otimes n}\otimes \tau_{B^n}T^n_{A^nB^n}]:
\Tr[\rho_{AB}^{\otimes n}(1-T^n_{A^nB^n})] \leq \mu \}
\label{eq:alpha-proof2}\\
&= \hat{\beta}_{n}^{\sym}(\mu).
\label{eq:alpha-proof3}
\end{align}
\eqref{eq:alpha-proof2} follows from~\eqref{eq:max-symt}.
\eqref{eq:alpha-proof3} follows from~\eqref{eq:a-t-sym}.
\end{proof}

The proof of Theorem~\ref{thm:singly} is divided into two parts: a proof of achievability and a proof of optimality. 
Their combination with the Lemma~\ref{lem:order} implies the claim.

\subsubsection{Proof of achievability}\label{sec:singly_achievability}
Below, we prove that for any $R\in (0,\infty)$
\begin{align}\label{eq:singly-ach}
\liminf_{n\rightarrow\infty}-\frac{1}{n}\log \hat{\beta}_{n}^{\sym}(e^{-nR})
\geq \sup_{s\in (0,1)}\frac{1-s}{s}(L_s^{\uparrow\downarrow}(A:B)_\rho - R).
\end{align}
\begin{proof}
Let $R\in (0,\infty)$ and $s\in (0,1)$ be arbitrary but fixed. For all $n\in \mathbb{N}_{>0}$, we define
\begin{align}
\lambda_n\coloneqq -\frac{1}{s} 
\left(\ nR-sD_{1-s}(\rho_{AB}^{\otimes n}\| \rho_A^{\otimes n} \otimes \omega_{B^n}^n )\right)
\label{eq:singly-lambda}
\end{align}
and the test
$T^n_{A^nB^n}\coloneqq \{\rho_{AB}^{\otimes n}\leq e^{\lambda_n}\rho_A^{\otimes n}\otimes \omega_{B^n}^n\}$.
For this test holds
\begin{subequations}\label{eq:singly-feasible}
\begin{align}
\Tr[\rho_{AB}^{\otimes n} T^n_{A^nB^n}]
&= \Tr[\rho_{AB}^{\otimes n}\{\rho_{AB}^{\otimes n}\leq e^{\lambda_n}\rho_A^{\otimes n}\otimes\omega_{B^n}^n \} ] 
\label{eq:ach-b0}\\
&\leq \Tr[(\rho_{AB}^{\otimes n})^{1-s} (e^{\lambda_n} \rho_A^{\otimes n}\otimes\omega_{B^n}^n )^{s}]
\label{eq:ach-b1}\\
&=e^{s\lambda_n} 
\exp\left(-sD_{1-s}(\rho_{AB}^{\otimes n}\| \rho_A^{\otimes n}\otimes\omega_{B^n}^n )\right)
\\
&=e^{-nR}.
\label{eq:ach-b3}
\end{align}
\end{subequations}
\eqref{eq:ach-b1} follows from \cite[Eq.~(2.2)]{hayashi_2016-1}. 
\eqref{eq:ach-b3} follows from~\eqref{eq:singly-lambda}.
Furthermore, 
\begin{subequations}\label{eq:singly-ach-one}
\begin{align}
\sup_{\tau_{B^n}\in \mathcal{S}_{\sym}(B^n)}
\Tr[\rho_A^{\otimes n}\otimes \tau_{B^n} (1-T^n_{A^nB^n})]
&\leq g_{n,d_B}\Tr[\rho_A^{\otimes n}\otimes \omega_{B^n}^n (1-T^n_{A^nB^n})]
\label{eq:omega-bound1}\\
&=g_{n,d_B}\Tr[\rho_A^{\otimes n}\otimes \omega_{B^n}^n
\{e^{-\lambda_n}\rho_{AB}^{\otimes n}>\rho_A^{\otimes n}\otimes \omega_{B^n}^n\}]\\
&\leq g_{n,d_B}\Tr[(e^{-\lambda_n}\rho_{AB}^{\otimes n})^{1-s} (\rho_A^{\otimes n}\otimes \omega_{B^n}^n)^{s}]
\label{eq:singly-alpha-lambda}\\
&=g_{n,d_B}e^{(s-1)\lambda_n} 
\exp\left(-sD_{1-s}(\rho_{AB}^{\otimes n} \| \rho_A^{\otimes n}\otimes \omega_{B^n}^n)\right)\\
&=g_{n,d_B}\exp\left(
-D_{1-s}(\rho_{AB}^{\otimes n}\| \rho_A^{\otimes n} \otimes \omega_{B^n}^n) +\frac{1-s}{s} nR\right).
\label{eq:alpha-qn}
\end{align}
\end{subequations}
\eqref{eq:omega-bound1} holds because $\tau_{B^n}\leq g_{n,d_B}\omega_{B^n}^n$ for all $\tau_{B^n}\in \mathcal{S}_{\sym}(B^n)$~\cite{christandl2009postselection}. 
\eqref{eq:singly-alpha-lambda} follows from~\cite[Eq.~(2.2)]{hayashi_2016-1} and
\eqref{eq:alpha-qn} follows from~\eqref{eq:singly-lambda}. 
We conclude that 
\begin{subequations}\label{eq:ach-a}
\begin{align}
\liminf_{n\rightarrow\infty} -\frac{1}{n}\log \hat{\beta}_{n}^{\sym}(e^{-nR})
&\geq \liminf_{n\rightarrow\infty} -\frac{1}{n}\log \sup_{\tau_{B^n}\in \mathcal{S}_{\sym}(B^n)}\Tr[\rho_A^{\otimes n}\otimes \tau_{B^n}(1-T^n_{A^nB^n})]
\label{eq:ach-a0}\\
&\geq \liminf_{n\rightarrow\infty} \frac{1}{n} D_{1-s}(\rho_{AB}^{\otimes n} \| \rho_A^{\otimes n}\otimes \omega_{B^n}^n) -\frac{1-s}{s} R
\label{eq:ach-a1}\\
&= I_{1-s}^{\uparrow\downarrow}(A:B)_\rho -\frac{1-s}{s} R
\label{eq:ach-a2}\\
&= \frac{1-s}{s}(L_s^{\uparrow\downarrow}(A:B)_\rho -R).
\label{eq:ach-a3}
\end{align}
\end{subequations}
\eqref{eq:ach-a0} follows from~\eqref{eq:singly-feasible}. 
\eqref{eq:ach-a1} follows from~\eqref{eq:singly-ach-one} and \cite[Proposition~1~(b)]{burri2025doubly}.
\eqref{eq:ach-a2} follows from \cite[Proposition~4]{burri2025doubly}. 
Since $s\in (0,1)$ can be chosen arbitrarily, the assertion in~\eqref{eq:singly-ach} follows from~\eqref{eq:ach-a}.
\end{proof}

\subsubsection{Proof of optimality}\label{sec:singly_converse}
Below, we prove that for any $R\in (0,\infty)$
\begin{align}
\limsup_{n\rightarrow\infty}-\frac{1}{n}\log \hat{\beta}_{n}^{\mathrm{iid}}(e^{-nR})
\leq \sup_{s\in (0,1)}\frac{1-s}{s}(L_s^{\uparrow\downarrow}(A:B)_\rho - R).
\end{align}

\begin{proof}
Let $\tau_B\in \mathcal{S}(B)$ be arbitrary but fixed. 
We have
\begin{align}
    &\limsup_{n\rightarrow\infty}-\frac{1}{n}\log \hat{\beta}_{n}^{\mathrm{iid}}(e^{-nR}) \\
    &= \limsup_{n\rightarrow\infty}-\frac{1}{n}\log \inf_{0\leq T^n_{A^nB^n} \leq 1} 
    \{\sup_{\tilde{\tau}_B\in \mathcal{S}(B)}\Tr[\rho_A^{\otimes n}\otimes \tilde{\tau}_B^{\otimes n} (1-T^n_{A^nB^n})]:
    \alpha_n(T_{A^nB^n}^n)\leq e^{-nR}\} 
    \label{eq:hoeffding0}\\
    &\leq \limsup_{n\rightarrow\infty}-\frac{1}{n}\log \inf_{0\leq T^n_{A^nB^n} \leq 1} 
    \{\Tr[\rho_A^{\otimes n}\otimes \tau_B^{\otimes n} (1-T^n_{A^nB^n})]:  \alpha_n(T_{A^nB^n}^n)\leq e^{-nR}\} 
    \label{eq:hoeffding1}\\
    &= \sup_{t\in(0,1)} \frac{-tR-\log \Tr[\rho_A^{1-t}\otimes\tau_B^{1-t}\rho_{AB}^{t}]}{1-t}
    \label{eq:hoeffding2}\\
    &=\sup_{s\in(0,1)} \frac{1-s}{s} (D_s(\rho_A \otimes \tau_B \| \rho_{AB}) - R) \, .
    \label{eq:hoeffding3}
\end{align}
\eqref{eq:hoeffding2} follows from the converse of the quantum Hoeffding Bound~\cite{Nagaoka2006}. 
\eqref{eq:hoeffding3} follows from the previous line by introducing $s\coloneqq 1-t$. 
Since $\tau_B\in \mathcal{S}(B)$ was arbitrary, we can take the infimum over all such states. Thus, 
\begin{align}
    \limsup_{n\rightarrow\infty}-\frac{1}{n}\log \hat{\beta}_{n}^{\mathrm{iid}}(e^{-nR})
    &\leq \inf_{\tau_B\in \mathcal{S}(B)} \sup_{s\in(0,1)} \frac{1-s}{s} (D_s(\rho_A \otimes \tau_B \| \rho_{AB}) - R)
    \\
    &\leq \inf_{\tau_B\in \mathcal{S}_{\ll \rho_B}(B)} \sup_{s\in(0,1)} \frac{1-s}{s} (D_s(\rho_A \otimes \tau_B \| \rho_{AB}) - R)
    \\
    &= \sup_{s\in(0,1)} \inf_{\tau_B\in \mathcal{S}_{\ll \rho_B}(B)}\frac{1-s}{s} (D_s(\rho_A \otimes \tau_B \| \rho_{AB}) - R)
    \label{eq:minimax2}\\
    &= \sup_{s\in(0,1)} \frac{1-s}{s} (L_s^{\uparrow\downarrow}(A:B)_\rho - R)
    \,.
    \label{eq:minimax3}
\end{align}
\eqref{eq:minimax2} follows from the minimax theorem in \cite[Proposition~21]{hayashi_2016-1}. The conditions for applying this minimax theorem are met because the function
\begin{equation}
    (0,1)\times \mathcal{S}_{\ll \rho_B}(B)\rightarrow\mathbb{R} ,
    (s,\tau_B)\mapsto
    (1-s)D_s(\rho_a \otimes \tau_B \| \rho_{AB}) = -\log\Tr[\rho_A^s \otimes \tau_B^s \rho_{AB}^{1-s}]
\end{equation}
is convex in $\tau_B$ (because $X \to X^s$ is operator concave for $s\in(0,1)$) and therefore also $\frac{1}{2}$-convexlike. 
Furthermore, the expression is concave in $s$ due to~\cite[Lemma~2.1]{Audenaert2012_quantum}. 
Therefore, the conditions for applying the minimax theorem are met. 
\end{proof}

\subsection{Proof of Theorem~\ref{thm:doubly}}\label{proof:doubly}

Based on the three possible choices for the alternative hypothesis, 
let us define the following functions of $\mu\in [0,\infty)$ 
for any $\rho_{AB}\in \mathcal{S}(AB),n\in \mathbb{N}_{>0}$.
\begin{align}
\hat{\beta}_{n}^{\mathrm{iid}}(\mu )
&\coloneqq \min_{\substack{T^n_{A^nB^n}\in \mathcal{L} (A^nB^n): \\ 0\leq T^n_{A^nB^n}\leq 1}}
\{\max_{\substack{\sigma_A\in \mathcal{S}(A),\\ \tau_{B} \in \mathcal{S}(B)}}
\Tr[\sigma_A^{\otimes n}\otimes \tau_{B}^{\otimes n} T^n_{A^nB^n}]:
\Tr[\rho_{AB}^{\otimes n}(1-T^n_{A^nB^n})]
\leq \mu\}
\label{eq:def-beta-doubly1}
\\
\hat{\beta}_{n}^{\sym}(\mu )
&\coloneqq \min_{\substack{T^n_{A^nB^n}\in \mathcal{L} (A^nB^n): \\ 0\leq T^n_{A^nB^n}\leq 1}}
\{\max_{\substack{\sigma_{A^n}\in \mathcal{S}_{\sym}(A^n),\\ \tau_{B^n} \in \mathcal{S}_{\sym}(B^n)}}
\Tr[\sigma_{A^n}\otimes \tau_{B^n} T^n_{A^nB^n}]:
\Tr[\rho_{AB}^{\otimes n}(1-T^n_{A^nB^n})]
\leq \mu\}
\label{eq:def-beta-doubly2}
\\
\hat{\beta}_{n}^{\mathrm{sym,A}}(\mu )
&\coloneqq \min_{\substack{T^n_{A^nB^n}\in \mathcal{L} (A^nB^n): \\ 0\leq T^n_{A^nB^n}\leq 1}}
\{\max_{\substack{\sigma_{A^n}\in \mathcal{S}_{\sym}(A^n),\\ \tau_{B^n} \in \mathcal{S}(B^n)}}
\Tr[\sigma_{A^n}\otimes \tau_{B^n} T^n_{A^nB^n}]:
\Tr[\rho_{AB}^{\otimes n}(1-T^n_{A^nB^n})]
\leq \mu\}
\label{eq:def-beta-doubly3}
\end{align} 

The following lemma describes their natural ordering. 
\begin{lemma}[(Minimum type-I errors)]\label{lem:order_doubly}
Let $\rho_{AB}\in \mathcal{S}(AB),n\in \mathbb{N}_{>0}$. Then, for all $\mu\in [0,\infty)$
\begin{align}
    0&\leq \hat{\beta}_{n}^{\mathrm{iid}} (\mu)
\leq \hat{\beta}_{n}^{\sym}(\mu)
= \hat{\beta}_{n}^{\mathrm{sym,A}}(\mu)
\leq \max(0,1-\mu) \leq 1\,.
\end{align}
\end{lemma}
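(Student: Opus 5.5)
The plan is to follow the structure of the proof of Lemma~\ref{lem:order} almost verbatim. The only change is that the symmetrization now has to act on both the $A$ and the $B$ side of the alternative hypothesis. The outer inequalities are immediate:
\begin{itemize}
\item $\hat{\beta}_n^{\mathrm{iid}}(\mu)\geq 0$ holds because every feasible test satisfies $T^n_{A^nB^n}\geq 0$, so each $\Tr[\sigma\otimes\tau\, T^n]\geq 0$.
\item $\hat{\beta}_n^{\mathrm{iid}}(\mu)\leq\hat{\beta}_n^{\sym}(\mu)$ holds because $\{\sigma_A^{\otimes n}\}\subseteq\mathcal{S}_{\sym}(A^n)$ and $\{\tau_B^{\otimes n}\}\subseteq\mathcal{S}_{\sym}(B^n)$, so the inner maximum can only grow.
\item $\hat{\beta}_n^{\sym}\leq\hat{\beta}_n^{\mathrm{sym,A}}$ holds because $\mathcal{S}_{\sym}(B^n)\subseteq\mathcal{S}(B^n)$.
\item The bound $\max(0,1-\mu)$ follows from the trivial test $T^n=(1-\mu)\mathds{1}$ for $\mu\in[0,1]$ and $T^n=0$ for $\mu>1$. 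Note that here $T^n$ multiplies the alternative states, so the type-I error is $\Tr[\rho^{\otimes n}(1-T^n)]$; with $T^n=(1-\mu)\mathds{1}$ it equals exactly $\mu$, and the type-II error is $1-\mu$.
\end{itemize}

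The substantive step is $\hat{\beta}_n^{\mathrm{sym,A}}(\mu)\leq\hat{\beta}_n^{\sym}(\mu)$. First I would show that in the definition of $\hat{\beta}_n^{\sym}$ one may restrict to permutation-invariant tests. Given a feasible $T^n$, define its twirl
\[
\hat T^n\coloneqq\frac{1}{|S_n|}\sum_{\pi\in S_n}\bigl(U(\pi)_{A^n}\otimes U(\pi)_{B^n}\bigr)\,T^n\,\bigl(U(\pi)_{A^n}\otimes U(\pi)_{B^n}\bigr)^\dagger .
\]
This twirl leaves the type-I error unchanged, since $\rho_{AB}^{\otimes n}$ is invariant under the simultaneous permutation. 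It also leaves $\Tr[\sigma_{A^n}\otimes\tau_{B^n}T^n]$ unchanged whenever both $\sigma_{A^n}$ and $\tau_{B^n}$ are permutation invariant: conjugating $\sigma\otimes\tau$ by $U(\pi)^\dagger\otimes U(\pi)^\dagger$ gives back $\sigma\otimes\tau$. Hence the optimal value of $\hat{\beta}_n^{\sym}$ is attained over $\mathcal{L}_{\sym}((AB)^{\otimes n})$, exactly as in~\eqref{eq:a-t-sym}.

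Second, for a positive semidefinite permutation-invariant $\widetilde T^n$, fixed $\sigma_{A^n}\in\mathcal{S}_{\sym}(A^n)$ and arbitrary $\tau_{B^n}\in\mathcal{S}(B^n)$, I would use the invariance of $\widetilde T^n$ to write $\Tr[\sigma_{A^n}\otimes\tau_{B^n}\widetilde T^n]$ as the average over $\pi$ of the same expression with both tensor factors conjugated by $U(\pi)^\dagger$. Since $\sigma_{A^n}$ is itself invariant, only $\tau_{B^n}$ actually moves, so the value equals $\Tr[\sigma_{A^n}\otimes\bar\tau_{B^n}\widetilde T^n]$, where $\bar\tau_{B^n}\in\mathcal{S}_{\sym}(B^n)$ is the symmetrization of $\tau_{B^n}$. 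This is the point where the hypothesis $\sigma_{A^n}\in\mathcal{S}_{\sym}(A^n)$ is essential, and it is why the lemma only relaxes the $B$-side to $\mathcal{S}(B^n)$. Consequently the maximum over $\mathcal{S}_{\sym}(A^n)\times\mathcal{S}(B^n)$ coincides with the maximum over $\mathcal{S}_{\sym}(A^n)\times\mathcal{S}_{\sym}(B^n)$ for symmetric tests.

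Combining the two steps, I bound $\hat{\beta}_n^{\mathrm{sym,A}}(\mu)$ above by its value restricted to symmetric tests, which equals the corresponding restricted value of $\hat{\beta}_n^{\sym}$, which in turn equals $\hat{\beta}_n^{\sym}(\mu)$ by the first step. The only care needed is to verify that the simultaneous twirl of the first step really fixes $\sigma\otimes\tau$ for $\sigma,\tau$ each separately symmetric; this holds because $U(\pi)_{A^n}\otimes U(\pi)_{B^n}$ factorizes across the $A^n$ and $B^n$ systems. I do not expect a genuine obstacle here. Finally, the minima in the definitions are attained by compactness and continuity, so writing $\min$ and $\max$ is justified.
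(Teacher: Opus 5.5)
Your proof is correct and follows the paper's route exactly: the paper omits the proof and defers to the twirling argument of Lemma~\ref{lem:order}, which you have reproduced with $\rho_A^{\otimes n}$ replaced by a symmetric $\sigma_{A^n}$, and you correctly note that this symmetry of $\sigma_{A^n}$ is why only the $B$-side can be relaxed to $\mathcal{S}(B^n)$. Your trivial test $(1-\mu)\mathds{1}$, and your nonnegativity argument via $T^n\geq 0$, are the ones consistent with the stated definitions; the template proof of Lemma~\ref{lem:order} uses $\mu\mathds{1}$ and $(1-T^n)\geq 0$, which fit the opposite test convention.
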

We omit a proof of this lemma since it can be proven completely analogous to Lemma~\ref{lem:order} (see also~\cite{burri2025doubly}).

The proof of Theorem~\ref{thm:doubly} is divided into two parts: a proof of achievability and a proof of optimality. Their combination with Lemma~\ref{lem:order_doubly} implies the claim.

\subsubsection{Proof of achievability}\label{sec:doubly_achievability}
Below, we prove that for any $R\in (0,\infty)$
\begin{align}\label{eq:doubly-ach}
\liminf_{n\rightarrow\infty}-\frac{1}{n}\log \hat{\beta}_{n}(e^{-nR})
\geq \sup_{s\in (0,1)}\frac{1-s}{s}(L_s^{\downarrow\downarrow}(A:B)_\rho - R).
\end{align}
\begin{proof}
Let $R\in (0,\infty)$ and $s\in (0,1)$ be arbitrary but fixed. For all $n\in \mathbb{N}_{>0}$, we define
\begin{align}
\lambda_n\coloneqq -\frac{1}{s} 
\left(\ nR-sD_{1-s}(\rho_{AB}^{\otimes n}\| \omega_{A^n}^{n} \otimes \omega_{B^n}^n )\right)
\label{eq:doubly-lambda}
\end{align}
and the test
$T^n_{A^nB^n}\coloneqq \{\rho_{AB}^{\otimes n}\leq e^{\lambda_n}\omega_{A^n}^n\otimes \omega_{B^n}^n\}$.
For this test holds
\begin{subequations}\label{eq:doubly-feasible}
\begin{align}
\Tr[\rho_{AB}^{\otimes n} T^n_{A^nB^n}]
&= \Tr[\rho_{AB}^{\otimes n}\{\rho_{AB}^{\otimes n}\leq e^{\lambda_n}\omega_{A^n}^n\otimes\omega_{B^n}^n \} ] 
\label{eq:doubly-ach-b0}\\
&\leq \Tr[(\rho_{AB}^{\otimes n})^{1-s} (e^{\lambda_n} \omega_{A^n}^n\otimes\omega_{B^n}^n )^{s}]
\label{eq:doubly-ach-b1}\\
&=e^{s\lambda_n} 
\exp\left(-sD_{1-s}(\rho_{AB}^{\otimes n}\| \omega_{A^n}^n\otimes\omega_{B^n}^n )\right)
\\
&=e^{-nR}.
\label{eq:doubly-ach-b3}
\end{align}
\end{subequations}
\eqref{eq:doubly-ach-b1} follows from \cite[Eq.~(2.2)]{hayashi_2016-1}. 
\eqref{eq:doubly-ach-b3} follows from~\eqref{eq:doubly-lambda}.
Furthermore,
\begin{subequations}\label{eq:doubly-ach-one}
\begin{align}
\sup_{\substack{\sigma_{A^n}\in \mathcal{S}_{\sym}(A^n), \\ \tau_{B^n}\in \mathcal{S}_{\sym}(B^n)}}
\Tr[\sigma_{A^n}\otimes \tau_{B^n} (1-T^n_{A^nB^n})]
&\leq g_{n,d_A}g_{n,d_B}\Tr[\omega_{A^n}^n\otimes \omega_{B^n}^n (1-T^n_{A^nB^n})]
\label{eq:omega-bound2}\\
&=g_{n,d_A}g_{n,d_B}\Tr[\omega_{A^n}^n\otimes \omega_{B^n}^n
\{e^{-\lambda_n}\rho_{AB}^{\otimes n}>\omega_{A^n}^n\otimes \omega_{B^n}^n\}]\\
&\leq g_{n,d_A}g_{n,d_B}\Tr[(e^{-\lambda_n}\rho_{AB}^{\otimes n})^{1-s} (\omega_{A^n}^n\otimes \omega_{B^n}^n)^{s}]
\label{eq:doubly-alpha-lambda}\\
&=g_{n,d_A}g_{n,d_B}e^{(s-1)\lambda_n} 
\exp\left(-sD_{1-s}(\rho_{AB}^{\otimes n} \| \omega_{A^n}^n\otimes \omega_{B^n}^n)\right)\\
&=g_{n,d_A}g_{n,d_B}\exp\left(
-D_{1-s}(\rho_{AB}^{\otimes n}\| \omega_{A^n}^n \otimes \omega_{B^n}^n) +\frac{1-s}{s} nR\right).
\label{eq:doubly-alpha-qn}
\end{align}
\end{subequations}
\eqref{eq:omega-bound2} holds because $\sigma_{A^n}\leq g_{n,d_A}\omega_{A^n}^n$ for all $\sigma_{A^n}\in \mathcal{S}_{\sym}(A^n)$, and similarly, $\tau_{B^n}\leq g_{n,d_B}\omega_{B^n}^n$ for all $\tau_{B^n}\in \mathcal{S}_{\sym}(B^n)$~\cite{christandl2009postselection}. 
\eqref{eq:doubly-alpha-lambda} follows from~\cite[Eq.~(2.2)]{hayashi_2016-1}.
\eqref{eq:doubly-alpha-qn} follows from~\eqref{eq:doubly-lambda}. 
We conclude that 
\begin{subequations}\label{eq:doubly-ach-a}
\begin{align}
\liminf_{n\rightarrow\infty} -\frac{1}{n}\log \hat{\beta}_{n}(e^{-nR})
&\geq \liminf_{n\rightarrow\infty} -\frac{1}{n}\log 
\sup_{\substack{\sigma_{A^n}\in \mathcal{S}_{\sym}(A^n), \\ \tau_{B^n}\in \mathcal{S}_{\sym}(B^n)}}
\Tr[\sigma_{A^n}\otimes \tau_{B^n}(1-T^n_{A^nB^n})]
\label{eq:doubly-ach-a0}\\
&\geq \liminf_{n\rightarrow\infty} \frac{1}{n} D_{1-s}(\rho_{AB}^{\otimes n} \| \omega_{A^n}^n\otimes \omega_{B^n}^n) -\frac{1-s}{s} R
\label{eq:doubly-ach-a1}\\
&= I_{1-s}^{\downarrow\downarrow}(A:B)_\rho -\frac{1-s}{s} R
\label{eq:doubly-ach-a2}\\
&= \frac{1-s}{s}(L_s^{\downarrow\downarrow}(A:B)_\rho -R).
\label{eq:doubly-ach-a3}
\end{align}
\end{subequations}
\eqref{eq:doubly-ach-a0} follows from~\eqref{eq:doubly-feasible}. 
\eqref{eq:doubly-ach-a1} follows from~\eqref{eq:doubly-ach-one} and \cite[Proposition~1~(b)]{burri2025doubly}.
\eqref{eq:doubly-ach-a2} follows from \cite[Proposition~4]{burri2025doubly}. 
\eqref{eq:doubly-ach-a3} follows from Theorem~\ref{thm:properties_petz}. 
Since $s\in (0,1)$ can be chosen arbitrarily, the assertion in~\eqref{eq:doubly-ach} follows from~\eqref{eq:doubly-ach-a}.
\end{proof}

\subsubsection{Proof of optimality}
Below, we prove that for any $R\in (0,R^L_{1/2}) \cup (T(A:B)_\rho,\infty)$ 

\begin{align}\label{eq:doubly-optimality}
\limsup_{n\rightarrow\infty}-\frac{1}{n}\log \hat{\beta}_{n}(e^{-nR})
\leq \sup_{s\in (0,1)}\frac{1-s}{s}(L_s^{\downarrow\downarrow}(A:B)_\rho - R).
\end{align}

\begin{proof}
Let $\sigma_A\in \mathcal{S}(A),\tau_B\in \mathcal{S}(B)$ be arbitrary but fixed. 
We have
\begin{align}
    &\limsup_{n\rightarrow\infty}-\frac{1}{n}\log \hat{\beta}_{n}^{\mathrm{iid}}(e^{-nR}) \\
    &= \limsup_{n\rightarrow\infty}-\frac{1}{n}\log \inf_{0\leq T^n_{A^nB^n} \leq 1} 
    \{\sup_{\substack{\tilde{\sigma}_A\in \mathcal{S}(A),\\ \tilde{\tau}_B\in \mathcal{S}(B)}}\Tr[\tilde{\sigma}_A^{\otimes n}\otimes \tilde{\tau}_B^{\otimes n} (1-T^n_{A^nB^n})]:  
    \alpha_n(T_{A^nB^n}^n)\leq e^{-nR}\} 
    \\
    &\leq \limsup_{n\rightarrow\infty}-\frac{1}{n}\log \inf_{0\leq T^n_{A^nB^n} \leq 1} 
    \{\Tr[\sigma_A^{\otimes n}\otimes \tau_B^{\otimes n} (1-T^n_{A^nB^n})]:  \alpha_n(T_{A^nB^n}^n)\leq e^{-nR}\} 
    \\
    &= \sup_{t\in(0,1)} \frac{-tR-\log \Tr[\sigma_A^{1-t}\otimes\tau_B^{1-t}\rho_{AB}^{t}]}{1-t}
    \label{eq:hoeffding2_doubly}\\
    &=\sup_{s\in(0,1)} \frac{1-s}{s} (D_s(\sigma_A \otimes \tau_B \| \rho_{AB}) - R) \, .
    \label{eq:hoeffding3_doubly}
\end{align}
\eqref{eq:hoeffding2_doubly} follows from the converse of the quantum Hoeffding Bound~\cite{Nagaoka2006}. 
\eqref{eq:hoeffding3_doubly} follows from the previous line by introducing $s\coloneqq 1-t$.

Let $R\in (0,R^L_{1/2}) \cup (T(A:B)_\rho,\infty)$.

\textbf{Case 1}: $R\in (T(A:B)_\rho,\infty)$. 
We can then assume that $T(A:B)_\rho<\infty$ (otherwise the claim for case 1 is void). 
Let $(\sigma_A^\star,\tau_B^\star )\in \argmin_{(\sigma_A,\tau_B)\in \mathcal{S}(A)\times \mathcal{S}(B)} D(\sigma_A\otimes \tau_B\| \rho_{AB})$. 
Then
\begin{align}
    \sup_{s\in(0,1)} \frac{1-s}{s} (D_s(\sigma_A^\star \otimes \tau_B^\star  \| \rho_{AB}) - R) &\leq \sup_{s\in(0,1)} \frac{1-s}{s} (D(\sigma_A^\star \otimes \tau_B^\star  \| \rho_{AB}) - R)
    \label{eq:case1_1}
    \\
    &= \sup_{s\in(0,1)} \frac{1-s}{s} (T(A:B)_\rho - R)=0
    \label{eq:case1_2}\\
    &= \lim_{s\to1^-} \frac{1-s}{s}(L_s^{\downarrow \downarrow}(A:B)_\rho-R) 
    \label{eq:case1_3}\\
    &\leq \sup_{s\in(0,1)} \frac{1-s}{s} (L_s^{\downarrow \downarrow}(A:B)_\rho-R) .
\end{align}
\eqref{eq:case1_1} holds due to monotonicity in the R\'enyi order of the Petz divergence. 
\eqref{eq:case1_2} holds because $T(A:B)_\rho<R$. 
\eqref{eq:case1_3} holds due to the continuity in $s$ of $L_s^{\downarrow\downarrow}(A:B)_\rho$, and its non-negativity.

\textbf{Case 2}: $R\in(0,R^L_{1/2})$. 
We can then assume that $R^L_{1/2}>0$ (otherwise the claim for case 2 is void). 
Let us define the following functions of $s\in (0,1/2)$. 
\begin{align}
    \phi(s) 
    &\coloneqq (s-1) L_s^{\downarrow \downarrow}(A:B)_\rho \\
    g(s) 
    &\coloneqq \frac{1-s}{s}(L_s^{\downarrow\downarrow}  (A:B)_\rho -R) = \frac{1}{s}((s-1)R-\phi(s))\\
    \psi(s) 
    &\coloneqq s\phi'(s) - \phi(s)
\end{align}
Due to shown properties, $\phi$ is convex and continuously differentiable. 
So also $g$ is continuously differentiable. 
It's derivative is 
$g'(s)=\frac{1}{s^2}(R-s\phi'(s)+\phi(s)$). 
This motivates the definition of the function $\psi$. 
Convexity of $\phi$ implies that $\psi$ is monotonically increasing (see for instance \cite[Lemma~18]{burri2025doubly}). 
Therefore $g'(s)=\frac{1}{s^2}(R-\psi(s))$ is monotonically decreasing. 
This means that $g$ is concave and its extremal point is actually a maximum. 
This maximum is achieved in $(0,1/2)$ because 
$R < R^L_{1/2}= \lim_{s\to1/2^-}\psi(s)$. 

Let $\hat{s}\in(0,1/2)$ be a maximizer of $g$, or equivalently, $g'(\hat{s})=0$. 
Let
\begin{align}
    (\sigma^{(\hat{s})}_A,\tau^{(\hat{s})}_B)\in \argmin_{(\sigma_A,\tau_B)\in \mathcal{S}(A)\times \mathcal{S}(B)}D_{\hat{s}}(\sigma_A\otimes\tau_B\|\rho_{AB})
\end{align}
be the unique minimizer (see Theorem~\ref{thm:doubly}). 
Then define the following functions of $s \in(0,1)$.
\begin{align}
    \bar\phi(s) 
    &\coloneqq (s-1) D_s(\sigma^{(\hat{s})}_A \otimes\tau^{(\hat{s})}_B \|\rho_{AB})\\
    \bar g(s) 
    &\coloneqq\frac{1-s}{s}(D_s(\sigma^{(\hat{s})}_A \otimes\tau^{(\hat{s})}_B \|\rho_{AB}) -R) = \frac{1}{s}((s-1)R-\bar\phi(s))\\
    \bar\psi(s) 
    &\coloneqq s\bar\phi'(s) - \bar\phi(s)
\end{align}
As before, $\bar\phi$ is convex and continuously differentiable. This leads to $\bar \psi$ being monotonically increasing, which in turn implies that $\bar g'(s) = \frac{1}{s^2}(R-\bar\psi(r))$ is monotonically decreasing. This implies $\bar g$ is concave with maximum for some $s \in (0,1)$ when $\bar g'(s)=0$. 
By choice of $(\sigma^{(\hat{s})}_A,\tau^{(\hat{s})}_B)$, we have $$L_{\hat{s}}^{\downarrow \downarrow}(A:B)_\rho = D_{\hat{s}}(\sigma^{(\hat{s})}_A \otimes\tau^{(\hat{s})}_B \|\rho_{AB})\, .$$ Furthermore, due to Theorem \ref{thm:properties} (l), the first derivative also satisfies $$\frac{\partial}{\partial s}L_s^{\downarrow\downarrow}(A:B)_\rho|_{s=\hat{s}} 
    =\frac{\partial}{\partial s} D_s(\sigma^{(\hat{s})}_A \otimes\tau^{(\hat{s})}_B \|\rho_{AB})|_{s=\hat{s}} \, .$$
This implies $\psi(\hat{s}) = \bar \psi(\hat{s})$, which means that $g'$ and $\bar g'$ have the same null at $s=\hat{s}$. Therefore we find for the maximum
\begin{align}
    \sup_{s\in(0,1)} \frac{1-s}{s} D_s(\sigma^{(\hat{s})}_A \otimes\tau^{(\hat{s})}_B \|\rho_{AB}) - R) &=  \frac{1-\hat{s}}{\hat{s}} D_{\hat{s}}(\sigma^{(\hat{s})}_A \otimes\tau^{(\hat{s})}_B \|\rho_{AB}) - R) \\
    &= \frac{1-\hat{s}}{\hat{s}} (L_{\hat{s}}^{\downarrow\downarrow}(A:B)_\rho - R) \\
    &\leq \sup_{s\in(0,1)} \frac{1-s}{s} (L_s^{\downarrow\downarrow}(A:B)_\rho - R) \, .
\end{align}
The proof the follows from picking $(\sigma_A,\tau_B) =(\sigma^{(\hat{s})}_A ,\tau^{(\hat{s})}_B)$ in equation \eqref{eq:hoeffding3_doubly}.

\end{proof}

\subsection{Examples for Remark~\ref{rem:r12l}: $R_{1/2}^L$ vs. $R_{1/2}$}\label{app:r12}
\begin{prop}
    Let $\rho_{AB}\in \mathcal{S}(AB)$ and let 
    \begin{align}
        R_{1/2} &\coloneqq I_{1/2}^{\downarrow\downarrow}(A:B)_\rho-\frac{1}{4}\frac{\partial}{\partial s^+}I_s^{\downarrow\downarrow}(A:B)_\rho |_{s=1/2}, 
        \\
        R^L_{1/2} &\coloneqq L_{1/2}^{\downarrow\downarrow}(A:B)_\rho-\frac{1}{4}\frac{\partial}{\partial s^-}L_s^{\downarrow\downarrow}(A:B)_\rho |_{s=1/2} .
    \end{align}
    Let $p_{\max}$ be the largest eigenvalue of $\rho_A$ and let $m$ be the multiplicity of $p_{\max}$. 
    \begin{enumerate}[label=(\alph*)]
        \item If $\rho_{AB}$ is a pure state, then 
        \begin{align}
            R_{1/2}&= -\log \left(\frac{p_{\max}}{m}\right),\\
            R_{1/2}^L&=-\log \left(mp_{\max}\right).
        \end{align}
        \item If $\rho_{AB}$ is a copy-CC state (see~\cite{burri2025doubly}), then 
        \begin{align}
            R_{1/2}&= \log \left(m\right),\\
            R_{1/2}^L&=-\log \left(mp_{\max}\right).
        \end{align}
    \end{enumerate}
\end{prop}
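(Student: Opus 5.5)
The plan is to reduce both quantities to data of the doubly minimized PRMI near order $1/2$, and then use what is known about the doubly minimized PRMI for pure and copy-CC states from~\cite{burri2025doubly,burri2025minreflectedentropydoubly}.

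\emph{Step 1: a general identity.} I will show that for every $\rho_{AB}$
\begin{align}
R_{1/2}+R_{1/2}^L = I_{1/2}^{\downarrow\downarrow}(A:B)_\rho,
\qquad
R_{1/2}^L=\frac{1}{4}\frac{\partial}{\partial s^+}I_s^{\downarrow\downarrow}(A:B)_\rho\Big|_{s=1/2}.
\end{align}
Start from~\eqref{eq:prli_prmi2}, $L_s^{\downarrow\downarrow}=\frac{s}{1-s}I_{1-s}^{\downarrow\downarrow}$ for $s\in(0,1)$. The one-sided derivatives exist because $s\mapsto (s-1)I_s^{\downarrow\downarrow}$ is convex, as quoted in the proof of convexity in $\alpha$. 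For the left derivative at $s=1/2$, I use the product rule and the substitution $\beta=1-s$. As $s\to 1/2^-$ we have $\beta\to 1/2^+$, so the chain rule turns $\partial_{s^-}I_{1-s}$ into $-\partial_{\beta^+}I_\beta$. Using $\frac{d}{ds}\frac{s}{1-s}=\frac{1}{(1-s)^2}=4$ and $\frac{s}{1-s}=1$ at $s=1/2$, this gives
\begin{align}
\frac{\partial}{\partial s^-}L_s^{\downarrow\downarrow}\Big|_{1/2}=4I_{1/2}^{\downarrow\downarrow}-\frac{\partial}{\partial\beta^+}I_\beta^{\downarrow\downarrow}\Big|_{1/2}.
\end{align}
Together with $L_{1/2}^{\downarrow\downarrow}=I_{1/2}^{\downarrow\downarrow}$ from~\eqref{eq:l12-i12}, this yields the displayed identities.

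\emph{Step 2: reduce to known values.} By Step 1 it suffices to know two things in each case: the value $I_{1/2}^{\downarrow\downarrow}(A:B)_\rho$, and $R_{1/2}$ (equivalently, the right derivative of $I_s^{\downarrow\downarrow}$ at $1/2$). For $I_{1/2}^{\downarrow\downarrow}$ I will use that it equals the min-reflected entropy~\cite{burri2025minreflectedentropydoubly}.

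For a pure state, the canonical purification of $\rho_A$ is again pure. The reflected quantity is then governed by the spectrum of $\rho_A$, which should give $I_{1/2}^{\downarrow\downarrow}=2H_\infty(A)_\rho=-2\log p_{\max}$.

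For a copy-CC state $\sum_x P(x)\ketbra{x,x}$, the problem reduces via Theorem~\ref{thm:properties_petz}~(n) to the classical case. There the explicit minimization (e.g.\ via Sibson~\eqref{eq:sibson}) should give $I_{1/2}^{\downarrow\downarrow}=H_\infty(X)=-\log p_{\max}$.

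The values of $R_{1/2}$ in both cases, namely $-\log(p_{\max}/m)$ and $\log m$, I would take from the computations in~\cite{burri2025doubly}. If I need to recompute them, I would find $I_s^{\downarrow\downarrow}$ explicitly for $s$ slightly above $1/2$.

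\emph{Step 3: subtract.} Pure case: $R^L_{1/2}=-2\log p_{\max}+\log(p_{\max}/m)=-\log(mp_{\max})$. Copy-CC case: $R^L_{1/2}=-\log p_{\max}-\log m=-\log(mp_{\max})$.

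\emph{Main obstacle.} The main obstacle is the one-sided derivative of $I_s^{\downarrow\downarrow}$ at $s=1/2^+$. At $s=1/2$ the minimizer is not unique, which is exactly where the degeneracy $m$ enters. If the values of $R_{1/2}$ cannot be cited directly, I would parametrize the optimizers for $s>1/2$ in closed form (for pure and classical diagonal states the problem is classical in the eigenbasis of $\rho_A$). One then finds $I_s^{\downarrow\downarrow}=\frac{s}{s-1}\log\big(\sum_i p_i^{\gamma(s)}\big)$-type expressions, where the sum concentrates on the $m$ largest eigenvalues as $s\to 1/2^+$. Differentiating such an expression should produce the $\log m$ terms.
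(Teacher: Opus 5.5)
Your Step~1 is correct and is the identity the paper itself uses. The paper writes $R^L(s)=L_s^{\downarrow\downarrow}(A:B)_\rho-s(1-s)\frac{\dd}{\dd s}L_s^{\downarrow\downarrow}(A:B)_\rho=s^2\frac{\dd}{\dd t}I_t^{\downarrow\downarrow}(A:B)_\rho\big|_{t=1-s}$ for $s<1/2$ and lets $s\to 1/2^-$. That is your $R^L_{1/2}=\frac14\partial_{s^+}I_s^{\downarrow\downarrow}(A:B)_\rho|_{s=1/2}$, and $R_{1/2}+R^L_{1/2}=I_{1/2}^{\downarrow\downarrow}(A:B)_\rho$ is a clean general by-product.

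The gap is that you never compute the one quantity that carries the content of the proposition, the right derivative $\partial_{s^+}I_s^{\downarrow\downarrow}(A:B)_\rho|_{s=1/2}$. You instead import the values of $R_{1/2}$ from~\cite{burri2025doubly}. But those values are half of what the proposition asserts, and your Step~3 derives $R^L_{1/2}$ from them. So the argument as written establishes only the relation between the two halves, not either half. The values of $I_{1/2}^{\downarrow\downarrow}$ are also only asserted (``should give''). Your fallback, computing $I_s^{\downarrow\downarrow}$ for $s$ slightly above $1/2$, names the right task but leaves it undone.

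To close the gap, use the explicit forms valid for $s\in(1/2,1)$, as the paper's proof does:
\begin{itemize}
\item for pure states, $I_s^{\downarrow\downarrow}(A:B)_\rho=2H_{\gamma}(A)_\rho$ with $\gamma=\frac{1}{2s-1}$;
\item for copy-CC states, $I_s^{\downarrow\downarrow}(A:B)_\rho=H_{\gamma}(A)_\rho$ with $\gamma=\frac{s}{2s-1}$.
\end{itemize}
Combine these with $\frac{\dd}{\dd\gamma}H_\gamma(A)_\rho=-\frac{1}{(1-\gamma)^2}D(\sigma_A^{(\gamma)}\|\rho_A)$, where $\sigma_A^{(\gamma)}=\rho_A^\gamma/\Tr[\rho_A^\gamma]$. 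Both cases then give the same formula, $\frac{\dd}{\dd s}I_s^{\downarrow\downarrow}(A:B)_\rho=\frac{1}{(1-s)^2}D(\sigma_A^{(\gamma)}\|\rho_A)$.

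As $s\to 1/2^+$, $\gamma\to\infty$ and $\sigma_A^{(\gamma)}$ tends to the normalized projector onto the $p_{\max}$-eigenspace. Hence $D(\sigma_A^{(\gamma)}\|\rho_A)\to-\log(mp_{\max})$. By continuity of $s\mapsto I_s^{\downarrow\downarrow}(A:B)_\rho$ at $1/2$ and the mean value theorem, the right derivative at $1/2$ equals the limiting value $-4\log(mp_{\max})$.

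Your Step~1 then gives $R^L_{1/2}=-\log(mp_{\max})$ directly, in both cases. The $\gamma\to\infty$ limits of the same closed forms give $I_{1/2}^{\downarrow\downarrow}(A:B)_\rho=2H_\infty(A)_\rho$, resp.\ $H_\infty(A)_\rho$. Hence $R_{1/2}=I_{1/2}^{\downarrow\downarrow}(A:B)_\rho-R^L_{1/2}$ equals $-\log(p_{\max}/m)$, resp.\ $\log m$.

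Note that this reverses your order of deduction. $R^L_{1/2}$ is the simpler, case-independent quantity, and $R_{1/2}$ follows from it. This is how the paper proceeds.
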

\begin{proof}
    Let us define the following functions.
    \begin{align}
        R:(1/2,1)\rightarrow\mathbb{R},
        s&\mapsto I_s^{\downarrow\downarrow}(A:B)_\rho -s(1-s)\frac{\dd}{\dd s}I_s^{\downarrow\downarrow}(A:B)_\rho,
        \\
        R^L:(0,1/2)\rightarrow\mathbb{R},
        s&\mapsto L_s^{\downarrow\downarrow}(A:B)_\rho -s(1-s)\frac{\dd}{\dd s}L_s^{\downarrow\downarrow}(A:B)_\rho.
    \end{align}
    In the following, we will use that for all $\alpha\in (0,1)\cup (1,\infty)$
    \begin{align}
        \frac{\dd}{\dd \alpha}H_\alpha (A)_\rho =-\frac{1}{(1-\alpha)^2} D(\sigma_A^{(\alpha)}\| \rho_A) 
        \qquad \text{where}\qquad
        \sigma_A^{(\alpha)}\coloneqq \rho_A^\alpha / \Tr[\rho_A^\alpha]. 
    \end{align}
    \textbf{Proof of (a).} 
    We have for all $s\in (1/2,1)$
    \begin{align}
        R(s)&=2H_{\frac{1}{2s-1}}(A)_\rho +\frac{4s(1-s)}{(2s-1)^2}\frac{\dd}{\dd \alpha}H_\alpha (A)_\rho \big|_{\alpha = \frac{1}{2s-1}}
        \\
        &=2H_{\frac{1}{2s-1}}(A)_\rho +\frac{s}{s-1}D(\sigma_A^{(\alpha)}\| \rho_A)\big|_{\alpha = \frac{1}{2s-1}}.
    \end{align}
    Thus,
    \begin{align}
        R_{1/2}&=\lim_{s\rightarrow 1/2^+}R(s)
        =2H_\infty (A)_\rho -\lim_{\alpha\rightarrow\infty}D(\sigma_A^{(\alpha)}\| \rho_A)
        =-2\log (p_{\max}) +\log (mp_{\max})
        =-\log (p_{\max} /m).
    \end{align}
    We have for all $s\in (0,1/2)$
    \begin{align}
        R^L(s)&=s^2\frac{\dd}{\dd t}I_t^{\downarrow\downarrow}(A:B)_\rho|_{t=1-s}
        =s^2\frac{\dd}{\dd t}2H_{\frac{1}{2t-1}}(A)_\rho |_{t=1-s}
        \\
        &=\frac{-4s^2}{(2t-1)^2}\frac{\dd}{\dd \alpha}H_\alpha (A)_\rho\big|_{\alpha=\frac{1}{2t-1},t=1-s}
        =D(\sigma_A^{(\alpha)}\| \rho_A)\big|_{\alpha=\frac{1}{2t-1},t=1-s}
    \end{align}
    Thus,
    \begin{align}
        R_{1/2}^L &=\lim_{s\rightarrow 1/2^-}R^L(s)
        =\lim_{\alpha\rightarrow\infty}D(\sigma_A^{(\alpha)}\| \rho_A)
        =-\log(mp_{\max}) \, 
    \end{align}
\textbf{Proof of (b).} 
    We have for all $s\in (1/2,1)$
    \begin{align}
        R(s)&=H_{\frac{s}{2s-1}}(A)_\rho 
        +\frac{s(1-s)}{(2s-1)^2}\frac{\dd}{\dd \alpha}H_\alpha (A)_\rho \big|_{\alpha = \frac{s}{2s-1}}
        \\
        &=H_{\frac{s}{2s-1}}(A)_\rho -\frac{s}{1-s}D(\sigma_A^{(\alpha)}\| \rho_A).
    \end{align}
    Thus,
    \begin{align}
        R_{1/2}&=\lim_{s\rightarrow 1/2^+}R(s)
        =H_\infty (A)_\rho -\lim_{\alpha\rightarrow\infty}D(\sigma_A^{(\alpha)}\| \rho_A)
        =-\log (p_{\max}) +\log (mp_{\max})
        =\log (m).
    \end{align}
    We have for all $s\in (0,1/2)$
    \begin{align}
        R^L(s)&=s^2\frac{\dd}{\dd t}I_t^{\downarrow\downarrow}(A:B)_\rho|_{t=1-s}
        =s^2\frac{\dd}{\dd t}H_{\frac{t}{2t-1}}(A)_\rho \big|_{t=1-s}
        \\
        &=\frac{-s^2}{(2t-1)^2}\frac{\dd}{\dd \alpha}H_\alpha (A)_\rho\big|_{\alpha=\frac{t}{2t-1},t=1-s}
        =D(\sigma_A^{(\alpha)}\| \rho_A)\, ,
    \end{align}
    thus,
    \begin{align}
        R_{1/2}^L &=\lim_{s\rightarrow 1/2^-}R^L(s)
        =\lim_{\alpha\rightarrow\infty}D(\sigma_A^{(\alpha)}\| \rho_A)
        =-\log(mp_{\max})  .
    \end{align}
\end{proof}

\subsection{Proof of Corollary~\ref{cor:sanov}}\label{proof:sanov}

\begin{proof} 
Let $\varepsilon\in (0,1)$. 
The claim follows from the following chain of inequalities.
\begin{align}
    T(A:B)_\rho &=\inf_{\substack{\sigma_A\in \mathcal{S}(A),\\ \tau_B\in \mathcal{S}(B)}}D(\sigma_A\otimes \tau_B\| \rho_{AB})
    \label{eq:proof_sanov0}\\
    &=\mathrm{Sanov}_\varepsilon((\rho_{AB}^{\otimes n})_{n\in \mathbb{N}>0}\| (\{\sigma_{A}^{\otimes n}\otimes \tau_{B}^{\otimes n}\}_{\sigma_{A}\in \mathcal{S}(A),\tau_{B}\in \mathcal{S}(B)})_{n\in \mathbb{N}_{>0}})
    \label{eq:proof_sanov1}\\
    &\geq \mathrm{Sanov}_\varepsilon((\rho_{AB}^{\otimes n})_{n\in \mathbb{N}>0}\| (\{\sigma_{A^n}\otimes \tau_{B^n}\}_{\sigma_{A^n}\in \mathcal{S}_{\sym}(A^n),\tau_{B^n}\in \mathcal{S}_{\sym}(B^n)})_{n\in \mathbb{N}_{>0}})
    \label{eq:proof_sanov2}\\
    &\geq T(A:B)_\rho
    \label{eq:proof_sanov21}
\end{align}
\eqref{eq:proof_sanov2} follows from Sanov's theorem for composite iid-hypothesis testing~\cite{berta_composite,Bjelakovic2005,Mosonyi_2015,Noetzel_2014}. 
\eqref{eq:proof_sanov2} holds because $\sigma_{A}^{\otimes n}\in \mathcal{S}_{\sym}(A^n), \tau_{B}^{\otimes n}\in \mathcal{S}_{\sym}(B^n)$ for all $n\in \mathbb{N}_{>0}$. 
It remains to prove~\eqref{eq:proof_sanov21}. 
Accordingly, we consider 
$H_1^n\coloneqq \{\sigma_{A^n}\otimes \tau_{B^n}\}_{\sigma_{A^n}\in \mathcal{S}_{\sym}(A^n),\tau_{B^n}\in \mathcal{S}_{\sym}(B^n)}$ in the remainder of this proof. 

\textbf{Case 1}: $\rho_{AB}\neq \rho_A\otimes\rho_B$. 
Let $R\in (0,T(A:B)_\rho)$. 
Let $s$ be the corresponding optimizer of the right-hand side of~\eqref{eq:doubly-ach}. 
Consider the tests $T_{A^nB^n}^n$ defined as in the proof of~\eqref{eq:doubly-ach} for this $s$. Then,

\begin{align}
    &\mathrm{Sanov}_\varepsilon((\rho_{AB}^{\otimes n})_{n\in \mathbb{N}>0}\| (\{\sigma_{A^n}\otimes \tau_{B^n}\}_{\sigma_{A^n}\in \mathcal{S}_{\sym}(A^n),\tau_{B^n}\in \mathcal{S}_{\sym}(B^n)})_{n\in \mathbb{N}_{>0}})
    \label{eq:proof_sanov3}
    \\
    &\geq \liminf_{n\rightarrow\infty}-\frac{1}{n}\log \inf_{\substack{\tilde{T}_{A^nB^n}^n\in\mathcal{L}(A^nB^n):\\ 0\leq \tilde{T}_{A^nB^n}^n\leq \id}} 
    \{\Tr[\rho_{AB}^{\otimes n}\tilde{T}_{A^nB^n}^n] :
    \sup_{\substack{\sigma_{A^n}\in \mathcal{S}_{\sym}(A^n),\\ \tau_{B^n}\in \mathcal{S}_{\sym}(B^n)}}
    \Tr[\sigma_{A^n}\otimes \tau_{B^n}(1-\tilde{T}_{A^nB^n}^n)]\leq \varepsilon\}
    \label{eq:proof_sanov4}
    \\
    &\geq \liminf_{n\rightarrow\infty} -\frac{1}{n}\log \Tr[\rho_{AB}^{\otimes n}T_{A^nB^n}^n]
    \label{eq:proof_sanov5}\\
    &\geq \liminf_{n\rightarrow\infty} -\frac{1}{n}\log e^{-nR}
    =R
    \label{eq:proof_sanov6}
\end{align}
\eqref{eq:proof_sanov5} holds because the test $T_{A^nB^n}^n$ is in the feasible set, as the proof of~\eqref{eq:doubly-ach} implies that 
$\Tr[\sigma_{A^n}\otimes \tau_{B^n}(1-\tilde{T}_{A^nB^n}^n)]\rightarrow 0$ as $n\rightarrow\infty$, because the right-hand side of~\eqref{eq:doubly-ach} is strictly positive as $\rho_{AB}\neq \rho_A\otimes \rho_B$. 
\eqref{eq:proof_sanov6} follows from~\eqref{eq:doubly-feasible}.
Since $R\in (0,T(A:B)_\rho)$ was arbitrary, we can conclude that~\eqref{eq:proof_sanov21} holds.

\textbf{Case 2}: $\rho_{AB}= \rho_A\otimes\rho_B$. 
Then, $T(A:B)_\rho=0$. 
As the Sanov exponent is non-negative for any $\varepsilon\in (0,1)$, the inequality in~\eqref{eq:proof_sanov21} is trivially true.
\end{proof}

\subsection{Proof of Corollary~\ref{cor:lowrate}}\label{proof:lowrate}
\begin{proof}[Proof of (a)]
The first row of Table~\ref{tab:direct_prmi} implies that 
\begin{align}
\lim_{R\to 0^+}\lim_{n\rightarrow\infty}-\frac{1}{n}\log \hat{\alpha}_{n}(e^{-nR})
&= \sup_{R>0}\sup_{s\in (0,1)}\frac{1-s}{s}(I_s^{\uparrow\uparrow}(A:B)_\rho - R)\\
&= \sup_{s\in (0,1)}\sup_{R>0}\frac{1-s}{s}(I_s^{\uparrow\uparrow}(A:B)_\rho - R)\\
&\eqt{\eqref{eq:prli_prmi0}}\sup_{s\in (0,1)} L_{1-s}^{\uparrow\uparrow}(A:B)_\rho 
=\lim_{s\to 0} L_{1-s}^{\uparrow\uparrow}(A:B)_\rho 
=L(A:B)_\rho.
\end{align}
\end{proof}
\begin{proof}[Proof of (b), (c)]
These assertions follow analogously by using the second and third row of Table~\ref{tab:direct_prmi} instead of the first row, and by using~\eqref{eq:prli_prmi1} and \eqref{eq:prli_prmi2} instead of~\eqref{eq:prli_prmi0}, respectively.
\end{proof}
\begin{proof}[Proof of (d)]
If $H_1^n$ is defined as in (a), then the first row of Table~\ref{tab:direct_prli} implies that 
\bb 
\lim_{R\to 0^+}\lim_{n\rightarrow\infty}-\frac{1}{n}\log \hat{\beta}_{n}(e^{-nR})
&= \sup_{R>0}\sup_{s\in (0,1)}\frac{1-s}{s}(L_s^{\uparrow\uparrow}(A:B)_\rho - R)\\
&= \sup_{s\in (0,1)}\sup_{R>0}\frac{1-s}{s}(L_s^{\uparrow\uparrow}(A:B)_\rho - R)\\
&\eqt{\eqref{eq:prli_prmi0}} \sup_{s\in (0,1)}I_{1-s}^{\uparrow\uparrow}(A:B)_\rho 
=\lim_{s\to 0} I_{1-s}^{\uparrow\uparrow}(A:B)_\rho 
=I(A:B)_\rho.
\ee
If $H_1^n$ is defined as in (b), then the second row of Table~\ref{tab:direct_prli} implies the claim analogously by using~\eqref{eq:prli_prmi1} instead of~\eqref{eq:prli_prmi0}.

If $H_1^n$ is defined as in (c), then the third row of Table~\ref{tab:direct_prli} implies the claim analogously by using~\eqref{eq:prli_prmi2} instead of~\eqref{eq:prli_prmi0}.
\end{proof}

\section{Proofs for channel tumula information}

\subsection{Proof of Proposition~\ref{prop:superadditivity}}\label{proof:superadditivity}
\begin{proof}
    \begin{align}
        T(\pazocal{N}_1\otimes \pazocal{N}_2) &= \sup_{\Psi_{A_1'A_2'A_1A_2}} T(A'_1,A'_2:B_1,B_2)_{(\mathrm{Id}_{12}\otimes\pazocal{N}_1\otimes\pazocal{N}_2)(\Psi_{12})}\\
        &\geq \sup_{\Psi_{A_1'A_1}\otimes\Psi_{A_2'A_2}} T(A'_1,A'_2:B_1,B_2)_{(\mathrm{Id}_{1}\otimes\pazocal{N}_1)(\Psi_1)\otimes (\mathrm{Id}_{2}\otimes\pazocal{N}_2)(\Psi_2)}\\
        &=\sup_{\Psi_{A_1'A_1}} T(A'_1:B_1)_{(\mathrm{Id}_{1}\otimes\pazocal{N}_1)(\Psi_1)}+\sup_{\Psi_{A_2'A_2}} T(A'_2:B_2)_{(\mathrm{Id}_{2}\otimes\pazocal{N}_2)(\Psi_2)}
        \label{eq:proof_superadditive}\\
        &=T(\pazocal{N}_1)+T(\pazocal{N}_2)
    \end{align}
    In~\eqref{eq:proof_superadditive} we have leveraged the additivity of the tumula information for product states.
\end{proof}

\subsection{Proof of Theorem \ref{thm:CQ}}\label{proof:CQ}

\begin{proof}
    For a CQ-channel $\pazocal{N}$, we can write its tumula information as
    \begin{equation}
        T(\pazocal{N}) = \sup_{P_X} \min_{\sigma_A,\tau_B} D(\sigma_A\otimes\tau_B\|\sum_{x\in \mathcal{X}} P_X(x)\ketbra{x}\otimes \rho_x)
    \end{equation}
    Choosing the partial minimizer for $\tau_B$ (Theorem \ref{thm:properties})
    
    $$ \tau_B^\star  = \frac{e^{\Tr_A[\sigma_A \log \sum_{x\in \mathcal{X}} P_X(x)\ketbra{x} \otimes \rho_x]}}{\Tr[e^{\Tr_A[\sigma_A \log \sum_{x\in \mathcal{X}} P_X(x)\ketbra{x} \otimes \rho_x]}]}$$
    
    Introducing $Q_X(x) = \bra{x}\sigma_A\ket{x}$, this expression can be written as 
    
    $$
    \tau_B^\star  = \frac{e^{\sum_{x\in \mathcal{X}}Q_X(x)\log P_X(x)+\sum_{x\in \mathcal{X}} Q_X(x) \log(\rho_x)}}{\Tr[e^{\sum_{x\in \mathcal{X}}Q_X(x)\log P_X(x)+\sum_{x\in \mathcal{X}} Q_X(x) \log(\rho_x)}]} = \frac{e^{\sum_{x\in \mathcal{X}} Q_X(x) \log(\rho_x)}}{\Tr[e^{\sum_{x\in \mathcal{X}} Q_X(x) \log(\rho_x)}]} = \frac{e^L}{Z(Q_X)}
    $$
    where we introduced 
    $L\coloneqq\sum_{x\in \mathcal{X}}Q_X(x)\log\rho_x$ and 
    $Z(Q_X) \coloneqq \Tr[e^{\sum_{x\in \mathcal{X}} Q_X(x) \log(\rho_x)}]$.
    Using this expression leads to
    \begin{align}
        & D\left(\sigma_A\otimes\tau_B^\star \|\sum_{x\in \mathcal{X}} P_X(x)\ketbra{x} \otimes \rho_x\right) \\
        &= -H(B)_\sigma-H(B)_{\tau^\star} -\Tr[\sigma_A\otimes\tau^\star_B \sum_{x\in \mathcal{X}} P_X(x)\ketbra{x} \otimes \log\rho_x] \\
        &= -H(B)_\sigma-H(B)_{\tau^\star} -\sum_{x\in \mathcal{X}} Q_X(x) (\log P_X(x) + \Tr[\tau^\star_B\log\rho_x]) \\
        &= -H(B)_\sigma +\Tr[\tau_B^\star L] - \log Z(Q_X) - \sum_{x\in \mathcal{X}} Q_X(x) \log P_X(x) - \Tr[\tau_B^\star L] \\
        &= -H(B)_\sigma - \log Z(Q_X) - \sum_{x\in \mathcal{X}} Q_X(x) \log P_X(x)
    \end{align}
    Using the Gibbs expression for $\sigma_A^\star$ (Theorem \ref{thm:properties}) 
    \begin{equation}
        \sigma^*_A \sim e^{\Tr_B[\tau_B^\star  \sum_{x\in \mathcal{X}}P_X(x)\ketbra{x} \otimes \rho_x]} 
    \end{equation}
    we find that $\sigma^*$ is diagonal in the basis $(\ket{x})_{x\in \mathcal{X}}$ and we can write it as
    $$
    \sigma_A = \sum_{x\in \mathcal{X}} Q_X(x) \ketbra{x}_A \, .
    $$
    Altogether, this leads to
    \begin{equation}
        T(\pazocal{N}) = \sup_{P_X} \min_{Q_X} D(Q_X\|P_X) - \log Z(Q_X).
    \end{equation}
\end{proof}

\subsection{Proof of Proposition \ref{prop:TU_CQ}}\label{proof:TU_CQ}

\begin{proof}
    Define $L\coloneqq\sum_{x\in \mathcal{X}} Q_X(x)\log\rho_x$ and 
    $Z(Q_X) \coloneqq \Tr[e^{\sum_{x\in \mathcal{X}} Q_X(x) \log(\rho_x)}]$, then we can write
    \begin{align}
        \log Z(Q_X) = \max_\sigma \{ \Tr[\sigma L] + H(B)_\sigma \} \, .
    \end{align}
    To see this, consider $\tau = \frac{e^{L}}{Z(Q_X)}$. For any state $\sigma$ we have
    \begin{align}
        0 \leq D(\sigma\|\tau) &= \Tr[\sigma (\log \sigma -\log\tau)] \\
        &= -H(B)_\sigma - \Tr(\sigma L) + \log Z
    \end{align}
    where equality is achieved if $\sigma = \tau.$
    We then have
    \begin{equation}
        - \log Z(Q_X) = \min_\sigma \{ -\Tr[\sigma L] - H(B)_\sigma \}
    \end{equation} 
    Which leads to
    \begin{align}
        \min_{Q_X} \{ D(Q_X\|P_X) -\log Z(Q_X) \} 
        &= \min_{Q_X} \min_\sigma \left( D(Q_X\|P_X) - \Tr[\sigma \sum_{x\in \mathcal{X}} Q_X(x) \log \rho_x] - H(B)_\sigma \right)\\
        &= \min_\sigma \left( \min_{Q_X} \left( D(Q_X\|P_X) - \Tr[\sigma \sum_{x\in \mathcal{X}} Q_X(x) \log \rho_x] \right)- H(B)_\sigma \right) \, 
    \end{align}
    where we could change the order of the minimization since we have compactness and continuity. Now, define $s(x) = \Tr[\sigma \log\rho_x]$. Then for fixed $\sigma$, we have
    \begin{align}
        \min_{Q_X} \left\{ D(Q_X\|P_X) - \sum_{x\in \mathcal{X}} Q_X(x)s(x)\right\} &= \min_{Q_X}  \sum_{x\in \mathcal{X}} Q_X(x) \log\left(\frac{Q_X(x)}{P_X(x)e^{s(x)}}\right) \\
        &= \min_{Q_X} D(Q_X\|P'_X) - \log \sum_{x\in \mathcal{X}} P_X(x) e^{s(x)} \\
        &= -\log \sum_{x\in \mathcal{X}} P_X(x) e^{\Tr[\sigma \log\rho_x]}
    \end{align}
    where we introduced $P_X'(x) = \frac{P_X(x)e^{s(x)}}{\sum_{y\in \mathcal{X}} P_X(y)e^{s(y)}}$.
    Putting these steps together, we find
    \begin{align}
        T(\pazocal{N}) &= \sup_{P_X}\min_{Q_X} \{ D(Q_X\|P_X) - \log Z(Q_X) \} 
        \\ &= \sup_{P_X} \min_\sigma \left\{ -\log \sum_{x\in \mathcal{X}}P_X(x)e^{\Tr[\sigma \log \rho_x]}-H(B)_\sigma \right\} \\
        &= \sup_{P_X} \min_\sigma \left\{ -\log \sum_{x\in \mathcal{X}}P_X(x)e^{-D(\sigma\|\rho_x)-H(B)_\sigma} - H(B)_\sigma  \right\} \\
        &= \sup_{P_X} \min_\sigma -\log \sum_{x\in \mathcal{X}} P_X(x) e^{-D(\sigma\|\rho_x)}
    \end{align}
    If we compare this with Umlaut information of a CQ channel, we have
    \begin{align}
        U(\pazocal{N}) &= \sup_{P_X} -\log Z(P_X) \\
        &= \sup_{P_X} \min_\sigma \left\{ -\sum_{x\in \mathcal{X}} P_X(x)\Tr[\sigma \log \rho_x] - H(B)_\sigma \right\}\\
        &=  \sup_{P_X} \min_\sigma  \sum_{x\in \mathcal{X}} P_X(x) D(\sigma\|\rho_x) 
    \end{align}
    Both expressions are connected by Jensen's inequality, which is in general strict.
\end{proof}

\subsection{Proof of Proposition~\ref{prop:superadditivity_classical}}\label{proof:superadditivity_classical}
\begin{proof}
    \begin{align}
        T(\pazocal{W}_1\times \pazocal{W}_2) 
        &= \sup_{P_{X_1X_2}} T(X_1X_2:Y_1Y_2)_{(\pazocal{W}_1)_{Y_1|X_1} (\pazocal{W}_2)_{Y_2|X_2} P_{X_1X_2}}\\
        &\geq \sup_{P_{X_1}P_{X_2}} T(X_1X_2:Y_1Y_2)_{(\pazocal{W}_1)_{Y_1|X_1} P_{X_1}(\pazocal{W}_2)_{Y_2|X_2} P_{X_2}}\\
        &= \sup_{P_{X_1}} T(X_1:Y_1)_{(\pazocal{W}_1)_{Y_1|X_1} P_{X_1}}
        +\sup_{P_{X_2}} T(X_2:Y_2)_{(\pazocal{W}_2)_{Y_2|X_2} P_{X_2}}
        \label{eq:proof_superadditivity_classical}\\
        &=T(\pazocal{W}_1)+T(\pazocal{W}_2)
    \end{align}
    In~\eqref{eq:proof_superadditivity_classical} we have leveraged the additivity of the tumula information for classical product states, see Theorem~\ref{thm:properties_petz}~(n) and Theorem~\ref{thm:properties}~(b).
\end{proof}

\subsection{Proof of Theorem \ref{thm:tu_id_channel}}\label{proof:id}
\begin{proof}
    By Proposition~\ref{prop:TU_CQ}, we can write 
    \begin{align}
        T^\infty(I) &= \lim_{n \to \infty} \frac{1}{n} \sup_{P_{X^n}} \min_{Q_{X^n}} \left( - \log \sum_{x^n} P_{X^n}(x^n) e^{-D(Q_{X^n}\|I_{x^n})}\right) \, ,
    \end{align}
    where $I_{x^n}$ denotes the deterministic conditional distribution with $\delta_{x^n,y^n}$. To avoid the expression from diverging to infinity, $Q_{X^n}$ has to be deterministically supported on exactly one sequence. If $Q_{X^n}$ had support on two or more sequences, each divergence would be infinite, leading to a diverging term. Since $Q_{X^n}$ is deterministic for a specific sequence, the expression simplifies to
    \begin{align}
        T^\infty(I) &= \lim_{n \to \infty} \frac{1}{n} \sup_{P_{X^n}} \min_{x^n} \left( - \log P_{X^n}(x^n) \right) 
    \end{align}
    Since for every distribution $P_{X^n}$, there exists a sequence $x^n$ with $P_{X^n}(x^n)\geq \frac{1}{|\mathcal{X}|^n}$, the optimization is solved by
    \begin{align}
        T^\infty(I) &= \log|\mathcal{X}| \,.
    \end{align}
    The proof for $T(I)$ is completely analogous.
\end{proof}

\bibliography{biblio}

\end{document}

%% file: macros.tex
\usepackage{newpxtext,newpxmath}

\let\coloneqq\relax

\usepackage{amsthm}
\usepackage{amssymb}
\usepackage{amsmath}
\usepackage{bbold}
\usepackage{bbm}
\usepackage[pdftex, backref=page]{hyperref}
\usepackage{braket}
\usepackage{dsfont}
\usepackage{mathdots}
\usepackage{mathtools}
\usepackage{enumerate}
\usepackage[shortlabels]{enumitem}
\usepackage{csquotes}
\usepackage{stmaryrd}
\usepackage[cal=boondox]{mathalfa}
\usepackage{graphicx}
\usepackage{stackengine}
\usepackage{tensor}
\usepackage[dvipsnames]{xcolor}

\usepackage{tikz}
\usepackage{pgfplots}
\usetikzlibrary{shapes.geometric, shapes.misc, positioning, arrows, arrows.meta, decorations.pathreplacing, decorations.pathmorphing, patterns, angles, quotes, calc}
\usepackage{booktabs}
\usepackage{xfrac}
\usepackage{siunitx}
\usepackage{centernot}
\usepackage{comment}

\newtheorem{thm}{Theorem}
\newtheorem*{thm*}{Theorem}
\newtheorem{prop}[thm]{Proposition}
\newtheorem*{prop*}{Proposition}
\newtheorem{lemma}[thm]{Lemma}
\newtheorem*{lemma*}{Lemma}
\newtheorem{cor}[thm]{Corollary}
\newtheorem*{cor*}{Corollary}

\newtheorem*{cj*}{Conjecture}
\newtheorem{Def}[thm]{Definition}
\newtheorem*{Def*}{Definition}

\newtheorem*{question*}{Question}

\newtheorem*{problem*}{Problem}

\makeatletter
\def\thmhead@plain#1#2#3{%
  \thmname{#1}\thmnumber{\@ifnotempty{#1}{ }\@upn{#2}}%
  \thmnote{ {\the\thm@notefont#3}}}
\let\thmhead\thmhead@plain
\makeatother

\theoremstyle{definition}
\newtheorem{rem}[thm]{Remark}

\hypersetup{
    bookmarksnumbered=true, 
    breaklinks=true,
    unicode=false, 
    pdfstartview={FitH}, 
    pdfnewwindow=true, 
    colorlinks=true, 
    allcolors=MidnightBlue!70!black!70!TealBlue
}

\newcommand{\bb}{\begin{equation}\begin{aligned}\hspace{0pt}}
\newcommand{\bbb}{\begin{equation*}\begin{aligned}}
\newcommand{\ee}{\end{aligned}\end{equation}}
\newcommand{\eee}{\end{aligned}\end{equation*}}
\newcommand*{\coloneqq}{\mathrel{\vcenter{\baselineskip0.5ex \lineskiplimit0pt \hbox{\scriptsize.}\hbox{\scriptsize.}}} =}

\newcommand{\eqt}[1]{\stackrel{\mathclap{\scriptsize \mbox{#1}}}{=}}
\newcommand{\leqt}[1]{\stackrel{\mathclap{\scriptsize \mbox{#1}}}{\leq}}

\newcommand{\ketbra}[1]{\ket{#1}\!\!\bra{#1}}

\newcommand{\ketbraa}[2]{\ket{#1}\!\!\bra{#2}}

\renewcommand{\epsilon}{\varepsilon}

\newcommand{\dd}{\mathrm{d}}
\newcommand{\id}{\mathds{1}}

\newcommand{\cptp}{\mathrm{CPTP}}

\DeclareMathOperator*{\argmin}{\arg\min}
\DeclareMathOperator{\Tr}{Tr}
\DeclareMathAlphabet{\pazocal}{OMS}{zplm}{m}{n}

\DeclareMathOperator{\sym}{sym}

\stackMath

\stackMath

\makeatletter
\newcommand*\rel@kern[1]{\kern#1\dimexpr\macc@kerna}
\newcommand*\widebar[1]{%
  \begingroup
  \def\mathaccent##1##2{%
    \rel@kern{0.8}%
    \overline{\rel@kern{-0.8}\macc@nucleus\rel@kern{0.2}}%
    \rel@kern{-0.2}%
  }%
  \macc@depth\@ne
  \let\math@bgroup\@empty \let\math@egroup\macc@set@skewchar
  \mathsurround\z@ \frozen@everymath{\mathgroup\macc@group\relax}%
  \macc@set@skewchar\relax
  \let\mathaccentV\macc@nested@a
  \macc@nested@a\relax111{#1}%
  \endgroup
}

\counterwithin*{equation}{part}
\counterwithin*{thm}{part}
\counterwithin*{figure}{part}

\tikzset{meter/.append style={draw, inner sep=10, rectangle, font=\vphantom{A}, minimum width=30, line width=.8, path picture={\draw[black] ([shift={(.1,.3)}]path picture bounding box.south west) to[bend left=50] ([shift={(-.1,.3)}]path picture bounding box.south east);\draw[black,-latex] ([shift={(0,.1)}]path picture bounding box.south) -- ([shift={(.3,-.1)}]path picture bounding box.north);}}}
\tikzset{roundnode/.append style={circle, draw=black, fill=gray!20, thick, minimum size=10mm}}
\tikzset{squarenode/.style={rectangle, draw=black, fill=none, thick, minimum size=10mm}}

\definecolor{Blues5seq1}{RGB}{239,243,255}
\definecolor{Blues5seq2}{RGB}{189,215,231}
\definecolor{Blues5seq3}{RGB}{107,174,214}
\definecolor{Blues5seq4}{RGB}{49,130,189}
\definecolor{Blues5seq5}{RGB}{8,81,156}

\definecolor{Greens5seq1}{RGB}{237,248,233}
\definecolor{Greens5seq2}{RGB}{186,228,179}
\definecolor{Greens5seq3}{RGB}{116,196,118}
\definecolor{Greens5seq4}{RGB}{49,163,84}
\definecolor{Greens5seq5}{RGB}{0,109,44}

\definecolor{Reds5seq1}{RGB}{254,229,217}
\definecolor{Reds5seq2}{RGB}{252,174,145}
\definecolor{Reds5seq3}{RGB}{251,106,74}
\definecolor{Reds5seq4}{RGB}{222,45,38}
\definecolor{Reds5seq5}{RGB}{165,15,21}

\allowdisplaybreaks

\usepackage[most,breakable]{tcolorbox}
\renewenvironment{boxed}[1]%
	{\expandafter\ifstrequal\expandafter{#1}{orange}{\begin{tcolorbox}[colback=red!15,colframe=orange!15,breakable,enhanced]}{\begin{tcolorbox}[colback=Blues5seq1,colframe=Blues5seq5,breakable,enhanced]}}%
	{\end{tcolorbox}}